\newcommand{\blind}{1}
\newcommand{\ord}{\mathrm{ord}}
\theoremstyle{definition}
\newtheorem{thm}{Theorem}
\newtheorem{defi}{Definition}
\newtheorem{algo}{Algorithm}
\newtheorem{prop}{Proposition}
\newtheorem{coro}{Corollary}
\newtheorem{lem}{Lemma}
\newcommand{\e}{\mathbb{E}}
\newcommand{\p}{\mathbb{P}}
\newcommand{\hcal}{\mathcal{H}}
\newcommand{\ccal}{\mathcal{C}}
\newcommand{\scal}{\mathcal{S}}
\newcommand{\one}{\textbf{1}}
\newcommand{\td}{\tilde}
\newcommand{\bR}{\mathbb{R}}
\newcommand{\mc}{\mathrm{mc}}
\newcommand{\mbone}{{\mathbf{1}}}
\newcommand{\diag}{\text{diag}}
\newcommand{\norm}[1]{\Vert{#1}\Vert}
\newcommand{\sign}{\mathrm{sign}}
\newcommand{\lb}{\left(}
\newcommand{\rb}{\right)}
\def\ack{\section*{Acknowledgements}%
  \addtocontents{toc}{\protect\vspace{6pt}}%
  \addcontentsline{toc}{section}{Acknowledgements}%
}
\def\text#1{\mbox{\rm #1}}
\begin{document}

\def\spacingset#1{\renewcommand{\baselinestretch}%
{#1}\small\normalsize} \spacingset{1}

\if1\blind
{
  \title{\bf Hierarchical community detection by recursive partitioning}
  \author{Tianxi Li\\
    Department of Statistics, University of Virginia\\
    Lihua Lei \\
    Department of Statistics, Stanford University\\
        Sharmodeep Bhattacharyya       \\
    Department of Statistics, Oregon State University\\
    Koen Van den Berge \thanks{K. Van den Berge is a postdoctoral fellow of the Belgian American Educational Foundation (BAEF) and is supported by the Research Foundation Flanders (FWO), grant 1246220N}\\
    Department of Statistics, University of California, Berkeley and Department of Applied Mathematics, Computer Science and Statistics, Ghent University\\
        Purnamrita Sarkar \thanks{P. Sarkar was supported in part by an NSF grant (DMS-1713082)}\\
    Department of Statistics and Data Sciences, University of Texas at Austin
        Peter J. Bickel\thanks{P. Bickel is supported in part by an NSF grant (DMS-1713083)} \\
    Department of Statistics, University of California, Berkeley\\
    and\\
            Elizaveta Levina\thanks{
    E. Levina is supported in part by  NSF grants (DMS-1521551 and DMS-1916222) and an ONR grant (N000141612910)} \\
    Department of Statistics, University of Michigan}

  \maketitle
} \fi

\if0\blind
{
  \bigskip
  \bigskip
  \bigskip
  \begin{center}
    {\LARGE\bf Hierarchical community detection by recursive partitioning}
\end{center}
  \medskip
} \fi

\bigskip
\begin{abstract}
  The problem of community detection in networks is usually formulated
  as finding a single partition of the network into some 
  ``correct'' number of communities.  We argue that it is more interpretable and
  in some regimes more accurate to construct a hierarchical tree of
  communities instead.   This can be done with a simple top-down recursive
  partitioning algorithm,  starting with a single community and
  separating the nodes into two communities by spectral clustering
  repeatedly, until a stopping rule suggests there are no further
  communities.  This class of algorithms is model-free, computationally efficient, and
  requires no tuning other than selecting a stopping rule.    We show
  that there are regimes where this approach outperforms $K$-way spectral
  clustering, and propose a natural framework for analyzing  the algorithm's theoretical performance, the
  binary tree stochastic block model.   Under this model, we prove that the
  algorithm correctly recovers the entire community tree under
  relatively mild assumptions.    We apply the algorithm to a gene network based on gene co-occurrence in 1580 research papers on anemia, and identify six clusters of genes in a meaningful hierarchy.  We also illustrate the algorithm on a dataset of statistics papers.  
  
\end{abstract}

\noindent%
{\it Keywords:}  network, community detection, hierarchical clustering, recursive partitioning 
\vfill

\newpage
\spacingset{1.5} 

\section{Introduction}\label{sec:intro}
Data collected in the form of networks have become increasingly common in many fields, with interesting scientific phenomena discovered through the analysis of biological, social, ecological, and various other networks;   see   \cite{newman2010networks} for a review.   Among various network analysis tasks, community detection has been one of the most studied, due to the ubiquity of communities in different types of networks and the appealing mathematical formulations that lend themselves to analysis;  see for example reviews by \cite{fortunato2010community}, \cite{goldenberg2010survey}, and \cite{abbe2017community}.     Community detection is the task of  
clustering network nodes into groups with similar connection patterns, and in many applications, communities provide a useful and parsimonious representation of the network.   There are many statistical models for networks with communities, including the  stochastic block model \citep{holland1983stochastic} and its many variants and extensions, such as, for example,  \cite{handcock2007model, hoff2008modeling, airoldi2008mixed, karrer2011stochastic, xu2013dynamic, zhang2014detecting, matias2017statistical}.    One large class of methods focuses on fitting such models based on their likelihoods or approximations to them 
\citep{bickel2009nonparametric, mariadassou2010uncovering, celisse2012consistency, bickel2013asymptotic,amini2013pseudo};   another class of methods takes an algorithmic approach, designing algorithms, often based on spectral clustering,  that can sometimes be proven to work well under specific models  \citep{newman2004finding, newman2006modularity, rohe2011spectral, bickel2011method, zhao2012consistency, chen2012clustering, lei2014consistency, cai2015robust, chen2014statistical, amini2018semidefinite, joseph2016impact, gao2015achieving,le2017concentration, gao2016optimal}.

Most work on community detection to date has focused on finding a single $K$-way partition of the network into $K$ groups, which are sometimes allowed to overlap.   This frequently leads to a mathematical structure that allows for sophisticated analysis, but for larger $K$ these partitions tend to be unstable and not easily interpretable.  These methods also typically require the ``true'' number of clusters $K$ as input. Although various methods have been proposed to estimate $K$ \citep[e.g.][]{chen2014network, chatterjee2015matrix, wang2015likelihood, le2015estimating, li2016network}, none of them have been especially tested or studied for large $K$, and in our experience, empirically they perform poorly when $K$ is large.    Finally, a single ``true''  number of communities may not always be scientifically meaningful, since in practice different community structures can often be observed at different scales.

Communities in real networks are often hierarchically structured, and the hierarchy can be scientifically meaningful, for example, a phylogenetic tree.  A hierarchical tree of communities, with larger communities subdivided into smaller ones further down, offers a natural and very interpretable representation of communities.    It also simplifies the task of estimating $K$, since, instead of estimating a large $K$ from the entire network we only need to check  whether a particular subnetwork contains more than one community.    We can also view a hierarchy as regularizing an otherwise unwieldy model with a large number of communities, which in theory can approximate any exchangeable graph \citep{olhede2014network}, by imposing structural constraints on parameters.    We would expect that for large networks with many communities, such regularization can lead to improvements in both computational costs and theoretical guarantees.     

Hierarchical community detection methods can be generally divided into three types: estimating the hierarchy directly and all at once, typically with either a Bayesian or an optimization method;   agglomerative algorithms that merge nodes or communities recursively in a bottom-up fashion;  and partitioning algorithms which split communities recursively in a top-down fashion. The earliest work in the first category we are aware of is  \cite{kleinberg2002small}, generalized by \cite{clauset2008hierarchical} and \cite{peel2015detecting}.  These models directly incorporate a tree by modeling connection probabilities between pairs of nodes based on their relative distance on the tree.  One line of work treats the tree as a parameter and takes a Bayesian approach, e.g., \cite{clauset2008hierarchical,blundell2013bayesian}.     Bayesian inference on these models is computationally prohibitive, and thus infeasible for large networks.   Even more importantly, treating each node as a leaf involves a large number of parameters, therefore sacrificing interpretability. Agglomerative clustering algorithms for networks date back to at least \cite{clauset2004finding}, which combined  well-known Ward's hierarchical clustering \citep{ward1963hierarchical} with the Girvan-Newman modularity \citep{newman2004finding} as the objective function. The idea of modularity-based clustering was further explored by \cite{pons2005computing, reichardt2006statistical, wakita2007finding, arenas2008analysis, blondel2008fast}. Divisive algorithms were once very popular in machine learning problems such as graph partitioning and image segmentation \citep{spielman1996spectral, shi2000normalized, kannan2004clusterings}.    This class of methods appears to have first been applied to networks by \cite{girvan2002community}, who proposed an ``elbow-finding'' algorithm deleting the edge with highest betweenness centrality recursively, thereby building a top-down hierarchy. This idea was further developed by \cite{wilkinson2002finding, holme2003subnetwork, gleiser2003community}, and \cite{radicchi2004defining}.

In spite of many practical benefits and applied work on hierarchical community detection, it is hard to come by a rigorous analysis. The first such analysis of a hierarchical algorithm we are aware of was given by \cite{dasgupta2006spectral}, for a recursive bi-partitioning algorithm based on a modified version of spectral clustering. Their analysis allows for sparse networks with average degree growing poly-logarithmically in $n$, but the procedure involves multiple tuning parameters with no obvious default values.   Later on, \cite{balakrishnan2011noise} considered a top-down hierarchical clustering algorithm based on unnormalized graph Laplacian and the model of \cite{clauset2008hierarchical}, for a pairwise similarity matrix instead of a network. They did not propose a practical stopping rule, but did provide a rigorous frequentist theoretical guarantee for clustering accuracy.  However, as we will further discuss in Section \ref{sec:theory}, their analysis only works for dense networks which are rare in practice. \cite{lyzinski2017community} proposed another hierarchical model based on a mixture of random dot product graph (RDPG) models \citep{young2007random}. In contrast to \cite{balakrishnan2011noise}, they use a two-stage procedure which first detects all communities, and then applies agglomerative hierarchical clustering to build the hierarchy from the bottom up.  They proved strong consistency of their algorithm, but it hinges on perfect recovery of all communities in the first step, which leads to very strong requirements on network density.

In this paper, we consider a framework for hierarchical community detection based on recursive bi-partitioning, an algorithm similar to \cite{balakrishnan2011noise}.   The algorithm needs a partitioning method, which divides any given network into two, and a stopping rule, which decides if a given network has at least two communities;  in principle, any partitioning method and any stopping rule can be used.  The algorithm starts by splitting the entire network into two and then tests each resulting leaf with the stopping rule, until the stopping rule indicates there is nothing left to split.   We prove that the algorithm consistently recovers the entire hierarchy, including all low-level communities,  under the binary tree stochastic block model (BTSBM), a hierarchical network model with communities we propose, in the spirit of \cite{clauset2008hierarchical}. Our analysis applies to networks with average degree as low as $(\log n)^{2+\epsilon}$ for any $\epsilon > 0$, while existing results either require the degree to be polynomial in $n$, or $\log^{a}n$ for large $a$ (e.g. $a = 6$ in \cite{dasgupta2006spectral}) at the cost of numerous tuning  parameters. We also allow the number of communities $K$ to grow with $n$, which is natural for a hierarchy, at a strictly faster rate than previous work, which for the most part treats $K$ as fixed.   Even more importantly, when $K$ is too big to recover the entire tree, we can still consistently recover mega-communities at the higher levels of the hierarchy, whereas $K$-way clustering will fail.       Since the stopping rule only needs to decide whether $K > 1$ rather than estimate $K$, we can use either hypothesis tests  \citep{bickel2013hypothesis, gao2017testing, jin2019optimal}, or various methods for estimating $K$.   Importantly, the main weakness of methods for of $K$, which is underestimating $K$ when it is large, since empirically they never underestimate it so severely as to conclude $K = 1$.    Unlike previous studies of hierarchical community detection, we are able to provide theoretical guarantees for a data-driven stopping rule rather than known $K$. Finally, our procedure has better computational complexity than $K$-way partitioning methods.

The rest of the paper is organized as follows. In Section~\ref{sec:method}, we present our general recursive bi-partitioning framework, a specific recursive algorithm,  and discuss the interpretation of  the resulting hierarchical structure.    In  Section~\ref{sec:theory}, we introduce a special class of stochastic block models under which a hierarchy of communities can be naturally defined, and provide theoretical guarantees on recovering the hierarchy for that class of models.  Section~\ref{sec:sim} presents extensive simulation studies demonstrating advantages of recursive bi-partitioning for both community detection and estimating the hierarchy.     Section~\ref{sec:app} applies the proposed algorithm to a gene co-occurrence network and obtains a readily interpretable hierarchical community structure. Section~\ref{sec:discussion} concludes with discussion.   Proofs and an additional data example on a dataset of statistics papers can be found in the Appendix.   


\section{Community detection by recursive partitioning}
\label{sec:method}
\subsection{Setup and notation}
We assume an undirected network on nodes  $1, 2, \cdots, n$. The corresponding $n \times n$ symmetric adjacency matrix $A$ is defined by $A_{ij} = 1$ if and only if node $i$ and node $j$ are connected, and 0 otherwise.   We use $[n]$ to denote the integer set $\{1, 2, \cdots, n\}$. We write $I_n$ for the $n\times n$ identity matrix and $\mbone_n$  for $n\times 1$ column vector of ones, suppressing the dependence on $n$ when the context makes it clear.   For any matrix $M$, we use $\norm{M}$ to denote its spectral norm (the largest singular value of $M$), and  $\norm{M}_F$  the Frobenius matrix norm.   Community detection will output a partition of nodes into $K$ sets, $V_1 \cup V_2 \cup \cdots \cup V_K = [n]$ and $V_i \cap V_j = \emptyset$ for any $i\ne j$, with $K$ typically unknown beforehand.   

\subsection{The recursive partitioning algorithm}

In many network problems where a hierarchical relationship between communities is expected,  estimating the hierarchy accurately is just as important as finding the final  partition of the nodes.    A natural framework for producing a  hierarchy is recursive partitioning,  an old idea in clustering that has not resurfaced much in the current statistical network analysis literature \citep[e.g.][]{kannan2004clusterings, dasgupta2006spectral, balakrishnan2011noise}.  The framework is general and can be used in combination with any community detection algorithm and model selection method; we will give a few options that worked very well in our experiments.   In principle, the output can be any tree, but we focus on binary trees, as is commonly done in hierarchical clustering;  we will sometimes refer to partitioning into two communities as bi-partitioning.  

Recursive bi-partitioning does exactly what its name suggests:  
\begin{enumerate}
\item Apply a decision / model selection rule to the network to decide if it contains more than one community.   If no, stop;  if yes, split into two communities.
  \item Repeat step 1 for each of the resulting communities, and continue until no further splits are indicated.    
\end{enumerate}
This is a top-down clustering procedure which produces a binary tree, but the leaves are small communities, not necessarily single nodes.    Intuitively, as one goes down the tree, the communities become closer, so the tree distance between communities reflects their level of connection.  

Computationally, while we do have to partition multiple times, each community detection problem we have to solve is only for $K=2$, which is faster, easier and more stable than for a general $K$, and the size of networks decreases as we go down the tree and thus it becomes faster.    When $K$ is large and connectivity levels between different communities are heterogeneous, we expect recursive partitioning to outperform $K$-way clustering, which does best for small $K$ and when everything is balanced.

We call this approach hierarchical community detection (HCD). As input, it takes a network adjacency matrix $A$; an algorithm that takes an adjacency matrix $A$ as input and partitions it  into two communities, outputting their two induced submatrices, $\ccal(A) = \{A_1, A_2\}$;   and 
a stopping rule $\scal: \bR^{n\times n} \to \{0,1\}$, where $\scal(A)=1$ indicates there is no evidence $A$ has communities and we should stop, and $\scal(A)=0$ otherwise.   Its output $\hcal_{\ccal,\scal}(A) = (c, T)$ is the community label vector $c$ and the hierarchical tree of communities $T$.
The algorithm clearly depends on the choice of the partitioning algorithm $\ccal$ and the stopping rule $\scal$;  we describe a few specific options next.

\subsection{The choice of partitioning method and stopping rule}

Possibly the simplest partitioning algorithm is a simple eigenvector sign check, used in \cite{balakrishnan2011noise, gao2015achieving, le2017concentration,abbe2017entrywise}:  
\begin{algo}\label{algo:SS}
	Given the adjacency matrix $A$:
	\begin{enumerate}
		\item Compute the eigenvector $\tilde{u}_2$ corresponding to the second largest eigenvalue in magnitude of $A$.
		\item Let $\hat{c}(i) =  0$ if $\tilde{u}_{2,i} \ge 0$ and $\hat{c}(i) = 1$ otherwise. 
		\item Return label $\hat{c}$.
	\end{enumerate}
\end{algo}

A more general and effective partitioning method is regularized spectral clustering (RSC), especially for sparse networks.    Several regularized versions are available;   in this paper, we use the proposal of \cite{amini2013pseudo}, shown to improve performance of spectral clustering for sparse networks \citep{joseph2016impact, le2017concentration}. 

\begin{algo}\label{algo:RSC}
	\SetAlgoLined
	Given the adjacency matrix $A$ and regularization parameter $\tau$ (by default, we use $\tau =0.1$), do the following:
	\begin{enumerate}
		\item Compute the regularized adjacency matrix as
		$$A_{\tau} = A+\tau\frac{\bar{d}}{n}\mbone\mbone^T$$
		where $\bar{d}$ is the average degree of the network.
		\item Let $D_{\tau} = \diag(d_{\tau 1}, d_{\tau2}, \cdots, d_{\tau n})$ where $d_{\tau i} = \sum_j A_{\tau,ij}$ and calculate the regularized Laplacian 
		$$L_{\tau} = D_{\tau}^{-1/2}A_{\tau}D_{\tau}^{-1/2}.$$
		\item Compute the leading two eigenvectors of $L_{\tau}$, arrange them in a $n\times 2$ matrix $U$, and apply $K$-means algorithm to the rows, with $K=2$.
		\item Return the cluster labels from the $K$-means result.
	\end{enumerate}
\end{algo}

The simplest stopping rule is to fix the depth of the tree in advance, though that is not what we will ultimately do.   A number of recent papers focused on estimating the number of communities in a network, typically assuming that each community in the network is generated from either the Erd\"{o}s-Renyi model or the configuration model \citep{van2016random}.     The methods proposed include directly estimating rank by the USVT method of \cite{chatterjee2015matrix}, hypothesis tests of \cite{bickel2013hypothesis}, \cite{gao2017testing} and \cite{jin2019optimal}, the BIC criteria of \cite{wang2015likelihood}, the spectral methods of \cite{le2015estimating} and cross-validation methods of \cite{chen2014network, li2016network}.    The cross-validation method of \cite{li2016network} works for both unweighted and weighted networks under a low rank assumption, while the others use the block model assumption. 

Under block models,  empirically we found that the most accurate and computationally feasible stopping criterion is the non-backtracking method of \cite{le2015estimating}. Let $B_{\mathrm{nb}}$ be the non-backtracking matrix, defined by 
\begin{equation}\label{eq:NBmatrix}
B_{\mathrm{nb}} = 
\begin{pmatrix}
0 &D-I \\
-I & A \\
\end{pmatrix}.
\end{equation}

Let $\lambda_i, i \in [2n]$ be the real parts of the eigenvalues of $B_{\mathrm{nb}}$ (which may be complex).    The number of communities  is then estimated as the number of eigenvalues that satisfy  $|\lambda_i| > \norm{B_{\mathrm{nb}}}^{1/2}$. This is because if the network is generated from an SBM with $K$ communities, the largest $K$ eigenvectors of $B_{\mathrm{nb}}$ will be well separated from the radius $\norm{B_{\mathrm{nb}}}^{1/2}$ with high probability, at least in sparse networks  \citep{krzakala2013spectral, le2015estimating}. 
We approximate the norm $\norm{B_{\mathrm{nb}}}$  by $\frac{\sum_i d_i^2}{\sum_i d_i}-1$, as suggested by \cite{le2015estimating}.   For our purposes, we only need the real parts of the two leading eigenvalues, not the full spectrum.  If we want to avoid the block model assumption, the edge cross-validation (ECV) method of \cite{li2016network} can be used instead to check whether a rank 1 model is a good approximation to the subnetwork under consideration.     

The main benefit of using these estimators as stopping rules (i.e.,  checking at every step if the estimated $K$ is greater than 1) is that the tree can be of any form;  if we fixed $K$ in advance, we would have to choose in what order to do the splits in order to end up with exactly the chosen $K$.    Moreover, empirically we found the local stopping criterion is more accurate  than directly estimating $K$, especially with larger $K$.     For the rest of the paper, we will focus on two versions,  ``HCD-Sign"  which uses splitting by eigenvalue sign (Algorithm~\ref{algo:SS}), and "HCD-Spec", which uses regularized spectral clustering (Algorithm~\ref{algo:RSC}).   Any of the stopping rules discussed above can be used with either method.

\subsection{Mega-communities and a similarity measure for binary trees}\label{secsec:binarytree}

The final communities (leaves of the tree) as well as the intermediate {\em mega-communities}  can be indexed by their position on the tree.   Formally, 
each node or (mega-)community of the binary tree can be represented by a sequence of binary values  $x \in \{0,1\}^{l_x}$, where $l_x$ is the depth of the node (the root node has depth 0).   The string $x$ records the path from the root to the node, with $x_q=1$ if step $q$ of the path is along the right branch of the split and $x_q=0$ otherwise.   We define the $x$ for the root node to be an empty string.     Intuitively, the tree induces a  similarity measure between communities: two communities that are split further down the tree should be more similar to each other than two communities that are split higher up.   The similarity between two mega-communities does not depend on how they are split further down the tree, which is a desirable feature.   Note that we do not assume an underlying hierarchical community model; the tree is simply the output of the HCD algorithm.

To quantify this notion of tree similarity, we define a similarity measure between two nodes $x,x'$  on a binary tree by 
$$s(x,x') = \min\{q:x_q \ne x'_q\}.$$
For instance, for the binary tree in Figure~\ref{fig:BTSBM}, we have $s(000,001) = 3$, while $ s(000,11) =s(000,110) =s(000,111) = 1$.    Note that comparing values of $s$ is only meaningful for comparing pairs with a common tree node. So $s(000,111) < s(000,001)$ indicates that  community $000$ is closer to community $001$ than to $111$, but the comparison between $s(000,001)$ and $s(10,11)$ is not meaningful.

A natural question is whether this tree structure and the associated similarity measure tell us anything about the underlying population model.    Suppose that the network is in fact generated from the SBM.    The probability matrix $P = \e A$ under the SBM is block-constant, and  applying either HCD-Sign or HCD-Spec to $P$  will recover the correct communities and produce a binary tree.    This binary tree may not be unique; for example, for the planted partition model where all communities have equal sizes, all within-block edge probabilities are $a$ and all between-block edge probabilities are $b$. However, in many situations $P$ does correspond to a unique binary tree (up to a permutation of labels), for example, under the hierarchical model introduced in Section~\ref{sec:theory}.     For the moment, assume this is the case.  Let $c$ and $T$ be the binary string community labels and the binary tree produced by applying the HCD algorithm to $P$ in exactly the same way we previously applied it to $A$. Let  $\hat{c}$ and $\hat{T}$  be the result of applying HCD to $A$.     The estimated tree $\hat{T}$ depends on the stopping rule and may be very different in size from $T$;  however, we can always compute the tree-based similarity between nodes based on their labels.   Let 
$S_T = \left(s_T(c(i),c(j))\right)$ be the $n\times n$ matrix where $s_{T}$ is the pairwise similarities induced by $T$, and  $S_{\hat{T}} = \left(s_{\hat{T}}(\hat{c}(i),\hat{c}(j))\right)$  the  corresponding similarity matrix based on $\hat{T}$. $S_{\hat{T}} $ can be viewed as an estimate of $S_T$, and we argue that comparing $S_T$ to $S_{\hat T}$ may give a more informative measure of performance than just comparing $\hat c$ to $c$.    This is because with a large $K$ and weak signals it may be hard or impossible to estimate all communities correctly, but if the tree gets most of the mega-communities right, it is still a useful and largely correct representation of the network.

Finally, we note that an estimate of $S_T$ under the SBM can be obtained for any community detection method:  if $\tilde{c}$ are estimated community labels, we can always estimate the corresponding $\tilde{P}$ under the SBM and apply HCD to $\tilde{P}$ to obtain an estimated tree $\tilde{T}$.   However, our empirical results  in Section~\ref{sec:sim} show that applying HCD directly to the adjacency matrix $A$ to obtain $S_{\hat{T}}$ gives a better estimate of $S_T$ than the $S_{\tilde{T}}$ constructed from post-processing the estimated probability matrix produced by a $K$-way partitioning method.


\section{Theoretical properties of the HCD algorithm}\label{sec:theory}
\subsection{The binary tree stochastic block model}
We now proceed to study the properties of HCD on a class of SBMs that naturally admit a binary tree community structure.  We call this class the Binary Tree Stochastic Block Models (BTSBM), formally defined in Definition \ref{defi:BTSBM} and illustrated in Figure \ref{fig:BTSBM}.  
\begin{figure}[H]
  \centering
  \includegraphics[width = 0.6\textwidth]{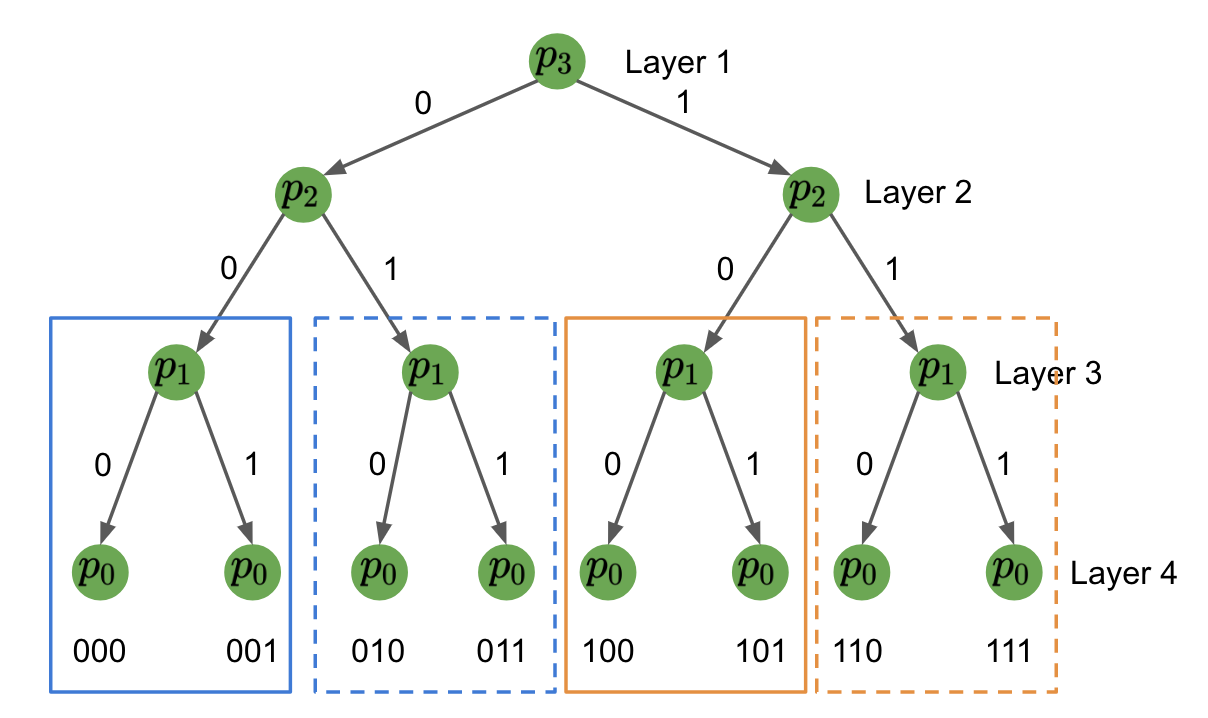}
  \caption{An $8$-cluster Binary Tree SBM.  Rectangles correspond to mega-communities.}\label{fig:BTSBM}
\end{figure}

\begin{defi}[The binary tree stochastic block model (BTSBM]\label{defi:BTSBM}
  Let $S_d:=\{0,1\}^d$ be the set of all length $d$ binary sequences and let $K =  |S_d| = 2^d$. Each binary string in $S_d$ encodes a community label and has a 1-1 mapping to an integer in $[K]$ via standard binary representation $\mathcal{I}: S_d \to [K]$. 
  For node $i \in [n]$, let $c(i) \in S_d$ be its community label, let $C_x = \{i:  c(i) = x\}$ be the set of nodes labeled with string $x \in S_d$, and let $n_x = |C_x|$.  
\begin{enumerate}
\item Let  $B\in \bR^{K\times K}$ be a matrix of probabilities defined by  $$B_{\mathcal{I}(x),\mathcal{I}(x')} = p_{\scriptscriptstyle{D(x,x')}}$$ 
   where $p_0, p_1, \ldots p_d$ are arbitrary $d + 1$ parameters in $[0, 1]$ and 
$$D(x, x') = (d + 1 - s(x, x'))I(x \not = x'),$$
for $s(x,x') = \min\{q:x_q \ne x'_q\}$ defined in Section \ref{secsec:binarytree}.
   
 \item Edges between all pairs of distinct nodes $i,
 j$ are independent Bernoulli, with
   $$P(A_{i, j} = 1) = B_{\mathcal{I}(c(i)),\mathcal{I}(c(j))} $$
 corresponding to the $n \times n$ probability matrix $P = E A$.
   \end{enumerate}
 \end{defi}

For instance, the BTSBM in Figure \ref{fig:BTSBM} corresponds to the matrix 
  \[B = \left[
      \begin{array}{cc|cc|cccc}
        p_{0} & p_{1} & p_{2} & p_{2} & p_{3} & p_{3} & p_{3} & p_{3}\\
        p_{1}& p_{0} & p_{2} & p_{2} & p_{3} & p_{3} & p_{3} & p_{3}\\
        \hline
        p_{2} & p_{2} & p_{0} & p_{1} & p_{3} & p_{3} & p_{3} & p_{3}\\
        p_{2} & p_{2} &p_{1}& p_{0} & p_{3} & p_{3} & p_{3} & p_{3}\\
        \hline
        p_{3} & p_{3} & p_{3} & p_{3} & p_{0} & p_{1} & p_{2} & p_{2}\\
        p_{3} & p_{3} & p_{3} & p_{3} &p_{1}& p_{0} & p_{2} & p_{2}\\
        p_{3} & p_{3} & p_{3} & p_{3} & p_{2} & p_{2} & p_{0} & p_{1}\\
        p_{3} & p_{3} & p_{3} & p_{3} & p_{2} & p_{2} &p_{1}& p_{0}
      \end{array}
      \right].
\]

A nice consequence of defining community labels through binary strings is that they naturally embed the communities in a binary tree.   We can think of each entry of the binary string as representing one level of the tree, with the first digit corresponding to the first split at the top of the tree, and so on.   We then define a {\em mega-community} labeled by a binary string $x\in S_q$ at any level $q$ of the tree as the set $\{i: c(i)_h = x_h, 1 \le h \le q\}$, defined on a binary tree $T$.  The mega-communities are unique up to community label permutations, and give a multi-scale view of the community structure;  for example, Figure~\ref{fig:BTSBM} shows four mega-communities in layer 3 and two mega-communities in layer 2.  

The idea of embedding connection probabilities in a tree, to the best of our knowledge, was first introduced as the  hierarchical random graph (HRG) by \cite{clauset2008hierarchical}, and extended by \cite{balakrishnan2011noise} to weighted graphs and by \cite{peel2015detecting} to general dendrograms. The BTSBM can be viewed as a hybrid of the original HRG and the SBM, maintaining parsimony by estimating only community-level parameters while imposing a natural and interpretable hierarchical structure.    It also provides us with a model that can be used to analyze recursive bi-partitioning on sparse graphs. 

\subsection{The eigenstructure of the BTSBM}\label{subsec:eigen}
 Let $Z\in \bR^{n\times K}$ be the membership matrix with the $i$-th row $Z_{i} = e_{\mathcal{I}(c(i))}$ containing $e_{j}$, the $j$-th canonical basis vector in $\bR^{K}$, where $\mathcal{I}$ is the integer given by the binary representation. Then it is straightforward to show that
\[P = \e A = ZBZ^{T} - p_{0}I.\]
The second term comes from the fact that $A_{ii} = 0$. For the rest of the theoretical analysis, we assume equal block sizes, i.e., 
\begin{equation}
  \label{eq:equal_block}
  n_{1} = n_{2} = \ldots = n_{K} = n / K = m.
\end{equation}
This assumption is stringent but standard in the literature and can be relaxed to a certain extent, as indicated in our Theorem~\ref{thm:unequal} in Appendix C.    For the BTSBM, this assumption leads to a particularly simple and elegant eigenstructure for $P$.

Given a (mega)-community label denoted by a binary string $x$, we write $x0$ and $x1$ as the binary strings obtained by appending $0$ and $1$ to $x$, respectively.    We further define $\{x+\}$ to be the set of all binary strings starting with $x$.   The following theorem gives a full characterization of the eigenstructure for the BTSBM.

\begin{thm}\label{thm:eigen2}
  Let $P$ be the $n\times n$ community connection probability matrix of the BTSBM with $K = 2^d$ and define $\tilde{P} = P+p_0I =ZBZ^T$. 
  Then the following holds:  \\
  1. (Eigenvalues) The distinct nonzero eigenvalues of $\tilde{P}$, denoted by $\lambda_{1},\lambda_{2} \cdots,  \lambda_{ d+1}$,  are given by 
\begin{equation}\label{eq:lambda}
  \lambda_{1} = m(p_0+\sum_{r=1}^{d}2^{r-1}p_r), \quad \lambda_{q+1} = m\left(p_0+\sum_{r=1}^{d - q} 2^{r-1}p_r - 2^{d - q}p_{d - q + 1}\right), ~~q = 1, \ldots, d.
\end{equation}
2. (Eigenvectors) For any $1\le q \le d$ and each $x \in S_{q-1}$, let $\nu_{x}^{q+1}$ be an $n$-dimensional vector, such that for any $i \in [n]$, 
\[\nu_{x, i}^{q+1} = \left\{
    \begin{array}{rl}
      1 & \mbox{if }c(i) \in S_d\cap \{x0+\} \, , \\
      -1 & \mbox{if }c(i) \in S_d\cap \{x1+\} \, , \\
      0 & \mbox{otherwise}   \, . 
    \end{array}
\right.\]
Then the eigenspace corresponding to eigenvalue $\lambda_{q+1}$ is spanned by $\{\nu_{x}^{q+1}: x \in S_{q-1}\}$. The eigenspace corresponding to $\lambda_1$ is spanned by $\mbone$.
\end{thm}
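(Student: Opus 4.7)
\textbf{Proof plan for Theorem \ref{thm:eigen2}.} The plan is to reduce the spectral problem for the $n \times n$ matrix $\tilde P = ZBZ^T$ to the $K \times K$ matrix $B$, exhibit a candidate Haar-type basis of $\bR^K$ of eigenvectors of $B$ indexed by the tree, verify the eigen-relation by a direct case analysis, and then lift everything back to $\bR^n$ via $Z$.

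First I would use the equal block-size assumption \eqref{eq:equal_block} to note that the columns of $Z$ are orthogonal with squared norm $m$, i.e.\ $Z^T Z = m I_K$. Consequently, whenever $Bw = \mu w$ for some $w \in \bR^K$, the vector $Zw \in \bR^n$ satisfies
\[\tilde P (Zw) = ZB Z^T Z w = m Z(Bw) = m\mu\, (Zw),\]
and $Zw \neq 0$ because $Z$ has full column rank. So it suffices to diagonalize $B$ and set $\nu_x^{q+1} = Zw_x^{q+1}$ with an appropriate $w_x^{q+1} \in \bR^K$; the eigenvalue $\mu$ of $B$ becomes the eigenvalue $m\mu$ of $\tilde P$.

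Next I would define, for each $q \in \{1,\ldots,d\}$ and each $x \in S_{q-1}$, the vector $w_x^{q+1}\in\bR^K$ taking value $+1$ on labels in $\{x0+\}\cap S_d$, value $-1$ on labels in $\{x1+\}\cap S_d$, and $0$ elsewhere; note $\nu_x^{q+1} = Z w_x^{q+1}$ is exactly the vector described in the theorem. To check $B w_x^{q+1} = \mu_{q+1} w_x^{q+1}$ with $\mu_{q+1} = \lambda_{q+1}/m$, I would fix a row label $y \in S_d$ and split into three cases according to whether (i) $y \notin \{x+\}$, (ii) $y \in \{x0+\}$, or (iii) $y \in \{x1+\}$. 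In case (i), for every $z \in \{x+\}$ the first differing digit between $y$ and $z$ equals $s(y,x)$, so $B_{\mathcal{I}(y),\mathcal{I}(z)}$ is the same constant across all of $\{x+\}$, and the $+1$ and $-1$ halves of $w_x^{q+1}$ cancel, giving row sum $0 = \mu_{q+1}\cdot 0$. In case (ii), for $z\in\{x1+\}$ one has $s(y,z) = q$, contributing $2^{d-q}p_{d-q+1}$ with sign $-1$; for $z\in\{x0+\}$ one stratifies the $2^{d-q}-1$ nonequal strings by the first position $q+r$ of disagreement with $y$, of which there are $2^{d-q-r}$ for each $r=1,\ldots,d-q$, plus the diagonal term $p_0$. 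Summing gives $p_0 + \sum_{r=1}^{d-q}2^{r-1}p_r - 2^{d-q}p_{d-q+1} = \mu_{q+1}$, matching $w_{x,y}^{q+1}=+1$. Case (iii) is symmetric and produces $-\mu_{q+1}$. A separate, easier calculation shows $B\mathbf{1}_K = \mu_1 \mathbf{1}_K$ with $\mu_1 = p_0 + \sum_{r=1}^d 2^{r-1}p_r$, from which $\lambda_1 = m\mu_1$ follows.

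Finally I would verify completeness and the stated spanning claim. A direct count yields $1 + \sum_{q=1}^d |S_{q-1}| = 1 + (1+2+\cdots+2^{d-1}) = 2^d = K$ candidate eigenvectors; to confirm they form a basis it is enough to check mutual orthogonality in $\bR^K$. Each $w_x^{q+1}$ has equal numbers of $\pm 1$ entries, hence is orthogonal to $\mathbf{1}_K$; two vectors $w_x^{q+1}$ and $w_{x'}^{q'+1}$ at different levels either have nested mega-community supports (in which case the finer one integrates to zero against the $\pm 1$ constant of the coarser one on the common support) or disjoint supports, and two vectors at the same level are either equal or supported on disjoint mega-communities. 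Applying $Z$ preserves linear independence because $Z^TZ = mI_K$, so the $\nu_x^{q+1}$ together with $\mathbf{1}_n = Z\mathbf{1}_K$ span the column space of $\tilde P$, and the eigenspace statements follow. The main obstacle is the bookkeeping in case (ii): correctly stratifying $\{x0+\}$ by the first digit of disagreement with $y$ and recognizing the resulting sum as $\mu_{q+1}$; everything else is routine.
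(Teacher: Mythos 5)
Your proposal is correct, and its core computation coincides with the paper's: the paper likewise reduces the problem to the $K\times K$ matrix $B$ using $Z^TZ = mI_K$ (see the discussion around \eqref{eq:PBZ} in Appendix A) and verifies that each tree-indexed $\pm 1$ vector is an eigenvector of $B$ by exactly the case analysis you describe --- fixing a row, splitting according to whether its label lies outside $\{x+\}$, in $\{x0+\}$, or in $\{x1+\}$, and stratifying the latter two cases by the first digit of disagreement. Where you genuinely diverge is in how completeness and the spanning claim are established. The paper first proves a separate structural result (Theorem \ref{thm:eigen}), an induction on $d$ showing $B_d = U_1^{\otimes d}\Lambda_d U_1^{\otimes d}$ with $U_1^{\otimes d}$ a rescaled Hadamard matrix; this delivers the eigenvalue formulas and, crucially, the multiplicity $2^{q-1}$ of $\lambda_{q+1}$, from which the spanning claim follows once the $\nu_x^{q+1}$ are shown to be orthogonal eigenvectors. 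You instead read the eigenvalues off the case-(ii) row sum and get completeness by counting: $1+\sum_{q=1}^d 2^{q-1} = K$ mutually orthogonal eigenvectors of a symmetric $K\times K$ matrix necessarily form an eigenbasis. Your route is more self-contained and slightly more elementary for the purposes of this theorem; the paper's detour through the Hadamard decomposition is not wasted, though, because that dense $\pm 1/\sqrt{n}$ eigenbasis is exactly what is needed later in the $\ell_\infty$ perturbation arguments (proof of Lemma \ref{lem:KConcentration}), where the sparse tree basis would give poor entrywise bounds. One shared caveat, not a gap relative to the paper: identifying the eigenspace of $\lambda_{q+1}$ with the span of the level-$q$ vectors tacitly assumes the $d+1$ values in \eqref{eq:lambda} are pairwise distinct and nonzero, an assumption both your argument and the paper's leave implicit in the general (not necessarily assortative) case.
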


It is easy to see that each $\nu_{x}^{q+1}$ corresponds to a split of the two mega-communities in layer $q$, at an internal tree node $x$. For instance, consider the colored rectangles in Figure \ref{fig:BTSBM}, which correspond to $d = 3$ and $q = 2$.  The vector $\nu_{0}^{3}$ has entry $1$ for all nodes in the (solid blue) mega-community 00, entry $-1$ for  all nodes in the  (dashed blue) mega-community $01$ and $0$ for all the other nodes, thus separating mega-communities $00$ and $01$. Similarly, $\nu_{1}^{3}$ has entry $1$ for the nodes in (solid orange) mega-community $10$, and entry $-1$ for nodes in the  (dashed orange) mega-community $11$.  The binary tree structure is thus fully characterized by the signs of eigenvectors' entries. Note that due to multiplicity of eigenvalues, the basis of the eigenspace is not unique. In Appendix~\ref{app:eigen}, we use another basis which, though less interpretable, is used in the proof of Theorem~\ref{thm:consistency-ass} and Theorem~\ref{thm:consistency-dis-ass} to obtain better theoretical guarantees.

While the previous theorem is stated for general configurations of $p_0, \cdots, p_d$, the two most natural situations where a hierarchy is meaningful are either 
{\em assortative} communities, with 
\begin{equation}
  \label{eq:assortative}
  p_{0} > p_{1} > \ldots > p_{d},
\end{equation}
or {\em dis-assortative} communities, with 
\begin{equation}
  \label{eq:disassortative}
  p_{0} < p_{1} < \ldots < p_{d}.
\end{equation}

Recall that the HCD-sign algorithm only depends on the eigenvector corresponding to the second largest eigenvalue (in magnitude).  Theorem \ref{thm:eigen2} directly implies that under either the assortative or dis-assortative setting, such eigenvalue is unique (has multiplicity $1$) with an eigenvector that yields the first split in the tree according to the signs of the corresponding eigenvector entries.  

\begin{coro}\label{cor:eigen}
Let $P$ be the $n\times n$ probability matrix of the BTSBM with $K = 2^d$ and balanced community sizes as in \eqref{eq:equal_block}. Under either \eqref{eq:assortative} or \eqref{eq:disassortative}, the second largest eigenvalue (in absolute value) of $P$ for a BTSBM is unique and given by
\[(m - 1)p_{0} + m\sum_{i=1}^{d-1}2^{i-1}p_{i} - m2^{d - 1}p_{d}\]
and the gap between it and the other eigenvalues is $\Delta_{2}= n\min\{p_d, (p_{d-1} - p_{d}) / 2\}$ in the assortative case and $\Delta_{2}= n(p_{d} - p_{d - 1}) / 2$ in the dis-assortative case. The corresponding (normalized) eigenvector is  
$$u_{2} = \frac{1}{\sqrt{n}} ( \one_{n / 2}^T,  -\one_{n / 2}^T)^T.$$
\end{coro}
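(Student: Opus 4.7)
\medskip

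\noindent\textbf{Proof proposal.} The plan is to read the corollary directly off Theorem~\ref{thm:eigen2} with a short bookkeeping argument. Since $P = \tilde P - p_0 I$, every eigenvector of $\tilde P$ is an eigenvector of $P$ with its eigenvalue shifted by $-p_0$, and the $(n-K)$-dimensional kernel of $\tilde P$ contributes an additional $P$-eigenvalue equal to $-p_0$ with multiplicity $n - K$. Writing $\mu_{q+1} := \lambda_{q+1} - p_0$ for $q = 0, 1, \ldots, d$, setting $q = 1$ in \eqref{eq:lambda} yields exactly the target formula for $\mu_2$. The $\lambda_2$-eigenspace is spanned by the single vector $\nu_\emptyset^2$ (since $|S_0| = 1$); after relabeling nodes so that $\{i : c(i) \in \{0+\}\}$ comes first, this vector is $(\mbone_{n/2}^T, -\mbone_{n/2}^T)^T$, whose normalization is precisely the claimed $u_2$.

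For the ordering and the gap, I would compute the two telescoping identities
\[
\mu_1 - \mu_2 = n p_d, \qquad \mu_{q+1} - \mu_{q+2} = m \cdot 2^{d-q}(p_{d-q} - p_{d-q+1}) \text{ for } q = 1, \ldots, d-1,
\]
each immediate from \eqref{eq:lambda}. In the assortative case all these differences are strictly positive, so $\mu_1 > \mu_2 > \cdots > \mu_{d+1}$. The only candidates that could rival $\mu_2$ in magnitude are $\mu_{d+1}$ and the kernel eigenvalue $-p_0$, and both are dominated by $\mu_2$ via summing the telescoping increments and using $p_0 > p_d$; hence $\mu_2$ is uniquely the second-largest-in-magnitude eigenvalue. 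The relevant distances from $\mu_2$ are then $|\mu_2 - \mu_1| = np_d$, $|\mu_2 - \mu_3| = (n/2)(p_{d-1} - p_d)$, and $|\mu_2 + p_0| = m(p_0 - p_d) + m \sum_{r=1}^{d-1} 2^{r-1}(p_r - p_d)$; the last quantity dominates the second by retaining only the $r = d-1$ term, so the minimum collapses to $\Delta_2 = n \min\{p_d, (p_{d-1} - p_d)/2\}$.

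The dis-assortative case is symmetric with the signs reversed. Now $\mu_{q+1} - \mu_{q+2} < 0$ while $\mu_1 - \mu_2 = np_d > 0$ still holds, giving the ordering $\mu_1 > \mu_{d+1} > \cdots > \mu_3 > \mu_2$ with $\mu_2$ typically very negative. Using the same telescoping argument together with $p_{d-1} < p_d$ and $p_0 < p_d$, one checks that $|\mu_2|$ dominates both $|\mu_{d+1}|$ and $p_0$, so $\mu_2$ is again uniquely second-largest in magnitude. The three candidate neighbor distances become $np_d$, $(n/2)(p_d - p_{d-1})$, and $|\mu_2 + p_0| = m(p_d - p_0) + m \sum_{r=1}^{d-1} 2^{r-1}(p_d - p_r) \geq (n/2)(p_d - p_{d-1})$, and the middle quantity is the smallest, yielding $\Delta_2 = n(p_d - p_{d-1})/2$.

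The genuinely delicate point, which I expect to be the main obstacle, is the eigenvalue $-p_0$ of multiplicity $n - K$: it has no analogue in Theorem~\ref{thm:eigen2}, so I must verify that it neither breaks uniqueness of the second-largest-in-magnitude eigenvalue nor shrinks $\Delta_2$ below the claimed value. Both checks reduce to the one-line sign inequalities above, each using only the monotone ordering of $p_0, \ldots, p_d$ (and implicitly $m$ large enough for the leading community-size factor to dominate the shift by $p_0$).
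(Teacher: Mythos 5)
Your proposal follows essentially the same route as the paper: read the eigenvalues off Theorem~\ref{thm:eigen2}, order them via the telescoping differences of consecutive $\lambda_{q}$'s under the assortative or dis-assortative monotonicity, and identify $u_2$ from the explicit eigenvector basis. You are in fact slightly more careful than the paper's own proof in explicitly checking the eigenvalue $-p_0$ of multiplicity $n-K$ coming from the kernel of $\tilde P$, which the paper handles only implicitly (the needed bound appears later, in the display leading to \eqref{eq:lambda2} inside the proof of Lemma~\ref{lem:KConcentration}).

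One quantitative slip: in the assortative case you justify $|\mu_2 + p_0| \ge (n/2)(p_{d-1}-p_d)$ ``by retaining only the $r=d-1$ term,'' but that term alone is $m\,2^{d-2}(p_{d-1}-p_d) = (n/4)(p_{d-1}-p_d)$, which is not enough. The correct (and still one-line) argument is the one the paper uses: bound every difference $p_0 - p_d$ and $p_r - p_d$, $r \le d-1$, from below by $p_{d-1}-p_d$, so that
\[
m(p_0-p_d) + m\sum_{r=1}^{d-1}2^{r-1}(p_r-p_d) \;\ge\; m\Bigl(1+\sum_{r=1}^{d-1}2^{r-1}\Bigr)(p_{d-1}-p_d) \;=\; \tfrac{n}{2}(p_{d-1}-p_d).
\]
The analogous full-sum bound is also what you need (and implicitly use) in the dis-assortative case. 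With that repair the argument is complete and matches the paper's.
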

In a slight abuse of notation, we still denote the $k$th eigenvalue of $P$ (instead of $\tilde{P}$) by $\lambda_k$ whenever it is clear from the context. 
%


\subsection{Consistency of HCD-sign under the BTSBM}\label{subsec:consistency}
The population binary tree $T$ defined in Section~\ref{secsec:binarytree} is unique under the BTSBM, and thus we can evaluate methods under this model on how well they estimate the population tree.   Given a community label $c$ and the corresponding balanced binary tree $T$ of depth $d$, define $\mc(T,c, q) \in [2^q]^n$ to be the community partition of all nodes into the mega-communities at level $q$ corresponding to $c$.    In particular, at level $d$,  $\mc(T,c,d)$ gives the true community labels $c$, up to a label permutation.    This quantity is well defined only if the binary tree is balanced (i.e., all leaves are at the same depth $d$), and we will restrict our analysis to the balanced case.  

The convention in the literature is to scale all probabilities of connection by a common factor that goes to 0 with $n$, and have no other dependency on $n$; see, e.g., the review \cite{abbe2017community}.    We similarly reparametrize the BTSBM as 
\begin{equation}\label{eq:rhon}
(p_{0}, p_1, \ldots, p_d) = \rho_{n}(1, a_{1}, \ldots, a_{d}).
\end{equation}
Let $\td{u}_{2}$ be the eigenvector of the second largest eigenvalue (in magnitude) of $A$. If 
\begin{equation}
  \label{eq:sign_equal}
  \sign(\td{u}_{2i}) = \sign(u_{2i}) \mbox{ for all }i,
\end{equation}
with high probability, then the first split will achieve exact recovery. A sufficient condition for \eqref{eq:sign_equal} is concentration of $\td{u}_2$ around $u_2$ in the $\ell_{\infty}$ norm. 
The $\ell_{\infty}$ perturbation theory for random matrices is now fairly well studied  \citep[e.g.][]{eldridge2017unperturbed, abbe2017entrywise}. By recursively applying an $\ell_{\infty}$ concentration bound,  we can guarantee recovery of the entire binary tree with high probability, under regularity conditions.

We start from a condition for the stopping rule. Recall that we defined the stopping rule as a function $\Psi$ such that $\Psi(A) =1$ indicates the adjacency matrix $A$ contains communities and $\Psi(A) = 0$ indicates there is no evidence of more than one community.  

\begin{defi}\label{defi:stopping-rule-consistency}
  A stopping rule for a network of size $n$ generated from an SBM with $K$ communities is {\em consistent with rate $\phi$} if $\p(\Psi(A) = 1) \ge 1-n^{-\phi} $ when $K>1$
  and $\p(\Psi(A) = 0) \ge 1-n^{-\phi}$ when $K=1$.
 
\end{defi}

 With a consistent stopping rule, the strong consistency of binary tree recovery can be guaranteed, as stated in the next two theorems.
\begin{thm}[Consistency of HCD-Sign in the assortative setting]\label{thm:consistency-ass}
Let  $A\in \bR^{n\times n}$ be generated from a BTSBM with parameters $(n, \rho_{n}; a_{1}, \ldots, a_{d})$ as defined in \eqref{eq:rhon}, with $n=Km = 2^dm$. Let $\hat{c}$ be the community labels  and $\hat{T}$ the corresponding binary tree computed with the HCD-Sign algorithm with stopping rule $\Psi$. Suppose the model satisfies the assortative condition \eqref{eq:assortative}.   Let $a_{0} = 1$ and for any $\ell \in [d]$, define
\begin{equation}
  \label{eq:delta}
  \eta_{(\ell)} =  \min\{2^{\frac{\ell-r+1}{4}}\eta_{d-r+1}: r\in [\ell]\} , \quad \mbox{where }\eta_{r} = \min\{ a_{r}, |a_{r-1} - a_{r}| / 2\}. 
\end{equation}
 Fix any $\xi > 1, \phi > 1$ and $\phi' > 0$. Then there exists a constant $C(\phi)$, which only depends on $\phi$, such that,  for any $\ell\in [d]$, if
\begin{equation}
  \label{eq:main_cond}
  \sqrt{\frac{K^{\ell / d}}{n\rho_{n}}}\frac{\max\{\log^{\xi} n, \eta_{(\ell)}^{-1}\}}{\eta_{(\ell)}} < C(\phi),
\end{equation}
and the stopping rule $\Psi$ is consistent for all the subgraphs corresponding to mega-communities up to the $\ell+1$ layer with rate $\phi'$,
then for a sufficiently large $n$, 
$$\min_{\Pi \in \text{Perm}(q)}\Pi(\mc(\hat{T},\hat{c},q) )= \mc(T,c,q), \quad \mbox{for all } q\le \ell,$$
with probability at least $1-2 K^{(\phi + 1) \ell/ d}n^{-\phi}-K^{(\phi'+1)\ell/d}n^{-\phi'}$.   The mega-community partition $\mc(T,c,q)$ is defined at start of Section \ref{subsec:consistency} and $\text{Perm}(q)$ is the set of all label permutations on the binary string set $S_q$.  Further, if the conditions hold for $\ell = d$, then with probability at least $1-2 K^{(\phi + 1) }n^{-\phi}-2K^{(\phi'+1)}n^{-\phi'}$, the algorithm exactly recovers the entire binary tree and stops immediately after it is recovered. 
\end{thm}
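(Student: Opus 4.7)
The plan is to prove correctness of each split in the tree individually, using an entrywise eigenvector perturbation bound, and then glue together the $d$ levels of recursion with a union bound together with the consistency of the stopping rule. The key observation that makes this work is Corollary \ref{cor:eigen}: under the assortative BTSBM, the population second eigenvector of $P$ is exactly $u_2 = n^{-1/2}(\mathbf{1}_{n/2}^T, -\mathbf{1}_{n/2}^T)^T$, so the sign pattern of $u_2$ literally encodes the top-level split. Moreover, after a correct first split, each induced subgraph is an independent BTSBM with depth $d-1$, node count $n/2$, and shortened parameter vector $(p_0, p_1, \ldots, p_{d-1})$, so the same argument applies recursively.

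For the first split, I would invoke an $\ell_\infty$ eigenvector perturbation bound of the type developed in \cite{abbe2017entrywise} (or the resolvent bound of \cite{eldridge2017unperturbed}). Applied to $A$ with population $P$, such a bound yields
\[
\|\tilde u_2 - u_2\|_\infty \;\le\; \frac{C\,\mathrm{polylog}(n)}{\sqrt{n\rho_n}\,\eta_d}
\]
with probability at least $1 - n^{-\phi}$ whenever $n\rho_n$ is at least of order $(\log n)^{2+\epsilon}$, using the fact from Corollary \ref{cor:eigen} that the population eigengap is of order $n\rho_n \eta_d$. Since $\|u_2\|_\infty = n^{-1/2}$, condition \eqref{eq:main_cond} at $\ell = 1$ forces the right-hand side to be strictly less than $n^{-1/2}$, and hence $\sign(\tilde u_{2,i}) = \sign(u_{2,i})$ for every $i$ simultaneously, which is exact recovery of the top-level partition.

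For the inductive step, conditional on all splits at levels $1,\ldots,r-1$ being correct, each of the $2^{r-1}$ subnetworks at level $r$ is an independent BTSBM on $n/2^{r-1}$ nodes with depth $d-r+1$ and top-layer contrast controlled by $\eta_{d-r+1}$. Applying the same entrywise bound to each subnetwork requires $\sqrt{2^{r-1}/(n\rho_n)}\,\mathrm{polylog}(n)/\eta_{d-r+1}$ to be small; taking the worst case over $r \in [\ell]$ and absorbing the $\sqrt{2^{r-1}}$ factor into the eigengap yields exactly the quantity $\eta_{(\ell)}$ in \eqref{eq:delta}, with the exponent $(\ell - r + 1)/4$ arising after carefully tracking how the network size shrinks versus how many subnetworks must be controlled simultaneously. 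A union bound over the at most $\sum_{r\le \ell} 2^{r-1} \le K^{\ell/d}$ subnetworks contributes the prefactor $K^{(\phi+1)\ell/d}$ to the failure probability. The stopping rule, applied to every subnetwork visited up through level $\ell+1$, contributes the second prefactor $K^{(\phi'+1)\ell/d}$: consistency ensures that the algorithm keeps partitioning whenever the current subnetwork has more than one community and halts otherwise, so that when the hypothesis is satisfied all the way to $\ell = d$ the algorithm stops exactly at the leaves of the true tree.

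The main obstacle is obtaining a sufficiently sharp $\ell_\infty$ perturbation bound that survives the recursion in a sparse regime. A direct $\ell_2$ bound via Davis--Kahan would give control of order $\sqrt{K/(n\rho_n)}/\eta_d$ and only transfers to $\ell_\infty$ via the crude inequality $\|\cdot\|_\infty \le \|\cdot\|_2$, which is far too weak for the target regime $n\rho_n$ of order $(\log n)^{2+\epsilon}$; the entrywise analysis is what removes the extra factor of $K$ inside the square root, and its applicability here rests on $u_2$ being maximally incoherent (flat on each block). A secondary subtlety is verifying that, conditional on a correctly recovered mega-community, the induced subgraph really is an independent BTSBM with the inherited community labels: this follows from Bernoulli edge independence but must be formalized to legitimize the recursion and to justify treating sibling subnetworks as independent in the union bound.
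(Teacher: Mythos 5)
Your proposal is correct and follows essentially the same route as the paper: an entrywise ($\ell_\infty$) perturbation bound for $\tilde u_2$ built on the Neumann-series machinery of \cite{eldridge2017unperturbed} (the paper's Lemmas B.1--B.4 and Corollary B.1), exploiting the flatness of $u_2$ from Corollary \ref{cor:eigen} to get exact sign recovery of the top split, followed by induction over the tree using the fact that each recovered mega-community induces an independent, shallower BTSBM, with the union bound recursion $T(\ell;\phi)=1+2^{\phi+1}T(\ell-1;\phi)$ producing the $K^{(\phi+1)\ell/d}$ prefactor and a separate union bound accounting for the stopping rule. The points you flag as requiring care --- the origin of the $2^{(\ell-r+1)/4}$ factor in $\eta_{(\ell)}$ and the formal independence of sibling subnetworks --- are exactly the bookkeeping the paper carries out.
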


Theorem \ref{thm:consistency-ass} essentially says that each splitting step of HCD-Sign consistently recovers the corresponding mega-community, provided that the condition \eqref{eq:main_cond} holds for that layer.  Note that, according to \eqref{eq:delta}, in the assortative setting,
$$\sqrt{\frac{K^{\ell / d}}{n\rho_{n}}} \frac{1}{\eta_{(\ell)}^2} = \sqrt{\frac{2^{\ell}}{n\rho_{n}}} \frac{1}{\eta_{(\ell)}^2} = \frac{1}{\sqrt{n\rho_n}}\frac{1}{\min\{2^{-\frac{r-1}{2}}\eta_{d-(r-1)}^2, r \in [\ell]\}}.$$
As $\ell$ increases,  the set over which we minimize grows,  while each individual term remains the same.  Thus the whole term increases with $\ell$. We also have
$$\sqrt{\frac{K^{\ell / d}}{n\rho_{n}}} \frac{\log^{\xi}{n}}{\eta_{(\ell)}} = \frac{2^{\ell/4}\log^{\xi}{n}}{\sqrt{n\rho_n}}\frac{1}{\min\{2^{-\frac{r-1}{4}}\eta_{d-(r-1)}, r \in [\ell]\}},$$
which also increases in $\ell$. Therefore, \eqref{eq:main_cond} gets strictly harder to satisfy as $\ell$ increases, and even if recovering the entire tree is intrinsically hard or simply impossible (condition \eqref{eq:main_cond} fails to hold for $\ell = d$),  HCD-Sign can still consistently recover mega-communities at higher levels of the hierarchy, as long as they satisfy the condition.   This is a major practical advantage of recursive partitioning compared to both $K$-way partitioning and agglomerative hierarchical clustering of \cite{lyzinski2017community}.   A similar result holds in the dis-assortative setting.

\begin{thm}[Consistency of HCD-Sign in the dis-assortative setting]\label{thm:consistency-dis-ass}
Suppose the model satisfied the dis-assortative condition \eqref{eq:disassortative}.   Under the setting of Theorem \ref{thm:consistency-ass}, the conclusion continues to hold if \eqref{eq:main_cond} is replaced by 
\begin{equation}
  \label{eq:main_cond-dis-ass}
  \sqrt{\frac{K^{\ell / d}}{n\rho_{n}}}\frac{\nu_{(\ell)}\max\{\log^{\xi} n, \nu_{(\ell)}\eta_{(\ell)}^{-1}\}}{\eta_{(\ell)}}  < C(\phi),
\end{equation}
where 
\begin{equation}
  \label{eq:nu-new}
  \nu_{(\ell)} =  \max\left\{2^{-\frac{\ell-r+1}{2}}a_{d-r+1}: r\in [\ell]\right\}.
\end{equation}
\end{thm}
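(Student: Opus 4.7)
The plan is to follow the inductive structure of the proof of Theorem~\ref{thm:consistency-ass} level by level, modifying only the quantities that differ between the assortative and dis-assortative regimes. Assume HCD-Sign has correctly recovered the first $\ell-1$ splits, so that attention reduces to an arbitrary balanced mega-community subnetwork of size $n_\ell = n/2^{\ell-1}$ whose population matrix is itself a BTSBM with parameters $(p_0,\ldots,p_{d-\ell+1})$. Corollary~\ref{cor:eigen}, applied in the dis-assortative regime, guarantees that the second largest eigenvalue (in absolute value) of this subnetwork's population matrix is simple with gap $\Delta_{2,\ell}\asymp n_\ell\rho_n \eta_{d-\ell+1}$, and its normalized eigenvector is the canonical sign pattern $u_2 = (\mbone^T,-\mbone^T)^T/\sqrt{n_\ell}$ that encodes the next correct split. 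Sign recovery at this layer therefore reduces to establishing $\norm{\tilde u_2 - u_2}_\infty < 1/\sqrt{n_\ell}$ with high probability.

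The second step applies the same entrywise eigenvector perturbation machinery used in the assortative proof (cf.\ \cite{abbe2017entrywise,eldridge2017unperturbed}), which controls $\norm{\tilde u_2 - u_2}_\infty$ by a spectral-norm term of order $\norm{A-P}\norm{u_2}_\infty / \Delta_{2,\ell}$ plus a log-weighted per-entry Bernstein term governed by the maximum connection probability $p_{\max,\ell}$ of the subnetwork. The crucial departure from the assortative regime is that $p_{\max,\ell}$ is now driven by $p_{d-\ell+1} = \rho_n a_{d-\ell+1}$ rather than by $p_0 = \rho_n$; consequently both $\norm{A-P}\lesssim \sqrt{n_\ell\rho_n a_{d-\ell+1}}$ and the per-entry fluctuation pick up inflating factors governed by $a_{d-\ell+1}$. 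Combining these with $\Delta_{2,\ell}\asymp n_\ell\rho_n\eta_{d-\ell+1}$ and $\norm{u_2}_\infty = 1/\sqrt{n_\ell}$ yields, for this single layer in isolation, a sufficient condition of the shape $\sqrt{1/(n_\ell\rho_n)}\cdot a_{d-\ell+1}\max\{\log^\xi n,\,a_{d-\ell+1}/\eta_{d-\ell+1}\}/\eta_{d-\ell+1} < C$, matching exactly the layer-$\ell$ contribution in \eqref{eq:main_cond-dis-ass}.

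Aggregating across the $\ell$ levels processed so far, each level $r\in[\ell]$ acts on a subnetwork of size $n_r = n/2^{r-1}$, contributing a size prefactor $2^{-(\ell-r+1)/2}$ when normalized to a common scale, while its inflating density factor is $a_{d-r+1}$. Taking the worst case over $r$ produces exactly $\nu_{(\ell)} = \max_{r\in[\ell]} 2^{-(\ell-r+1)/2}a_{d-r+1}$ from \eqref{eq:nu-new}, and the analogous recursion on the eigengap delivers $\eta_{(\ell)}$ as in \eqref{eq:delta}. The two qualitatively different contributions to the $\ell_\infty$ perturbation bound (spectral-norm versus log-weighted Bernstein) then assemble into the outer maximum $\max\{\log^\xi n,\,\nu_{(\ell)}\eta_{(\ell)}^{-1}\}$ of \eqref{eq:main_cond-dis-ass}. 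Consistency of the stopping rule and the union bound over the at most $2K^{\ell/d}$ subnetwork tests transfer verbatim from the proof of Theorem~\ref{thm:consistency-ass}, yielding the same probability guarantee.

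The main obstacle is careful bookkeeping to verify that the recursive inflation is exactly $\nu_{(\ell)}$ rather than a strictly larger quantity: one must show that the earlier-level ratios $2^{-(\ell-r+1)/2}a_{d-r+1}/\eta_{d-r+1}$ are correctly absorbed into a single outer maximum as the induction proceeds, rather than accumulating additively or multiplicatively across layers. A secondary subtlety, absent in the assortative case, is that the second-largest-in-magnitude eigenvalue of $P$ can be negative in the dis-assortative regime; thus one must check that the Davis--Kahan alignment step still matches $\tilde u_2$ with the correctly signed population eigenvector despite the sign flip, a point that follows from the simplicity of the gap guaranteed by Corollary~\ref{cor:eigen}.
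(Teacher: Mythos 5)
Your proposal follows essentially the same route as the paper: the paper also proves the dis-assortative case by rerunning the entrywise perturbation argument (the corrected Eldridge-type lemma plus the Neumann-series moment bound) with the rescaling $\gamma=\sqrt{n\rho_n a_d}$ so that both $\norm{A-P}$ and the higher-order terms pick up the factor $\max\{1,a_{d-\ell+1}\}$ from the now-larger maximum connection probability, and then feeds the resulting layer-wise condition into the same tree induction, with the eigenvalue-ratio sum $\sum_s|\lambda_s|/|\lambda_2|\le 2a_d/\eta_d$ supplying the $\nu_{(\ell)}\eta_{(\ell)}^{-1}$ entry of the outer maximum. The bookkeeping issues you flag (absorbing the per-layer factors into a single maximum $\nu_{(\ell)}$, and the sign of $\lambda_2$) are exactly the points the paper's proof handles via the monotonicity of $\eta'_{(\ell')}\ge\eta_{(\ell)}$ and $\nu'_{(\ell')}\le\nu_{(\ell)}$ in the induction and via Corollary~\ref{cor:eigen}, so the proposal is correct and matches the paper's argument.
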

It is easy to verify that condition \eqref{eq:main_cond-dis-ass} also becomes harder to satisfy for larger $\ell$.   Therefore in dis-assortative settings we may also be able to recover mega-communities even if we cannot recover the whole tree.  

The theorems apply to any consistent stopping rule satisfying Definition~\ref{defi:stopping-rule-consistency}. In particular, the non-backtracking matrix method we use in the implementation is a consistent stopping rule based on the recently updated result of \cite{le2015estimating}, as we show next.

\begin{prop}\label{prop:NB-consistency}
Define
\begin{equation}\label{eq:zeta}
\zeta_{(\ell)} = \min\left\{\frac{(1+\sum_{j=1}^{\ell-r}2^{j-1}a_j-2^{\ell-r}a_{\ell-r+1})^2}{1+\sum_{j=1}^{\ell-r+1}2^{j-1}a_j}, r\in [\ell] \right\}.
\end{equation}
If 
\begin{equation}\label{eq:stopping-rule-2'}
\frac{K}{n\rho_n}\log{n} \le \min\left\{  \frac{\zeta_{(\ell)}}{25}\log{n}, 1+\sum_{j=1}^{d-\ell+1}2^{j-1}a_j \right\},
\end{equation}
and
\begin{equation}\label{eq:stopping-rule-3'}
n\rho_n\max\{1, a_d\} \le n^{2/13},
\end{equation}
then for a sufficiently large $n$, the non-backtracking matrix stopping rule, described in \eqref{eq:NBmatrix}, is consistent with rate $1$ under BTSBM for all mega-communities up to the layer $\ell + 1$.
\end{prop}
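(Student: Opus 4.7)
The plan is to reduce the statement to the consistency theorem of Le--Levina \cite{le2015estimating} for the non-backtracking spectral test, applied separately to every subgraph visited during the first $\ell+1$ layers of the recursion. Every mega-community the algorithm encounters at depth $d'$ of the tree (counted from the leaves) is itself drawn from a BTSBM on $2^{d'} m$ nodes with parameters $(\rho_n; a_1, \ldots, a_{d'})$ and balanced blocks of size $m = n/K$. Therefore the proposition reduces to verifying, uniformly for $d' \in \{1, \ldots, \ell\}$ (and separately in the Erd\H{o}s--R\'enyi case $d' = 0$), the conditions under which the non-backtracking test correctly returns ``more than one community'' (resp.\ ``a single community'').

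First I would invoke Theorem~\ref{thm:eigen2} to write down the top two distinct eigenvalues of the sub-BTSBM of depth $d'$:
\[
\lambda_1^{(d')} = m\rho_n\Bigl(1 + \sum_{j=1}^{d'} 2^{j-1} a_j\Bigr), \qquad
\lambda_2^{(d')} = m\rho_n\Bigl(1 + \sum_{j=1}^{d'-1} 2^{j-1} a_j - 2^{d'-1} a_{d'}\Bigr).
\]
The signal-to-noise ratio driving the non-backtracking test is then
\[
\frac{(\lambda_2^{(d')})^2}{\lambda_1^{(d')}} = m\rho_n \cdot \frac{\bigl(1 + \sum_{j=1}^{d'-1} 2^{j-1}a_j - 2^{d'-1} a_{d'}\bigr)^2}{1 + \sum_{j=1}^{d'} 2^{j-1}a_j}.
\]
Taking the minimum over $d' \in [\ell]$, under the reparametrization $d' = \ell - r + 1$, recovers exactly $m\rho_n \cdot \zeta_{(\ell)}$, with $\zeta_{(\ell)}$ as in \eqref{eq:zeta}.

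Next I would match the hypotheses of the Le--Levina consistency result. Their theorem requires (i) a polynomial bound on the maximum expected degree, which \eqref{eq:stopping-rule-3'} supplies uniformly, since the maximum expected degree of any sub-BTSBM is at most $n\rho_n\max\{1,a_d\}$; (ii) a lower bound on the average degree of order $\log n$, which, after substituting $m = n/K$, is exactly the clause $K\log n/(n\rho_n) \le 1 + \sum_{j=1}^{d-\ell+1}2^{j-1}a_j$ of \eqref{eq:stopping-rule-2'}; and (iii) the signal condition $(\lambda_2^{(d')})^2/\lambda_1^{(d')} \ge C\log n$, which after substitution becomes exactly $K/(n\rho_n) \le \zeta_{(\ell)}/25$. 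A union bound over the at most $\sum_{q=0}^{\ell+1} 2^q = O(K^{\ell/d})$ mega-communities, each failing with probability $n^{-c}$ for some $c > 1$, then yields consistency at rate $1$.

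The main obstacle is bookkeeping: matching the numerical constant $25$ and the exact signal form of the updated Le--Levina bound to the expression \eqref{eq:zeta}, verifying that $\zeta_{(\ell)}$ is the binding worst-case signal-to-noise ratio across all sub-BTSBMs actually visited, and carefully handling the approximation of $\|B_{\mathrm{nb}}\|$ by $\sum_i d_i^2/\sum_i d_i - 1$ so that the detection threshold is consistent with the population eigenvalues. A subsidiary point is the terminal $d' = 0$ case, where the subgraph reduces to Erd\H{o}s--R\'enyi and the test must produce exactly one outlier eigenvalue above the Ramanujan-type bulk; this follows from the same concentration machinery but requires a separate check.
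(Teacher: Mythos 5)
Your proposal follows essentially the same route as the paper: each mega-community up to layer $\ell+1$ is itself a (sub-)BTSBM, the eigenvalues from Theorem~\ref{thm:eigen2} give $\lambda_2^2/\lambda_1 = m\rho_n\zeta$ with worst case $m\rho_n\zeta_{(\ell)}$, and the two hypotheses are matched to the degree and signal conditions of the Le--Levina non-backtracking consistency theorem (with the $K=1$ case handled by the single-community part of that theorem). The only bookkeeping point to fix is that the binding signal condition is $|\lambda_2|\ge 4\sqrt{\mu}+C\sqrt{\log n}$, which under $\mu\ge\log n$ reduces to the constant-threshold requirement $\lambda_2^2\ge 25\lambda_1$ (exactly $K/(n\rho_n)\le\zeta_{(\ell)}/25$), rather than $\lambda_2^2/\lambda_1\ge C\log n$ as you wrote.
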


  Proposition~\ref{prop:NB-consistency} directly implies that if \eqref{eq:stopping-rule-2'},  \eqref{eq:stopping-rule-3'} and \eqref{eq:main_cond} or \eqref{eq:main_cond-dis-ass} hold at the same time, the conclusions of Theorem~\ref{thm:consistency-ass} and Theorem~\ref{thm:consistency-dis-ass}  hold when using the non-backtracking matrix as the stopping rule. In the proposition, condition \eqref{eq:stopping-rule-3'} constrains the network from being too dense. We believe this to be an artifact of the proof technique of \citep{le2015estimating}.   Intuitively, if the method works for a sparser network, it should work for a denser one as well, so we expect this condition can be removed, but we do not pursue this direction since the non-backtracking estimator is not the focus of the present paper. A similar argument shows that another class of stopping rules based on Bethe-Hessian matrices \citep{le2015estimating} also gives consistent stopping rules which will also give consistent recovery.  

Next,  we illustrate conditions \eqref{eq:stopping-rule-2'} and \eqref{eq:main_cond} in a simplified setting.   Consider the assortative setting when the whole tree can be recovered ($\ell = d$).

\noindent Example 1.  Assume an arithmetic sequence $a_r$, given by $a_{r} = 1 - r / (d + 1)$. In this case, taking $\xi = 1.01$, it is easy to see that \eqref{eq:main_cond} is
\begin{equation}
  \label{eq:arithmetic_cond}
  K(d + 1)^{2} \log^{2.02}n = O(n\rho_{n}).
\end{equation}
Some simple algebra shows that condition \eqref{eq:stopping-rule-2'} simplifies to 
$$K\log{n} = O(n\rho_n) \text{~~~~and~~~~} Kd^2 = O(n\rho_n).$$
%
%

Thus the additional requirement for strong consistency with the non-backtracking matrix stopping rule is redundant and we only need the condition
$$K(d+1)^2\log^{2.02}{n} = O(n\rho_n).$$

\noindent Example 2.  Assume a geometric sequence $a_{r}$, given by $a_{r} = \beta^{r}$ for a constant $\beta < 1$. Let $\beta = 2^{-\gamma}$ for $\gamma>0$.
The condition \eqref{eq:stopping-rule-2'}, after some simplifications, becomes 
$$K\log n = O(n\rho_{n}).  $$
On the other hand, for  \eqref{eq:main_cond}, we have 
\begin{align*}
  2^{r / 4}\eta_{r} &= \min\{2^{r / 4}a_{r}, 2^{r / 4}(a_{r-1} - a_{r}) / 2\} = (2^{1/4-\gamma})^{r}\min\left\{1, \frac{\beta^{-1} - 1}{2}\right\}.
\end{align*}
If $\gamma \le 1/4$, then $\eta_{(d)} \ge \min\left\{1, \frac{\beta^{-1} - 1}{2}\right\}$, and \eqref{eq:main_cond} with $\xi = 1.01$ becomes 
\[K\log^{2.02}n = O(n\rho_{n}).\]

If $\gamma > 1/4$, then $\eta_{(d)} = (2^{1/4 - \gamma})^{d}\min\left\{1, \frac{\beta^{-1} - 1}{2}\right\} = \Theta(K^{1/4 - \gamma})$, and \eqref{eq:main_cond} becomes 
$$K^{4\gamma} = O(n\rho_{n}) \text{~~~~and~~~~} K^{2\gamma + 1 / 2}\log^{2.02}n = O(n\rho_{n}).$$
In summary,  the following conditions are sufficient for exact recovery:  
\begin{equation}
  \label{eq:geometric_cond1}
  K^{4\gamma} = O(n\rho_{n}) \text{~~~~and~~~~} K^{\max\{2\gamma + 1 / 2, 1\}}\log^{2.02}n = O(n\rho_{n}).
\end{equation}

\subsection{Comparison with existing theoretical guarantees}
In this section, we compare our result with other strong consistency results for recursive bi-partitioning. We focus on the assortative setting since this most existing results require it \citep{balakrishnan2011noise,lyzinski2017community}.  To make this comparison, we have to ignore the stopping rule, since no other method showed consistency for a data-driven stopping rule.  

Strong consistency of recursive bi-partitioning was previously discussed by \cite{dasgupta2006spectral}.  Their algorithm is far more complicated than ours, with multiple rounds of resampling and pruning, and its computational cost is much higher. For comparison, we can rewrite their assumptions in the BTSBM parametrization.  Their Theorem 1  requires $n\rho_n \ge \log^6{n}$ , and the gap between any two columns of $P$ corresponding to nodes in different communities to satisfy 
\begin{equation}\label{eq:K7}
  \min_{c(u) \ne c(v)}\norm{P_{\cdot u}- P_{\cdot v}}_2^2 = \Omega(K^6\rho_n\log{n}).
\end{equation}
Under the BTSBM,  it is straightforward to show that the minimum in \eqref{eq:K7} is achieved by two communities corresponding to sibling leaves in the last layer and
\begin{equation*}
  \min_{c(u) \ne c(v)}\norm{P_{\cdot u}- P_{\cdot v}}_2^2 = 2\frac{n}{K}\rho_{n}^{2}(1 - a_{1})^{2} \, .  
\end{equation*}
Assume that $K = O(n^{\omega})$ for some $\omega < 1$. Then for sufficiently large $\phi > 0$, $K^{\phi + 1}n^{-\phi} = o(1)$. To compare our result  (with $\ell = d$) to \eqref{eq:K7}, we consider two cases.

1.   Arithmetic sequence: Suppose that $a_{r}$ is given by $a_{r} = 1 - r / (d + 1)$. Then \eqref{eq:K7} gives   
\[K^{7} (d + 1)^{2} \log n = O(n\rho_{n})\Longleftrightarrow K^{7}\log^{2} K \log n = O(n\rho_{n}),\] 
whereas our condition \eqref{eq:arithmetic_cond} is only 
\[K(d + 1)^{2} \log^{2.02}n = O(n\rho_{n})\Longleftrightarrow K\log^{2} K \log^{2.02} n = O(n\rho_{n}),\]
which has a much better dependence on $K$.

2.  Geometric sequence: Suppose that $a_{r}$ is given by $a_{r} = \beta^{r}$. Then the condition \eqref{eq:K7} becomes
\[K^{7} \log n = O(n\rho_{n}).\] 
From \eqref{eq:geometric_cond1}, it is easy to see that HCD-sign has a better rate in $K$ if $\gamma < 7 / 4$, which is equivalent to $\beta > 2^{-7/4} = 0.2973$.

The algorithm proposed in \cite{balakrishnan2011noise} is similar to ours, though their analysis is rather different. Under BTSBM, each entry in the adjacency matrix is sub-Gaussian with parameter $1$ (which cannot be improved), i.e., $\e \exp\{a (A_{ij} - \e A_{ij})\}\le \exp\{a^{2} / 2\}$ for any $a \ge 0$. In order to recover all mega-communities up to layer $\ell$ (with size at least $n / 2^{\ell}$), 
it is easy to show that a necessary condition in their analysis (Theorem 1) is 
$$\rho_{n} = \omega(n^{15/16}).$$
Thus the consistency result in \cite{balakrishnan2011noise} only applies to dense graphs. 

\cite{lyzinski2017community} developed their hierarchical community detection algorithm under a different model they called the hierarchical stochastic blockmodel (HSBM).   The HSBM is defined recursively, as a mixture of lower level models. In particular, when $a_{r} = \beta^{r}$ with $\beta < 1/2$, the BTSBM is a special case of the HSBM where each level of the hierarchy has exactly two communities. \cite{lyzinski2017community} showed the exact tree recovery for fixed $K$ and the average expected degree of at least $O(\sqrt{n}\log^2{n})$, implying a very dense network. By contrast, our result allows for a growing $K$, and if $K$ is fixed, the average degree only needs to grow as fast as $\log^{1 + \xi}{n}$ for an arbitrary $\xi > 1$, which is a much weaker requirement, especially considering that a degree of order $\log n$ is necessary for strong consistency of community labels under a standard SBM with fixed $K$.

\subsection{Computational complexity}

We conclude this section by investigating the computational complexity of HCD, which turns out to be better than that of $K$-way partitioning, especially for problems with a large number of communities. The intuition behind this somewhat surprising result is that, even though HCD has to perform clustering multiple times, it performs a much simpler task at each step, computing no more than two eigenvectors instead of $K$, and the number of nodes to cluster decreases after each step.

We start with stating some relevant known facts.  Suppose we use Lloyd's algorithm \citep{lloyd1982least} for  $K$-means and the Lanczos algorithm \citep{larsen1998lanczos} to compute the spectrum, both with a fixed number of iterations. If the input matrix is $d_1\times d_2$, the $K$-means algorithm has complexity of $O(d_1d_2K)$. In calculating the leading $K$ eigenvectors, we take advantage of matrix sparsity, resulting in complexity $O(\|A\|_{0}K)$ where $\|A\|_{0}$ is the number of non-zero entries of $A$.    Therefore, for $K$-way spectral clustering, the computational  cost is
  \begin{equation}\label{eq:Kway_computation}
    O(nK^{2} + \|A\|_{0}K).
  \end{equation}

Turning to HCD, let $A_{j, \ell}$ denote the adjacency matrix of the $j$-th block in $\ell$-th layer. For comparison purposes, we assume the BTSBM and the conditions for exact recovery, so that we construct the entire tree.  Then $A_{j,\ell}$ corresponds to $2^{d - \ell}m$ nodes. Note that for both Algorithm \ref{algo:SS} and Algorithm \ref{algo:RSC}, the complexity is linear in size. As with \eqref{eq:Kway_computation}, the splitting step applied to $A_{j, \ell}$ has complexity $O(2^{d - \ell}m + \|A_{j, \ell}\|_{0})$ for both HCD-Sign and HCD-Spec. Adding the cost over all layers, the total computation cost becomes 
  \begin{equation}
    \label{eq:HCD_computation1}
    O\lb\sum_{\ell=1}^{d}\sum_{j=1}^{2^{\ell}}(2^{d - \ell}m + \|A_{j, \ell}\|_{0})\rb = O\lb n\log K + \sum_{\ell=1}^{d}\sum_{j=1}^{2^{\ell}}\|A_{j, \ell}\|_{0}\rb,
  \end{equation}
where we use the facts that $K = 2^{d}$ and $n = mK$. Since the blocks corresponding in $\ell$-th layer are disjoint, 
\begin{equation}\label{eq:Ajl0}
  \sum_{j=1}^{2^{\ell}} \|A_{j, \ell}\|_{0}\le \|A\|_{0},
\end{equation}
and thus \eqref{eq:HCD_computation1} is upper bounded by
\begin{equation}
  \label{eq:HCD_computation2}
  O\lb n\log K + \|A\|_{0}\log K\rb.
\end{equation}
This is strictly better than the complexity of $K$-way spectral clustering \eqref{eq:Kway_computation} for large $K$.

Moreover, the inequality \eqref{eq:Ajl0} may be overly conservative. Under the BTSBM, the expected number of within-block edges in the $\ell$-th layer is 
\[\e \sum_{j=1}^{2^{\ell}} \|A_{j, \ell}\|_{0} = 2^{\ell} \lb 2^{d - \ell}m\rb^{2}\bar{p}_{\ell}, \quad \mbox{where }\bar{p}_{\ell} = \frac{p_{0} + \sum_{i=1}^{d - \ell}2^{i-1}p_{i}}{2^{d - \ell}}.\]
As a result,
\[\e \sum_{\ell=1}^{d}\sum_{j=1}^{2^{\ell}} \|A_{j, \ell}\|_{0}  = \lb\sum_{\ell=1}^{d}\frac{\bar{p}_{\ell}}{2^\ell\bar{p}_{0}}\rb\e \|A\|_{0} = \lb\frac{dp_{0} + \sum_{i=1}^{d}(d - i)2^{i-1}p_{i}}{p_{0} + \sum_{i=1}^{d}2^{i-1}p_{i}}\rb\e \|A\|_{0}.\]
The coefficient before $\e \norm{A}_0$ is $O(1)$ in many situations,  including Examples 1 and 2 discussed at the end of Section \ref{subsec:consistency}. In these cases, the average complexity of HCD algorithms is only
\[O(n\log K + \e\|A\|_{0}).\]
Last but not least, the HCD framework, unlike $K$-way partitioning, can be easily parallelized.


\section{Numerical results on synthetic networks}\label{sec:sim}

In this section, we investigate empirical performance of HCD on
  synthetic networks, both on networks with hierarchical structure and those without.   We generate networks with hierarchical structure from the proposed BTSBM, and 
  consider BTSBMs with a fully balanced tree and equal block sizes, which are covered by our theory, as well as BTSBMs with unbalanced trees and different block sizes, which are not.    The non-hierarchical networks will be generated from the regular SBM, included as a general sanity check.
  
We compare methods using the following five measures of accuracy. 
\begin{enumerate}
\item Accuracy of community detection, measured by normalized mutual information (NMI) between true and estimated labels \citep{yao2003information}. NMI is commonly used in the network literature \citep{lancichinetti2009community, amini2013pseudo} and does not require the two sets of labels to have the same number of clusters, a property we need when comparing trees.  
\item Accuracy of recovering the hierarchical structure, measured by the error in the tree distance matrix  \\ $ h\norm{S_{\hat{T}}-S_T}^2_F/\norm{S_T}^2_F$, where $S_{T}$ is defined in Section \ref{secsec:binarytree}.   For baseline methods that do not output a hierarchy, we apply the HCD procedure to their estimate of the probability matrix estimated to construct a tree. 
\item Accuracy of mega-community detection at the top two layers, measured by the proportion of correctly clustered nodes.  We only compute this for balanced BTSBMs.    
\item  Accuracy of estimating the probability matrix $P$, measured by $\norm{\hat{P}-P}_F^2/\norm{P}_F^2$. 
\item Accuracy of estimating $K$, measured by comparing the average $\hat{K}$ with $K$.    Spectral clustering is excluded since it requires $K$ as input;   instead, we report the error of the non-backtrackign estimator of $K$, shown to be one of the most accurate options available for the SBM \citep{le2015estimating}.  

\end{enumerate}

We compare the two versions of HCD (sign-based and spectral clustering) with three other baseline methods:  regularized spectral clustering, the Louvain's modularity method \citep{blondel2008fast}, which is regarded as a most competitive modularity based community detection algorithms by empirical studies \citep{yang2016comparative}, and the recursive partitioning method of \cite{dasgupta2006spectral}.     The first two methods are not hierarchical.  Spectral clustering requires $K$ as an input, and we use the $K$ estimated by hierarchical spectral clustering (HCD-spec) for a fair comparison.   For mega-communities, we also include RSC with the true number of clusters ($K=2$ for layer $1$ and $K=4$ for layer $2$), which can be viewed as an oracle version of RSC.    We use the default parameter settings with regularized spectral clustering and Louvain modularity. Unfortunately, the algorithm of \citep{dasgupta2006spectral} involves several tuning parameters and the paper do not provide recommendations on how to tune them.   For the purposes of this comparison, we tried a number of settings and used a configuration that seemed the best on average.

While we include multiple benchmarks, the main focus of our comparisons is on hierarchical vs $K$-way clustering implemented by the same method (regularized spectral clustering), since it focuses on the central contribution of this work.  All simulation results are averaged over 100 independent replications. 

\subsection{Balanced BTSBM}

We start with a balanced BTSBM with $n=3200 = m2^d$, $K = 2^d$,  and the values of $d = 2, 3, 4, 5, 6$.    The parameter $\beta$ is set so that the average out-in ratio (between-block edge/within-block edges) for all $K$ is fixed at $0.15$ and $\rho_n$ is set so that the average node degree is 50.   These values of $\beta$ and $\rho_n$ are not too challenging, so we can be sure that the main impact on accuracy comes from changing $K$.

Figure~\ref{fig:VaryK} shows the results as a function of $K$.    First, the difference between the two versions of HCD is negligible, with the RSC-based splitting slightly better, so we will simply refer to HCD in comparisons.    HCD outperforms all other methods on both finding all the communities and constructing the hierarchy (Figures~\ref{fig:VaryK-Coclust} and \ref{fig:VaryK-SimErr}), and as predicted by theory, the gain increases with $K$.  While all methods' performance on recovering all communities gets worse as $K$ increases, Figures~\ref{fig:VaryK-Mega2} and \ref{fig:VaryK-Mega4} show that HCD recovers the two top layers of the tree essentially perfectly for all values of $K$, while other methods do not.  This is also consistent with theory which predicts that hierarchical clustering can retain full accuracy on the top levels of the tree.
HCD methods also consistently outperform other methods on estimating the probability matrix (Figure \ref{fig:VaryK-Prob}), as one would expect since the probability matrix depends on accurate community estimation.    Finally,  while all methods underestimate the number of communities when $K$ is large, HCD is more accurate than other methods, except for $K = 64$ when Dasgupta's method is better.
In summary, as $K$ grows and the problem becomes more challenging, the advantages of HCD become more and more pronounced.

Next, we use the same configuration, except now we vary  the average degree of the network, fixing $K=16$, and holding the out-in ratio at 0.15.   Results are shown in Figure~\ref{fig:VarySparsity}.   Again,  HCD performs better and the accuracy improves with degree as suggested by theory. The errors of the modularity method and \cite{dasgupta2006spectral}, on the other hand, appear to converge to a constant as the degree increases and do not suggest consistency will hold in this setting.   With an exception of one value of $K$ where Dasgupta's method is closer to the true value of $K$, the HCD algorithm gives the best performance for all tasks.

\begin{figure}[H]
\centering
\begin{subfigure}[t]{0.4\textwidth}
\centering
\includegraphics[width=\textwidth]{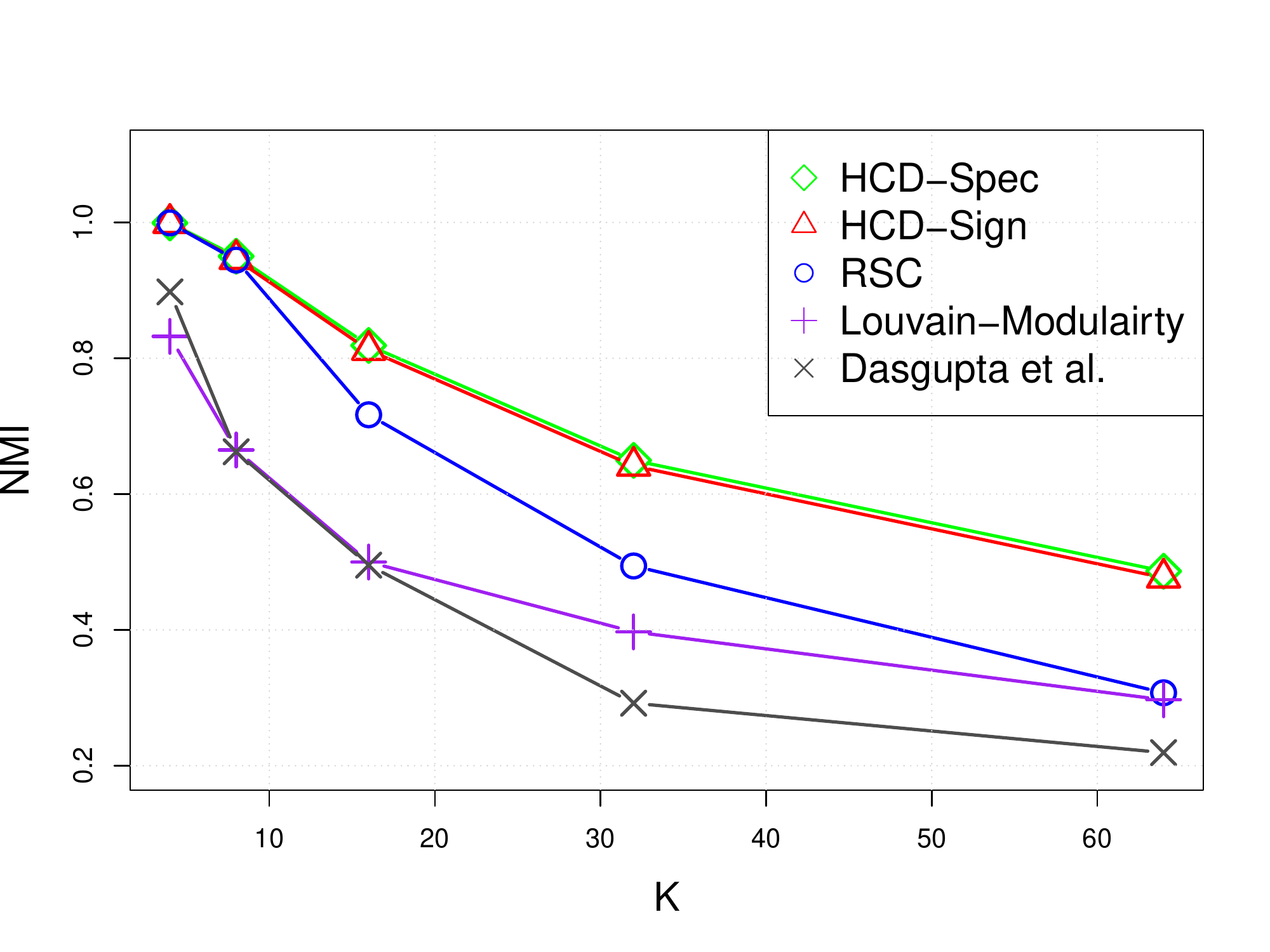}
\caption{Normalized mutual information}
\label{fig:VaryK-Coclust}
\end{subfigure}
\hfill
\begin{subfigure}[t]{0.4\textwidth}
\centering
\includegraphics[width=\textwidth]{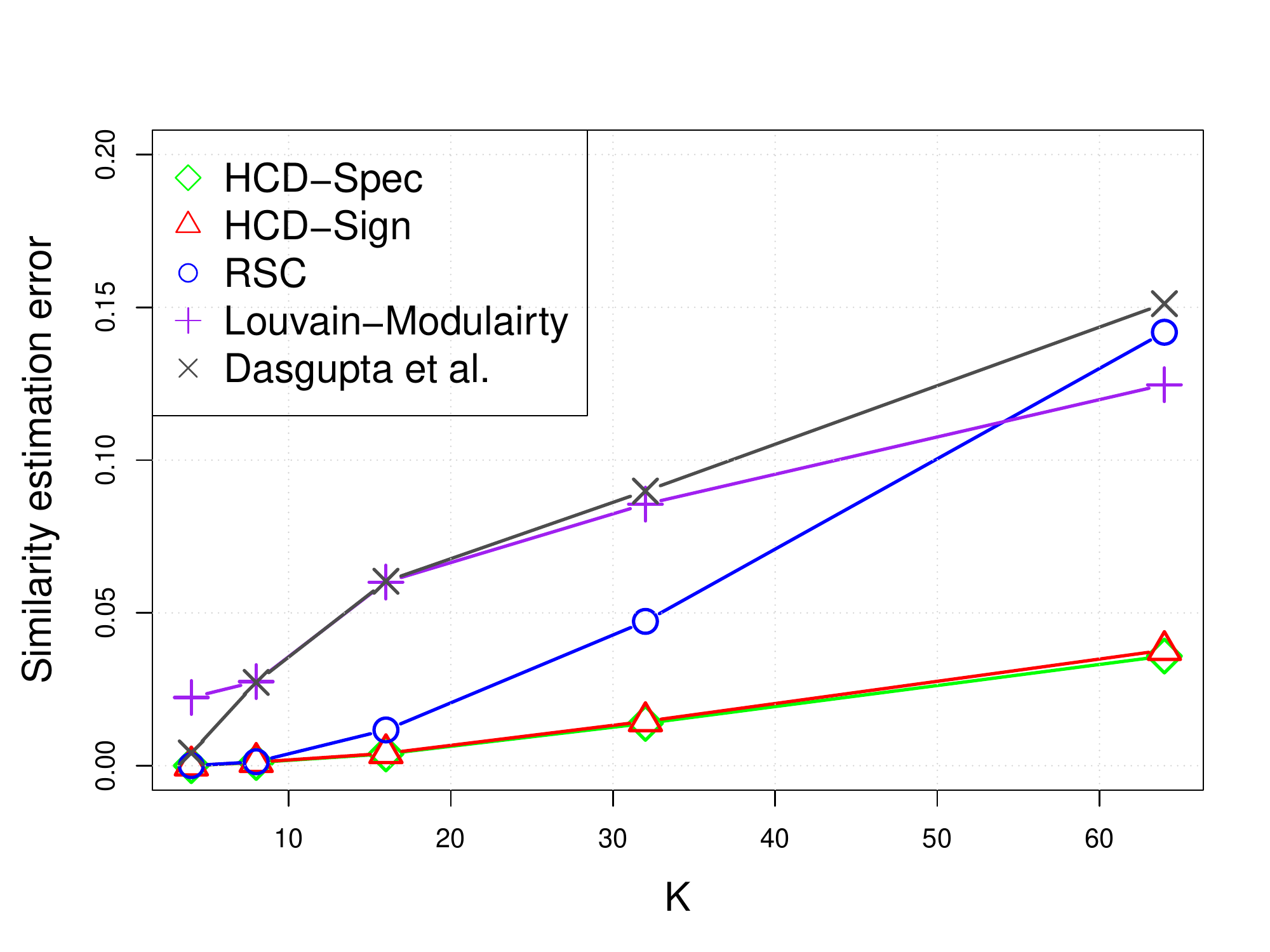}
\caption{Error in the tree similarity matrix}
\label{fig:VaryK-SimErr}
\end{subfigure}

\begin{subfigure}[t]{0.4\textwidth}
\centering
\includegraphics[width=\textwidth]{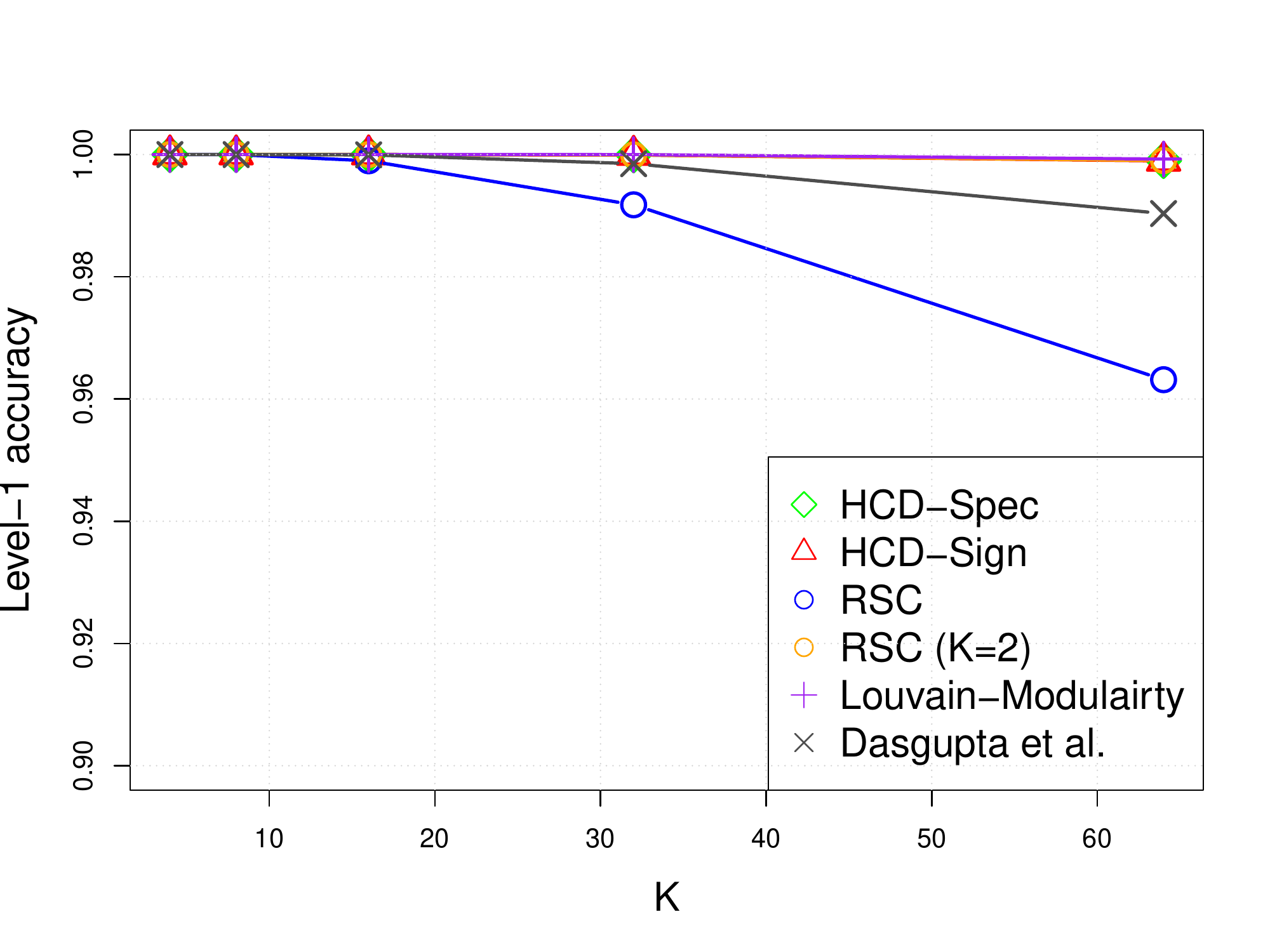}
\caption{Accuracy for level 1}
\label{fig:VaryK-Mega2}
\end{subfigure}%
\hfill
\begin{subfigure}[t]{0.4\textwidth}
\centering
\includegraphics[width=\textwidth]{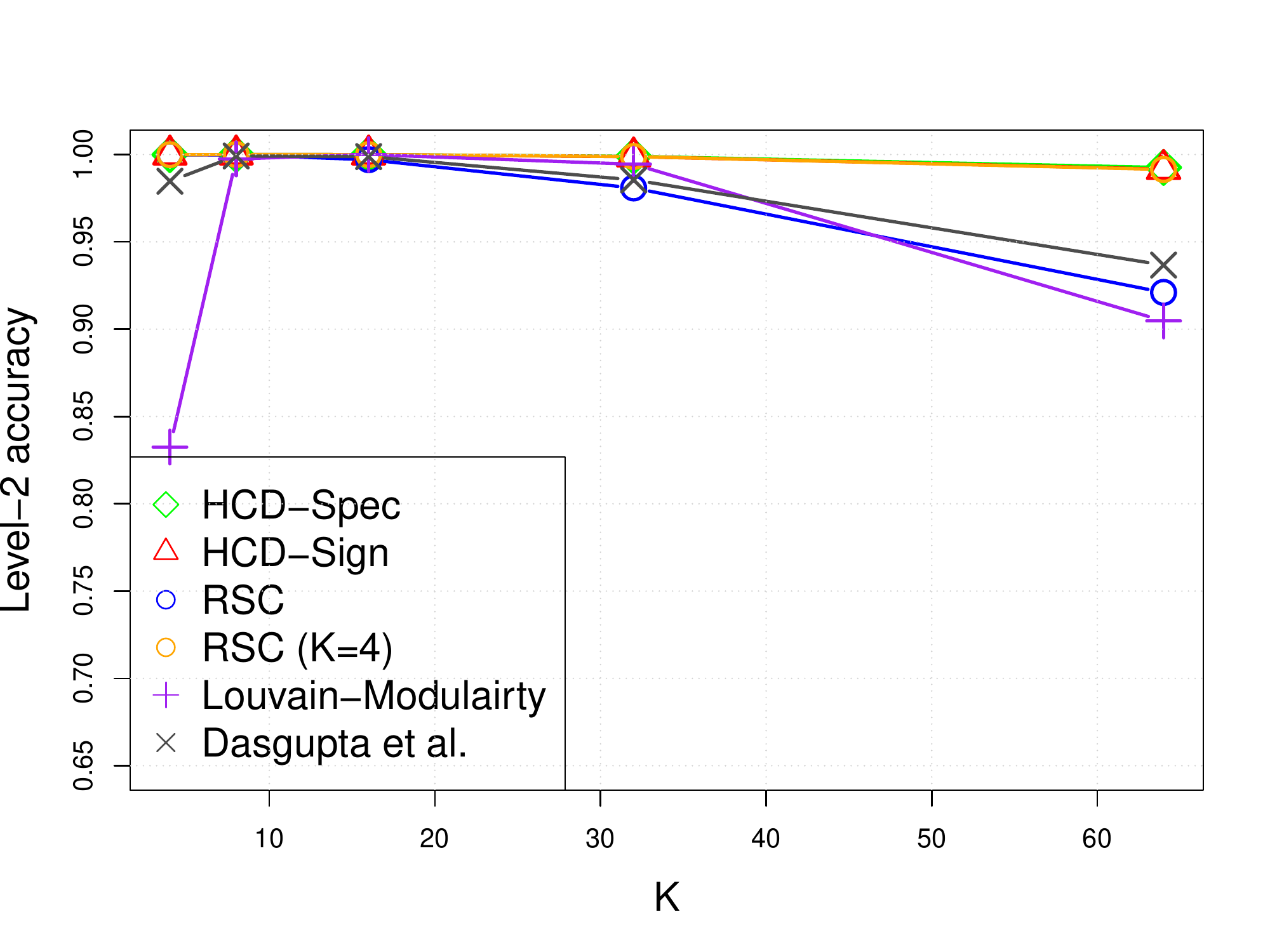}
\caption{Accuracy for level 2}
\label{fig:VaryK-Mega4}
\end{subfigure}

\begin{subfigure}[t]{0.4\textwidth}
\centering
\includegraphics[width=\textwidth]{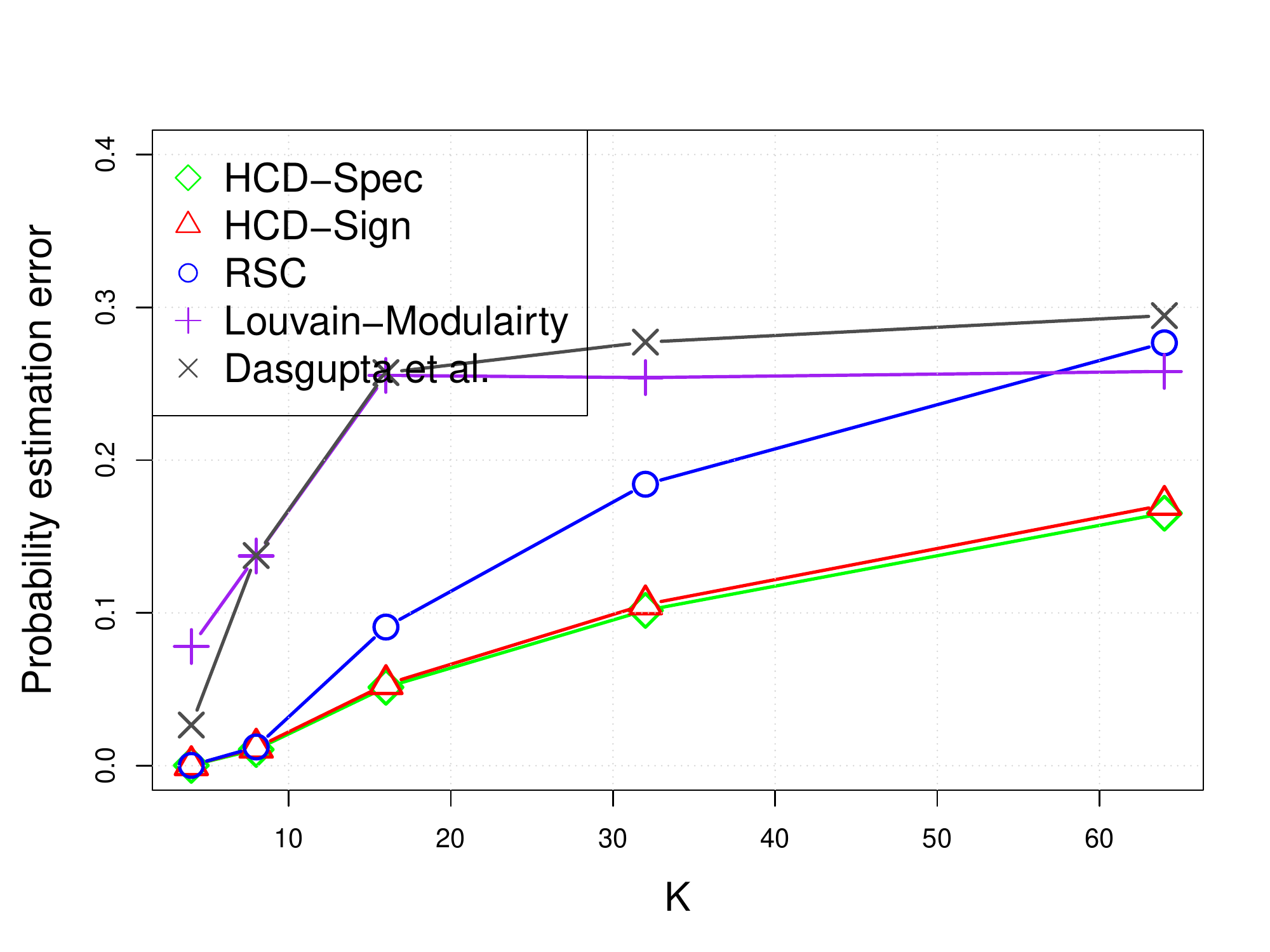}
\caption{Error in the probability matrix}
\label{fig:VaryK-Prob}
\end{subfigure}%
\hfill
\begin{subfigure}[t]{0.4\textwidth}
\centering
\includegraphics[width=\textwidth]{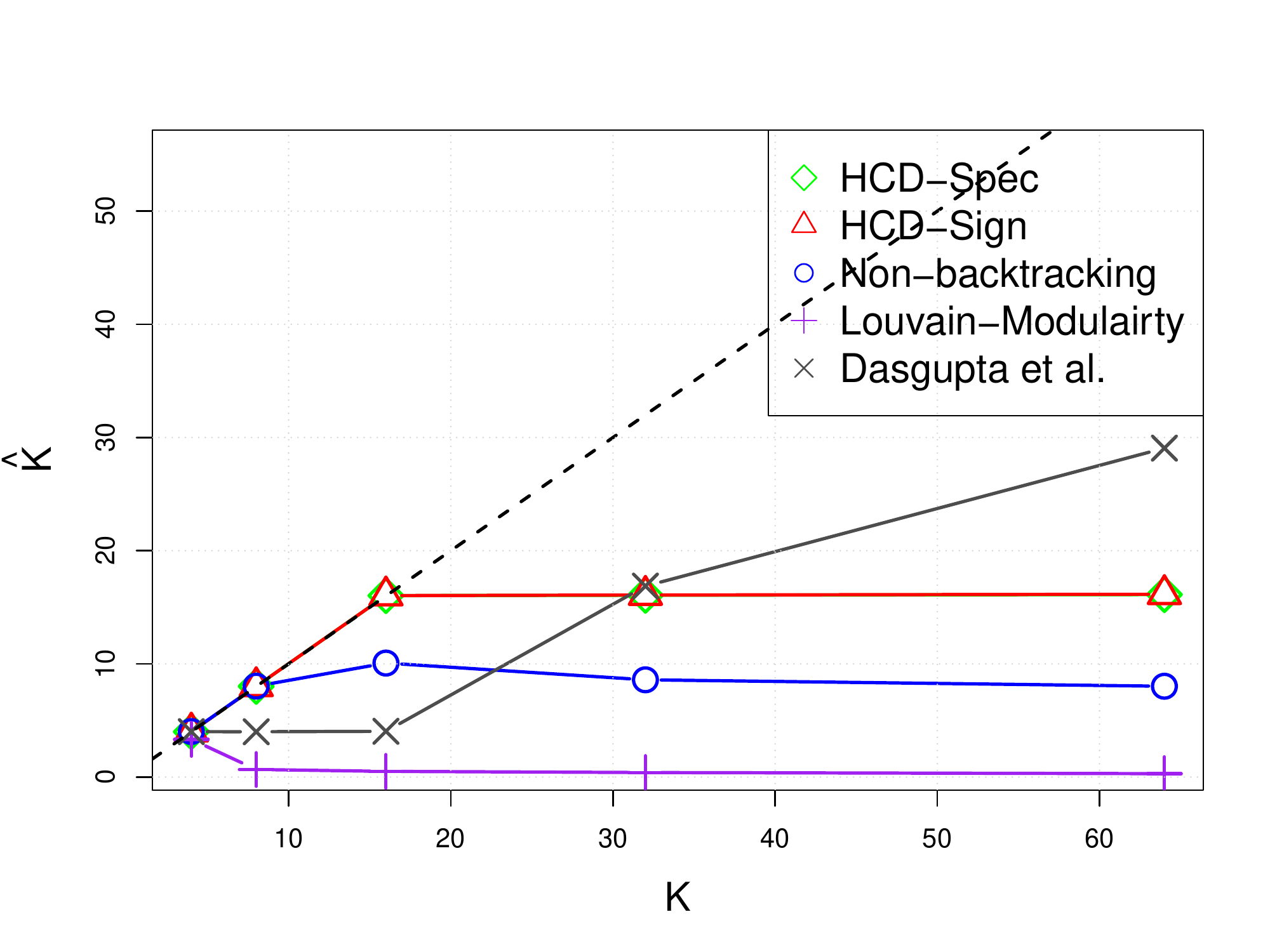}
\caption{Estimated number of communities}
\label{fig:VaryK-hatK}
\end{subfigure}%
\caption{Results for the balanced BTSBM with $n=3200$ nodes and $K$ varying in $\{4, 8, 16, 32, 64\}$.   For (a), (c), and (d) higher values indicate better performance;  for (b) and (e), lower values are better.  For (f), the truth is shown as the 45-degree line.  }
\label{fig:VaryK}
\end{figure}

\begin{figure}[H]
\centering
\begin{subfigure}[t]{0.4\textwidth}
\centering
\includegraphics[width=\textwidth]{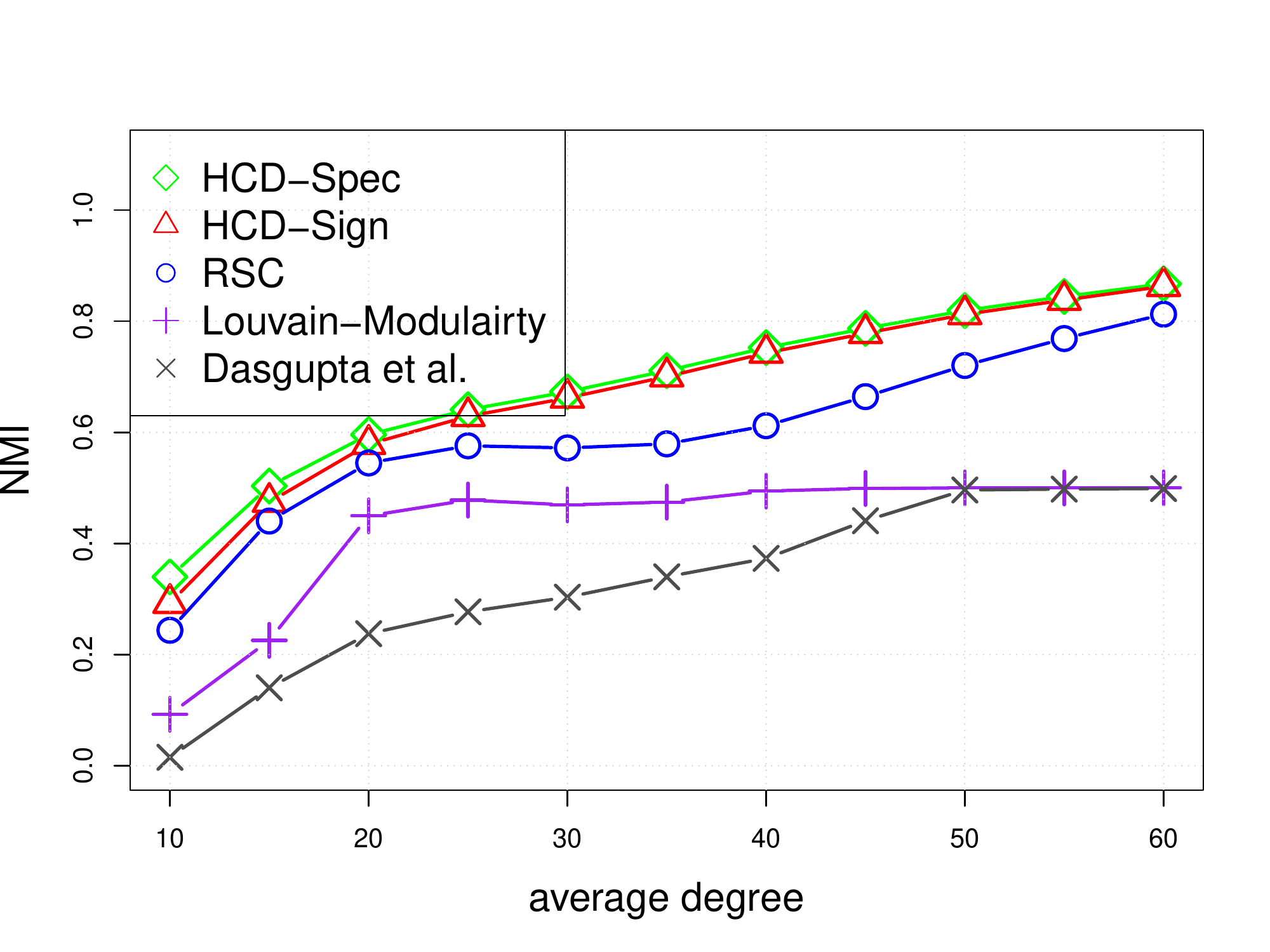}
\caption{Normalized mutual information}
\label{fig:VarySparsity-Coclust}
\end{subfigure}
\hfill
\begin{subfigure}[t]{0.4\textwidth}
\centering
\includegraphics[width=\textwidth]{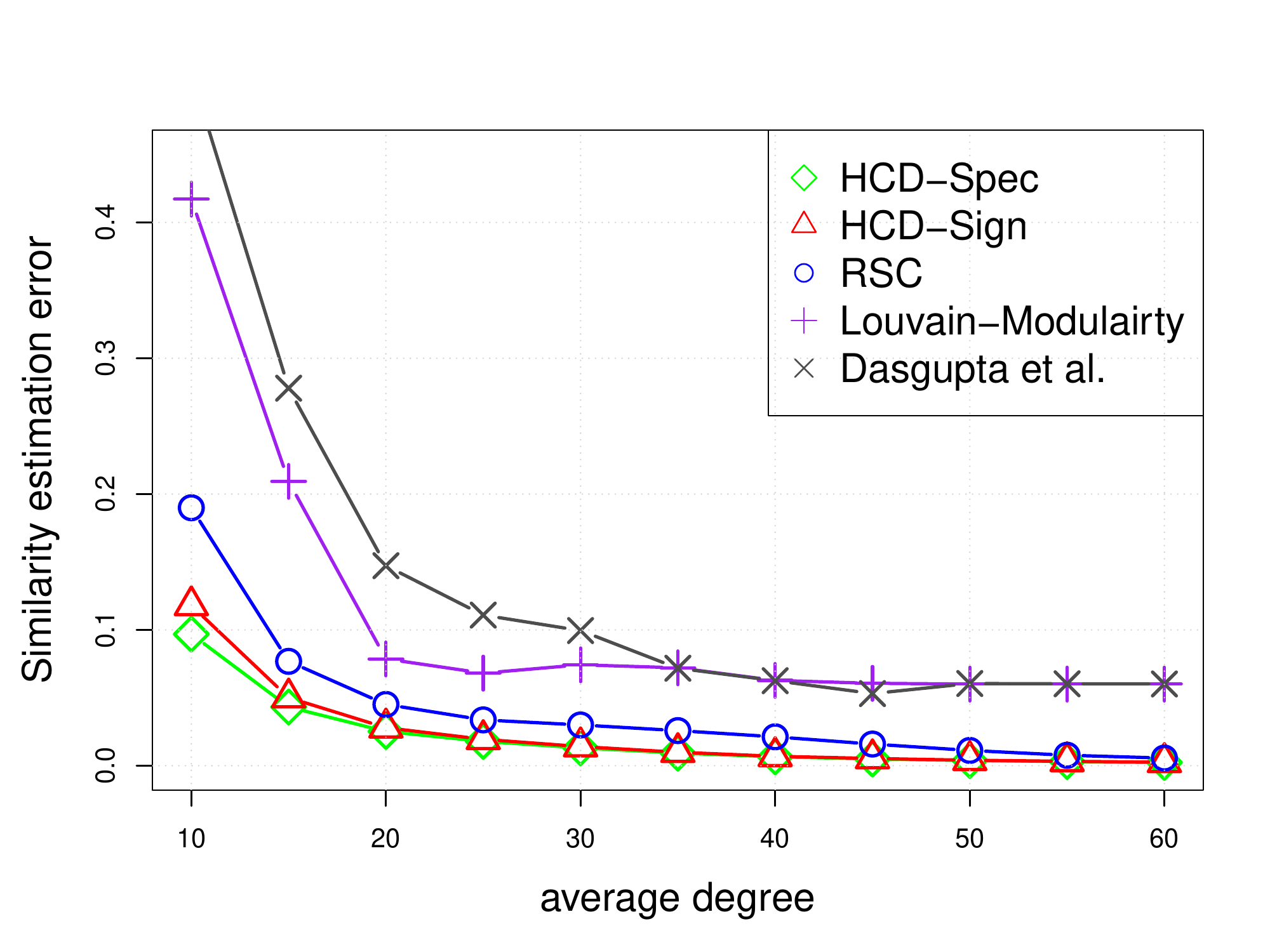}
\caption{Error in the tree similarity matrix}
\label{fig:VarySparsity-SimErr}
\end{subfigure}

\begin{subfigure}[t]{0.4\textwidth}
\centering
\includegraphics[width=\textwidth]{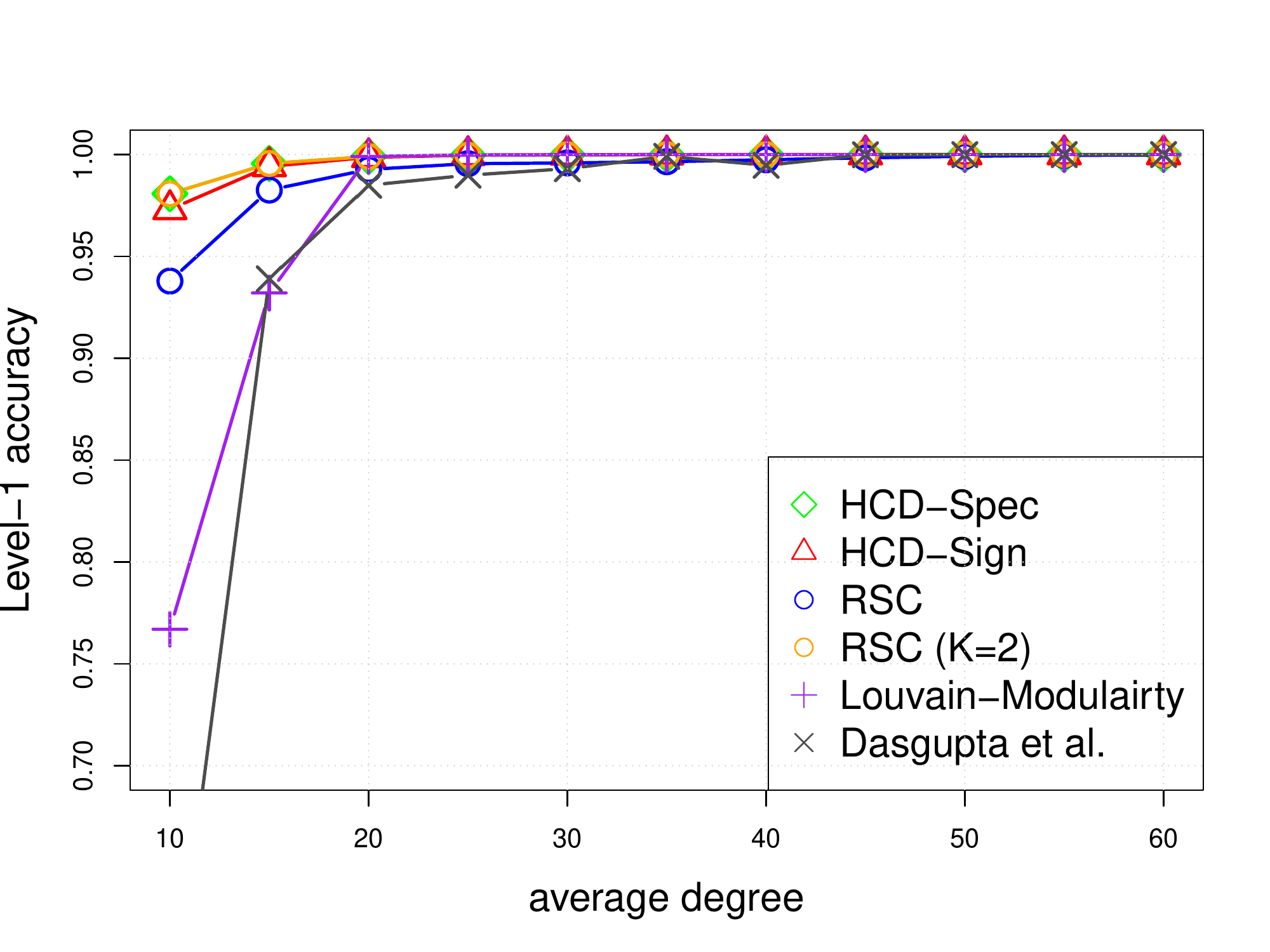}
\caption{Accuracy for  level 1 mega-communities}
\label{fig:VarySparsity-Mega2}
\end{subfigure}%
\hfill
\begin{subfigure}[t]{0.4\textwidth}
\centering
\includegraphics[width=\textwidth]{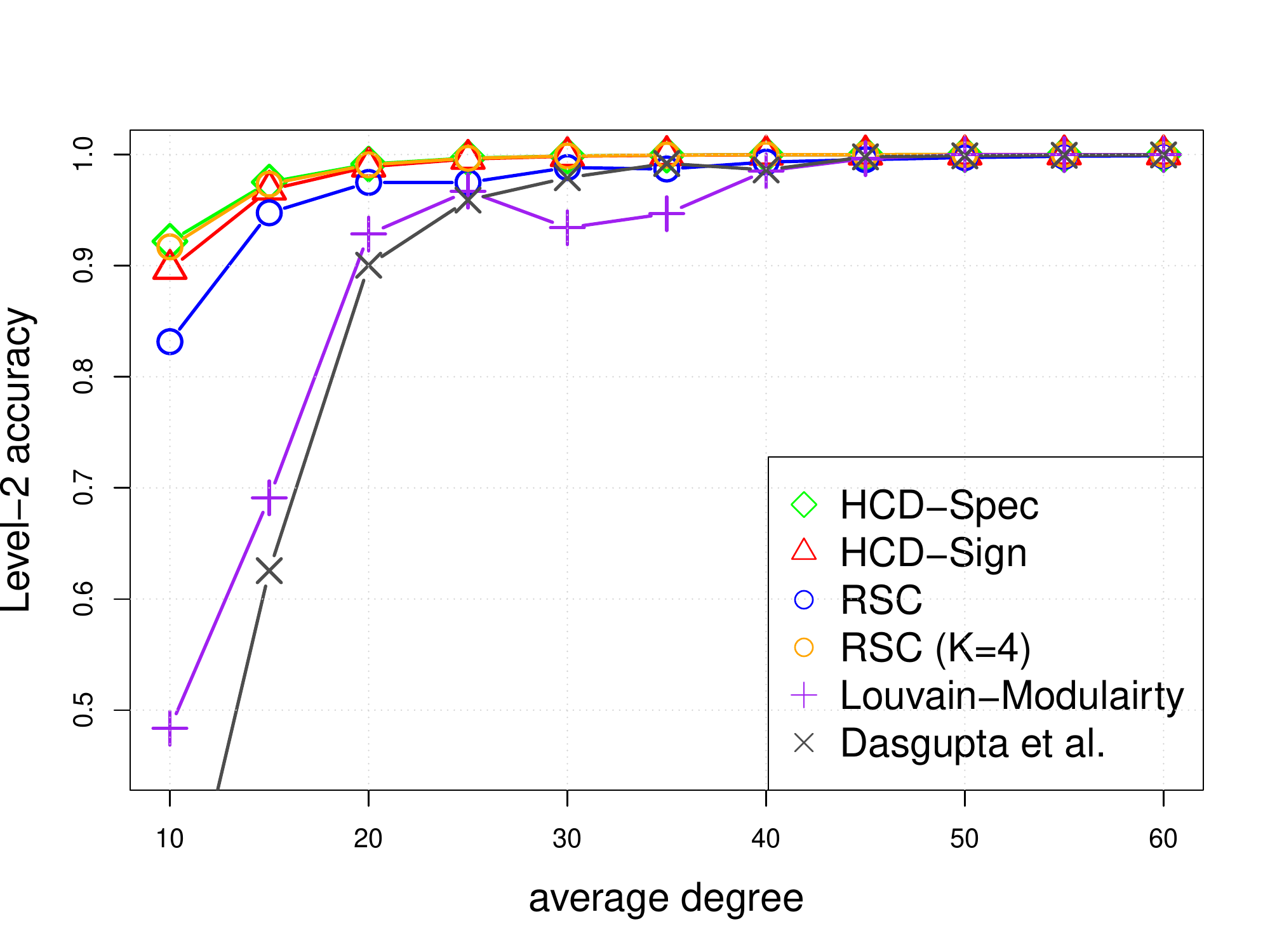}
\caption{Accuracy for level 2 mega-communities}
\label{fig:VarySparsity-Mega4}
\end{subfigure}

\begin{subfigure}[t]{0.4\textwidth}
\centering
\includegraphics[width=\textwidth]{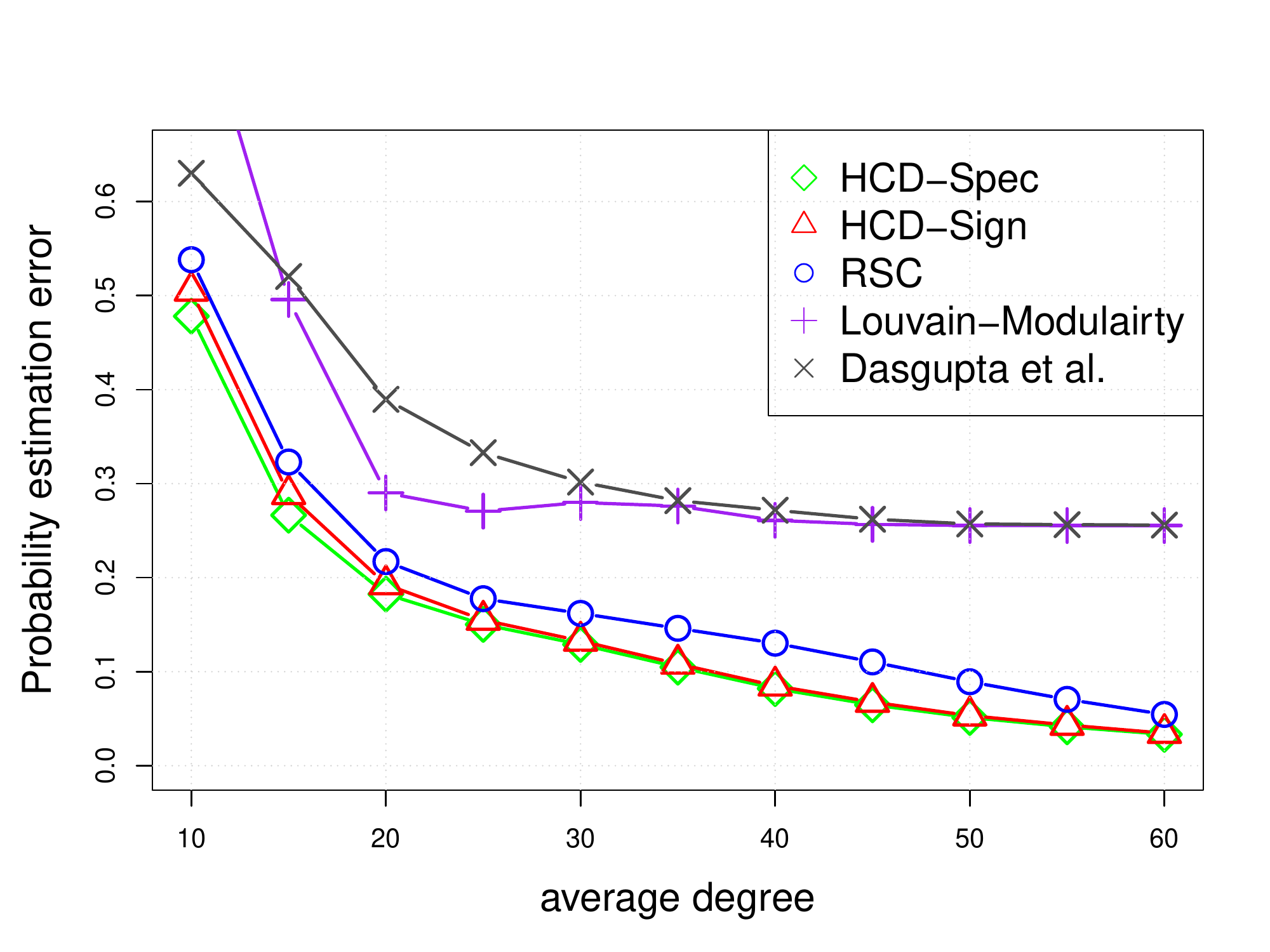}
\caption{Error in the probability matrix}
\label{fig:VarySparsity-Prob}
\end{subfigure}%
\hfill
\begin{subfigure}[t]{0.4\textwidth}
\centering
\includegraphics[width=\textwidth]{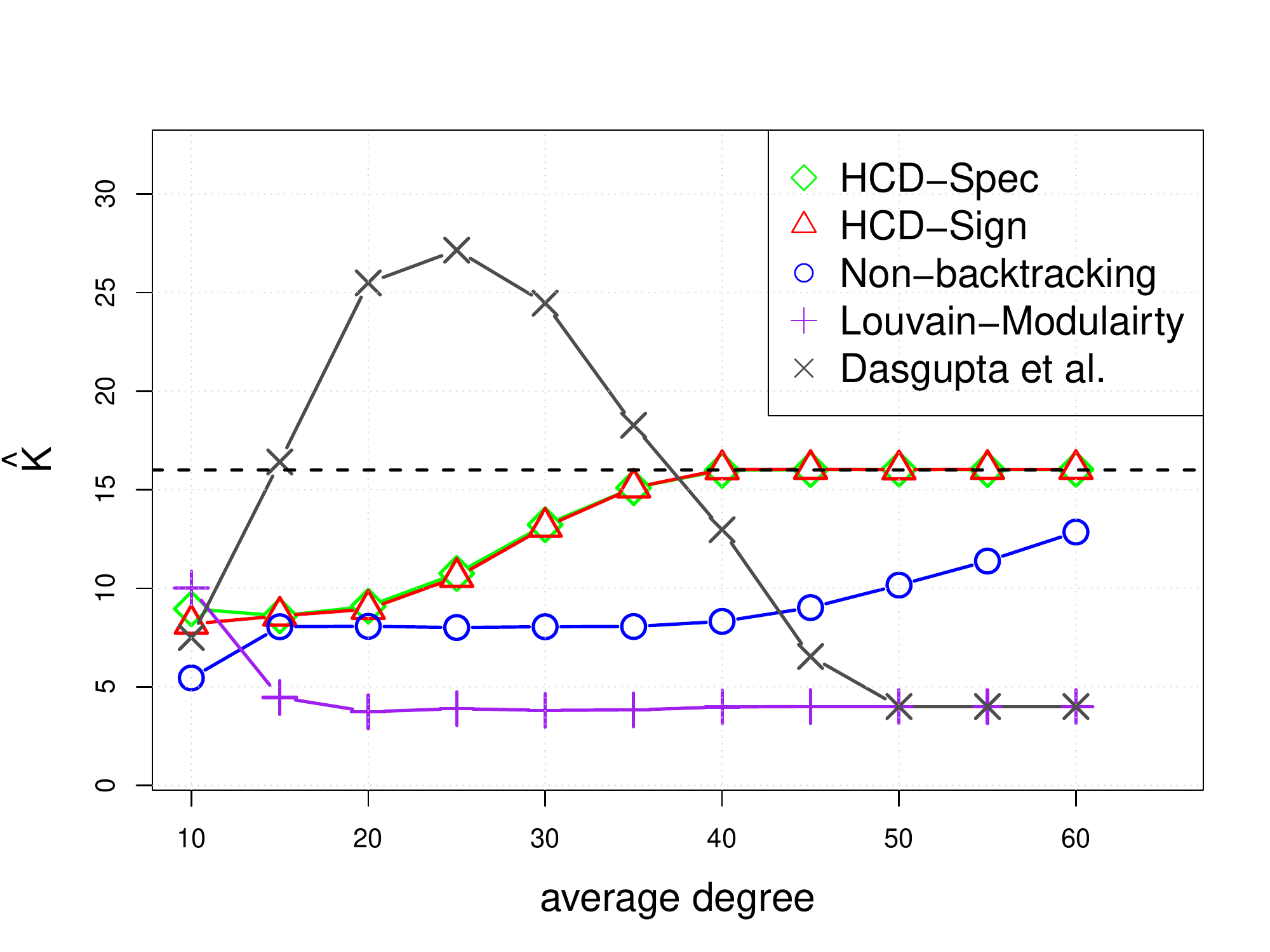}
\caption{Estimated number of communities $\hat{K}$}
\label{fig:VarySparsity-hatK}
\end{subfigure}%
\caption{Results for the balanced BTSBM with $n=3200$ nodes and varying average degree.    For (a), (c), and (d) higher values indicate better performance;  for (b) and (e), lower values are better.  For (f), the truth is shown as the horizontal line.}
\label{fig:VarySparsity}
\end{figure}

\subsection{Unbalanced BTSBM with a complex tree structure}

\begin{figure}[H]
\centering
\begin{subfigure}[t]{0.4\textwidth}
\centering
\includegraphics[width=\textwidth]{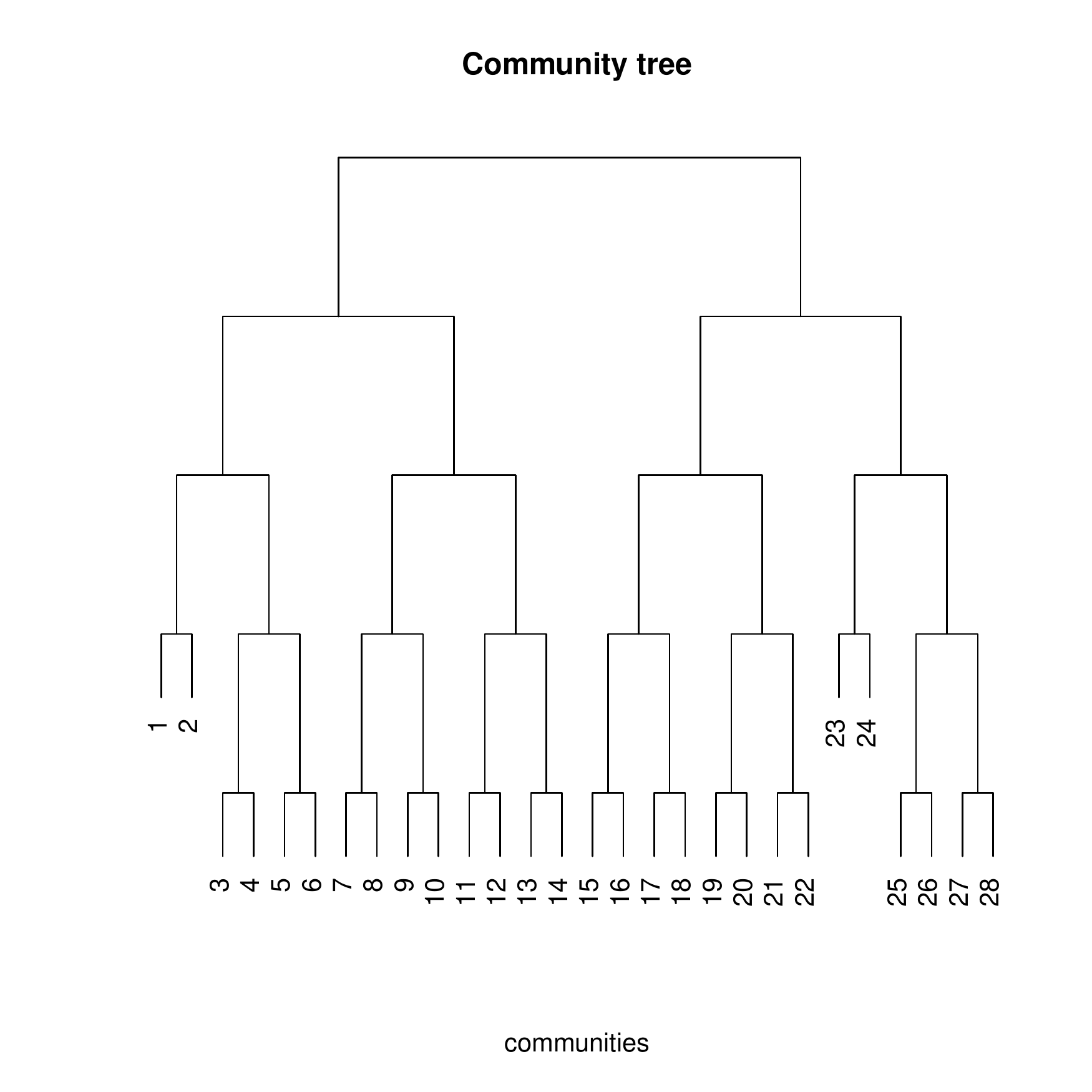}
\caption{Example 1}
\label{fig:Imbalance1}
\end{subfigure}%
\hfill
\begin{subfigure}[t]{0.4\textwidth}
\centering
\includegraphics[width=\textwidth]{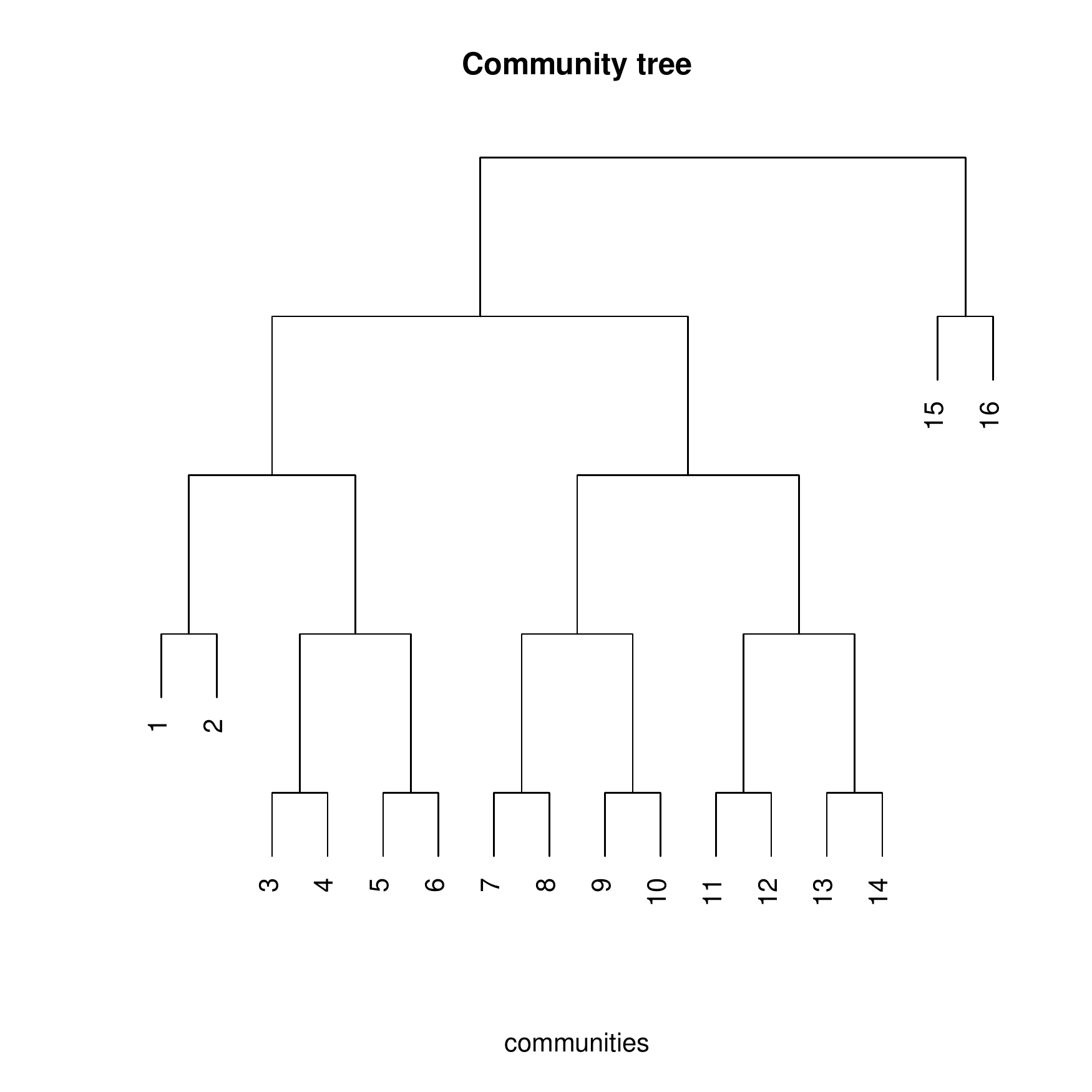}
\caption{Example 2}
\label{fig:Imbalance2}
\end{subfigure}
\caption{Examples of unbalanced community trees.}
\label{fig:Imbalance}
\end{figure}

The BTSBM  gives us the flexibility to generate complex tree structures and communities of varying sizes.    However, it is difficult to control these features with a single parameter such as $K$ or the average degree, so instead we just include two specific examples as an illustration.  The first example corresponds to the hierarchical community structure shown in Figure~\ref{fig:Imbalance1}. It is generated by merging 4 pairs of the original communities from a balanced BTSBM with 32 communities, resulting in $K=28$ total, with 4 communities of 200 nodes each and 24 communities of 100 nodes each.     This is a challenging community detection problem because of the large $K$, and the varying community sizes make it harder.     The second example is shown in Figure~\ref{fig:Imbalance2}. It is generated by merging 2 pairs of leaves one level up, and 8 pairs three levels up in 32 balanced communities again, thus making it even more unbalanced.   This tree has two communities with 800 nodes, two with 200 each, and the remaining 12 communities have 100 nodes each. In both examples,  the average degree is 35.

\begin{table}[ht]
\centering
{\small
\caption{Clustering and model estimation accuracy for Examples 1 and 2}\label{tab:imbalance}
\begin{tabular}{l|l|ccccc}
  \hline
 &Performance metric & HCD-Sign & HCD-Spec & RSC & Modularity &  Dasgupta  \\ 
  \hline
\multirow{5}{*}{Example 1} &  NMI & 0.665 & \textbf{0.677} & 0.552 &0.386 &0.282 \\ 
  &Similarity error& 0.015 & \textbf{0.014} & 0.043 & 0.103 &0.107 \\
&$\hat{P}$ error  & 0.134 &\textbf{0.128} & 0.214 & 0.343 & 0.352 \\ 
& level-1 accuracy  & 0.999 & \textbf{0.999} & 0.992 &0.996 &0.988 \\ 
& level-2 accuracy  & 0.998 & \textbf{0.998} & 0.73 &0.892 &0.973 \\ 
  \hline
\multirow{5}{*}{Example 2} & NMI & 0.753 & \textbf{0.764} & 0.626 & 0.533 &0.335 \\ 
  &Similarity error & 0.025 & \textbf{0.025} & 0.033 & 0.085 & 0.085 \\
  &$\hat{P}$ error & 0.080 & \textbf{0.075} & 0.129 & 0.223 & 0.236 \\ 
& level-1 accuracy  & 0.999 & \textbf{0.999} & 0.993 &0.999 &0.997 \\ 
& level-2 accuracy  & 0.998 & \textbf{0.999} & 0.794 &0.872 &0.982 \\ 
   \hline
\end{tabular}
}
\end{table}

  Table~\ref{tab:imbalance} shows performance for these two examples.   The HCD methods perform better on all tasks, which matches what we observed in balanced settings. All of the methods give reason clustering for level-1 and level-2 mega-communities but HCD methods are clear advantage as one moves down the tree as indicated by NMI, probability estimation error and similarity recovery. Table~\ref{tab:imbalance-K} shows the average selected $\hat{K}$, and the HCD methods are also more accurate in model selection than the others and is inferior to Dasgupta's method for Example 1. However, this only exception may be due to the over-selection of Dasgupta's method we generally observed. 

\begin{table}[ht]
\centering
{\small
\caption{Average $\hat{K}$ in Examples 1 and 2}\label{tab:imbalance-K}
\begin{tabular}{l|l|ccccc}
  \hline
 &Performance metric & HCD-Sign & HCD-Spec & NB & Modularity &  Dasgupta  \\ 
  \hline
Example 1 ($K=28$)&   $\hat{K}$ & 16.19 & 16.27 & 10.45 &3.70 &24.03 \\ 
  \hline
Example 2 ($K=16$)& $\hat{K}$ & 10.11 & 10.11 & 7.43 & 3.53 &23.75 \\ 
   \hline
\end{tabular}
}
\end{table}

\subsection{Networks with no hierarchical communities}

As a benchmark, we also test HCD methods on networks generated from the SBM without a hierarchical structure. These networks are generated from the planted-partition version of the SBM, with $P_{ij} = p$ if $c(i) = c(j)$ and $P_{ij} = \beta p$ if $c(i) \ne c(j)$.    Since all between-community probabilities are the same, no meaningful hierarchy can be defined. To make results fully comparable to the previous settings, we still use $\hat{K}$ estimated by HCD-Spec as input to RSC.

Figures~\ref{fig:SBM-VaryK} and \ref{fig:SBM-VarySparsity} show performance of all the methods under consideration on clustering accuracy (measured by NMI), estimating the probability matrix, and estimating $K$,  Figure~\ref{fig:SBM-VaryK} as a function of $K$ (with average degree set to be 50) and Figure \ref{fig:SBM-VarySparsity} as a function of average degree (with $K = 16$). In this non-hierarchical standard SBM setting, the non-backtracking method estimator gives the best estimate of $K$ and regularized spectral clustering and Louvain modularity have the best clustering accuracy, which is not surprising.  Unlike in hierarchical settings considered so far, HCD-Spec works substantially better than HCD-Sign,  but is still inferior to RSC.    Louvain modularity can be slightly better than RSC in easy settings (small $K$ or high degree), but it degrades quickly as the problem becomes harder. We also observed that the both non-backtracking method and the HCD estimate a small $K$ when the problem becomes really difficult, which could be seen as indirect evidence that there is just not enough signal to find communities in those settings.   

Results in this non-hierarchical setting match empirical observations on clustering in Euclidean space from computer science  \citep{shi2000normalized, kannan2004clusterings}.    In practice, we will not know whether the community structure is hierarchical, although the completely flat structure of the planted partition model is not likely to occur naturally.   However, as Figure~\ref{fig:SBM-VaryK} shows, all methods fail completely with $K > 16$ if there is no hierarchical structure; it might be that for large $K$ community detection is the only possible when there is a hierarchy anyway.  

\begin{figure}[H]
\centering
\begin{subfigure}[t]{0.32\textwidth}
\centering
\includegraphics[width=\textwidth]{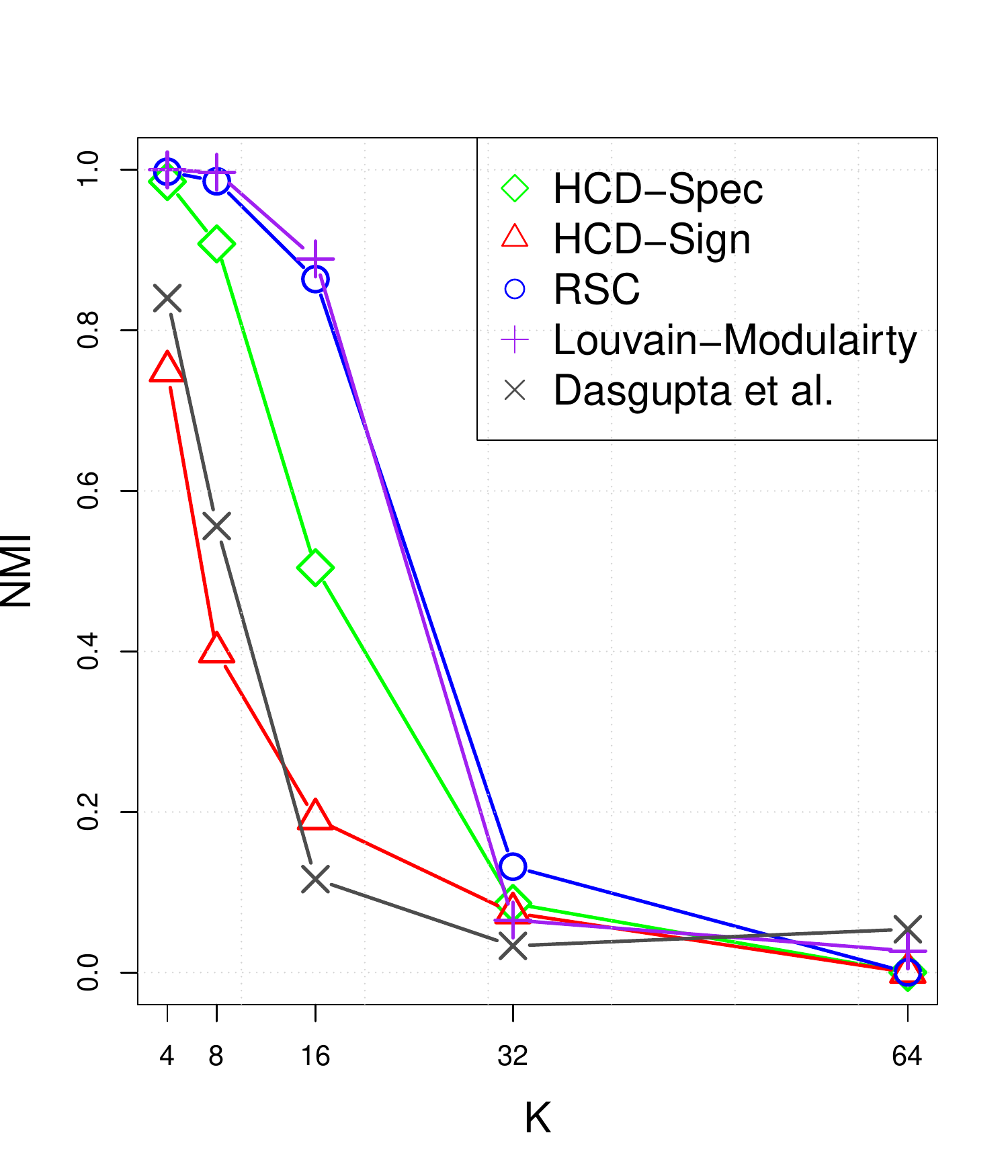}
\caption{NMI}
\label{fig:SBM-VaryK-Coclust}
\end{subfigure}
\hfill
\begin{subfigure}[t]{0.32\textwidth}
\centering
\includegraphics[width=\textwidth]{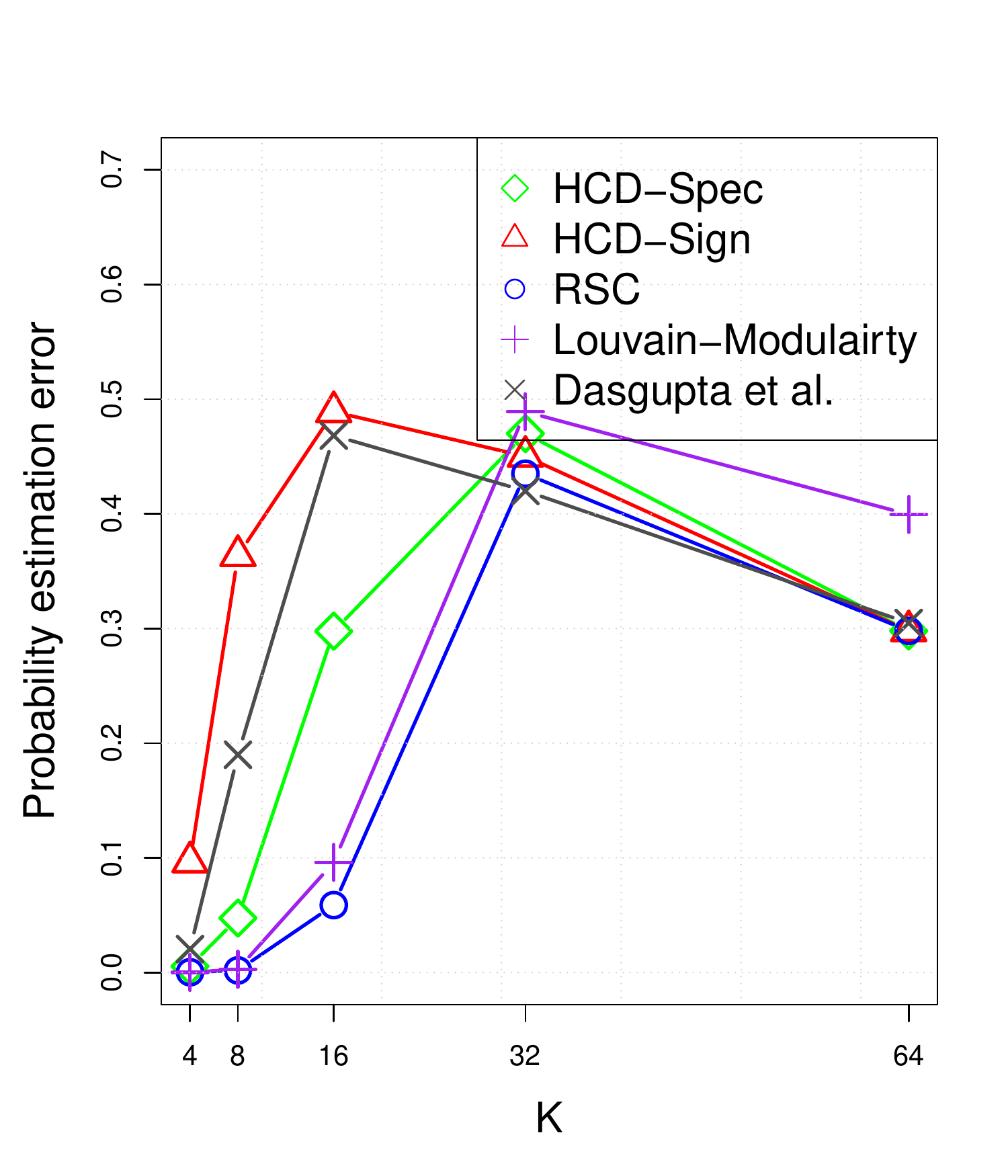}
\caption{Error in $\hat{P}$}
\label{fig:SBM-VaryK-Prob}
\end{subfigure}%
\hfill
\begin{subfigure}[t]{0.32\textwidth}
\centering
\includegraphics[width=\textwidth]{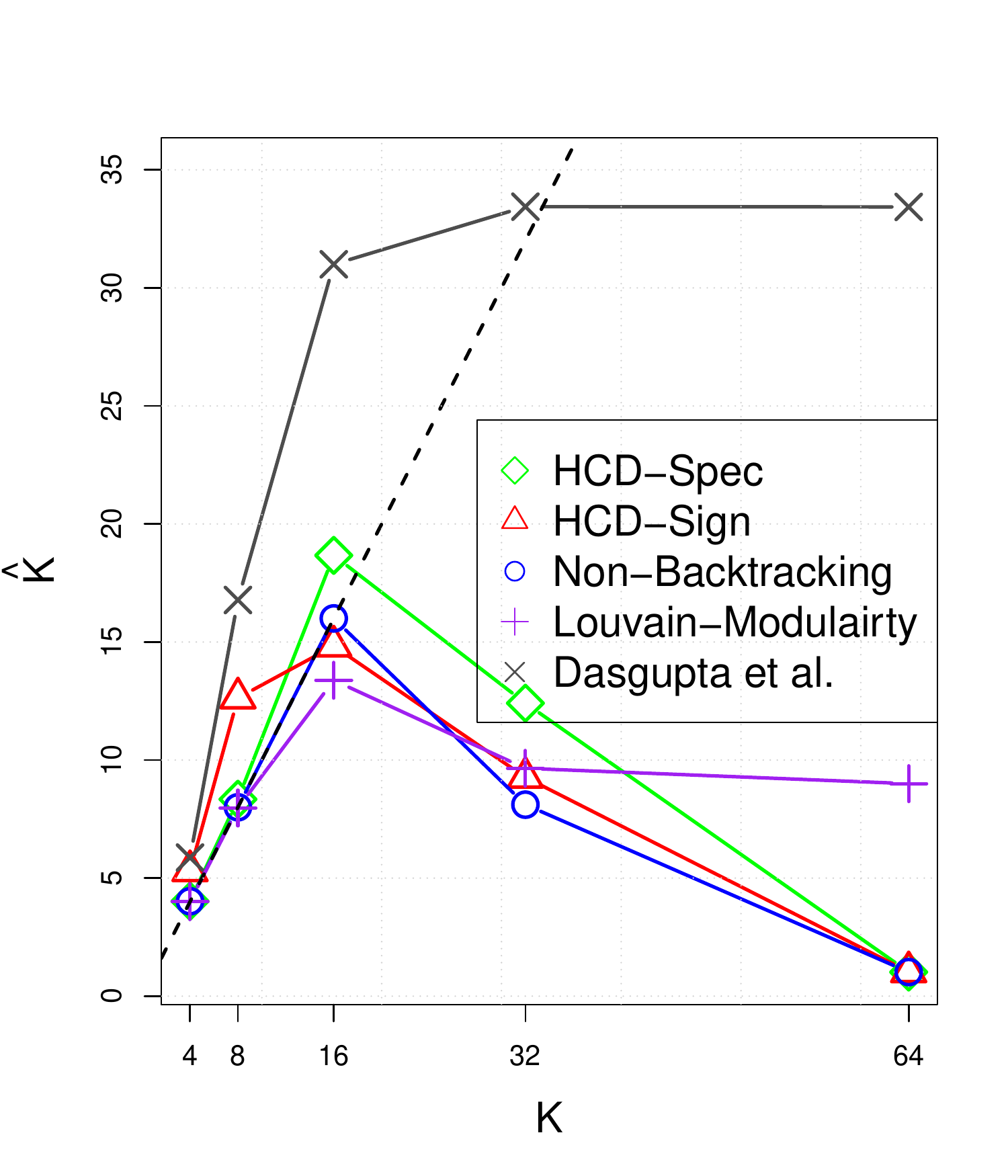}
\caption{$\hat{K}$}
\label{fig:SBM-VaryK-hatK}
\end{subfigure}%
\caption{Results for all methods on the SBM with no hierarchy with $K$ varied and the average degree being 50.}
\label{fig:SBM-VaryK}
\end{figure}

\begin{figure}[H]
\centering
\begin{subfigure}[t]{0.33\textwidth}
\centering
\includegraphics[width=\textwidth]{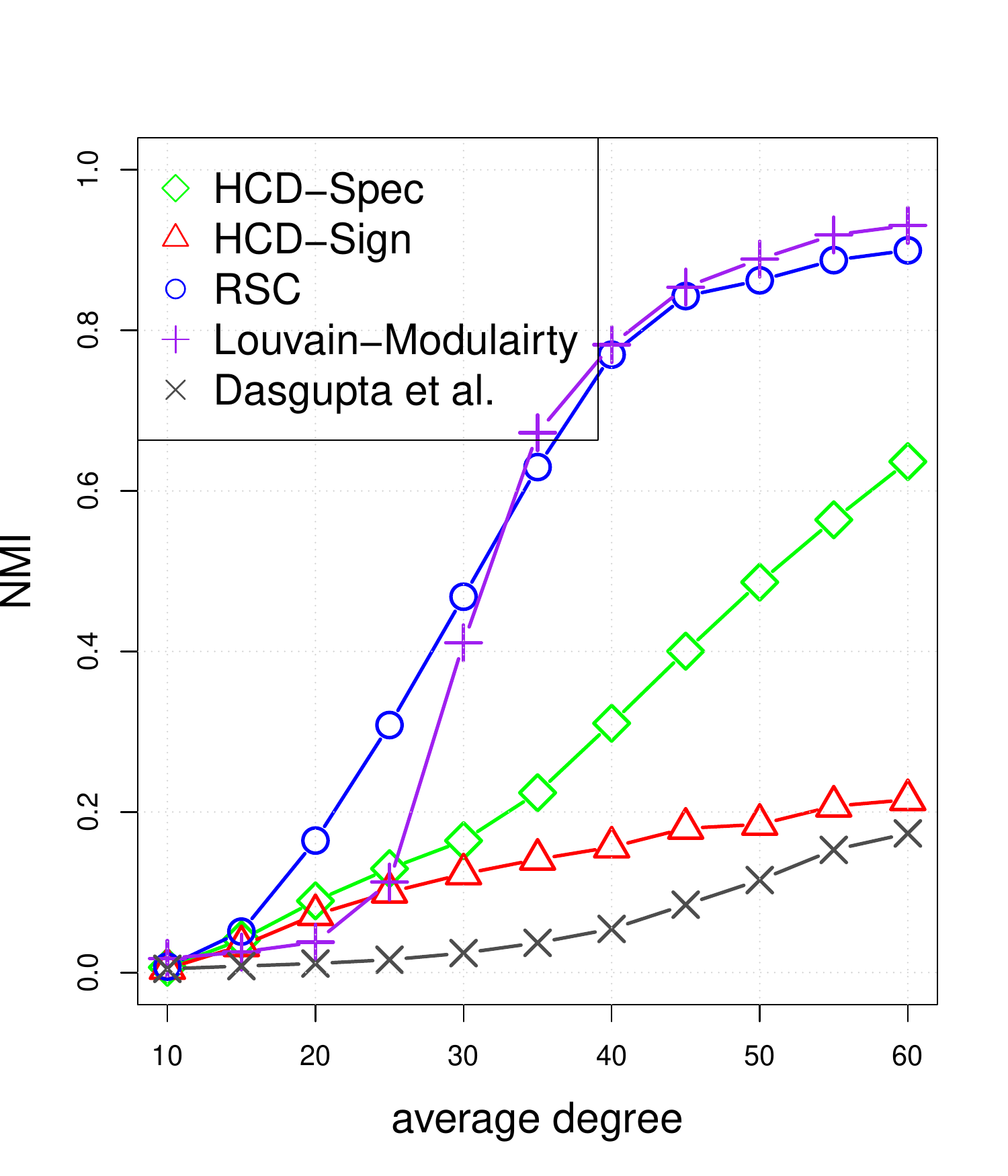}
\caption{NMI}
\label{fig:SBM-VarySparsity-Coclust}
\end{subfigure}
\hfill
\begin{subfigure}[t]{0.33\textwidth}
\centering
\includegraphics[width=\textwidth]{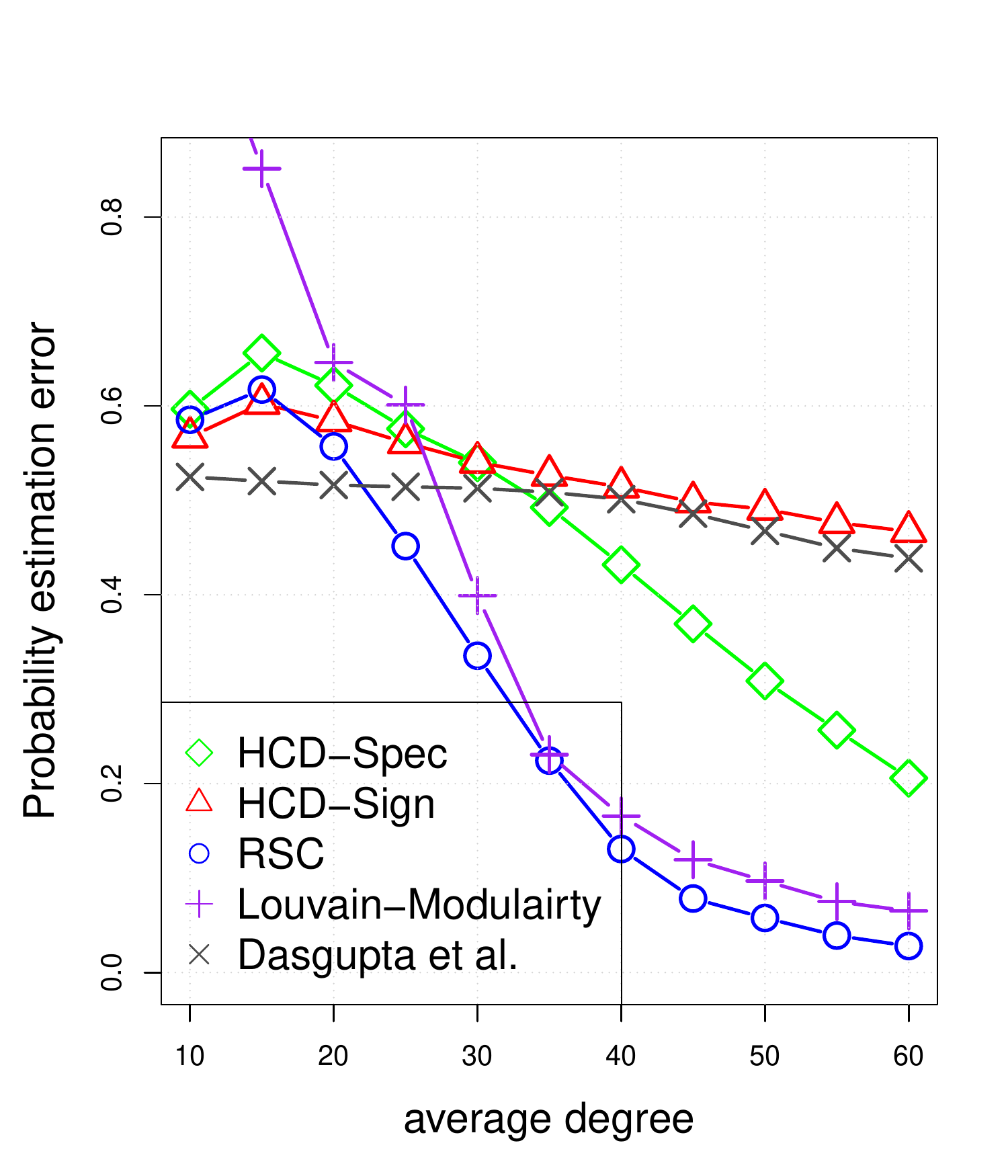}
\caption{Error in $\hat{P}$}
\label{fig:SBM-VarySparsity-Prob}
\end{subfigure}%
\hfill
\begin{subfigure}[t]{0.33\textwidth}
\centering
\includegraphics[width=\textwidth]{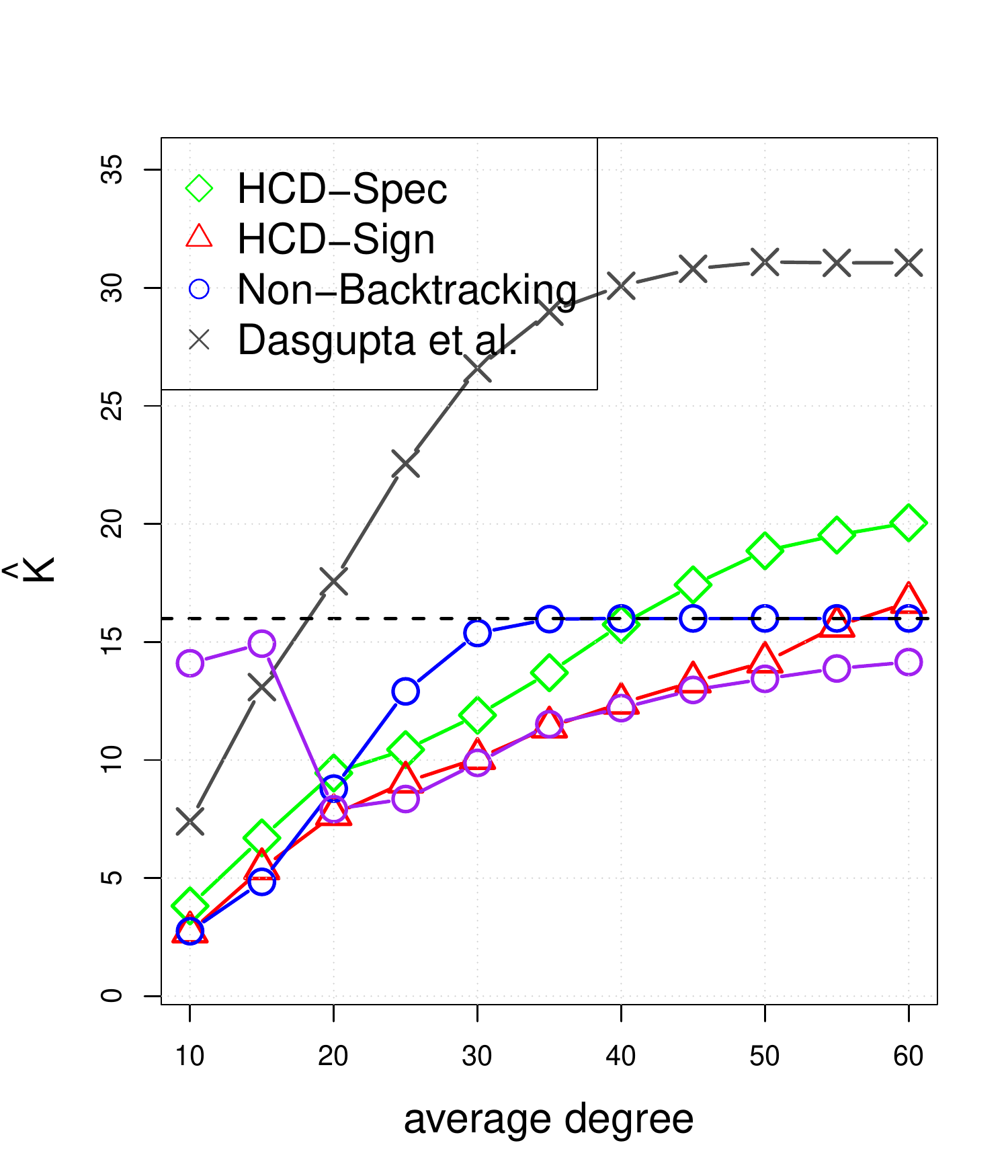}
\caption{$\hat{K}$}
\label{fig:SBM-VarySparsity-hatK}
\end{subfigure}%

\caption{Results for all methods on the SBM with no hierarchy with average degree varied. $K$ is fixed to be 16 in this example.}
\label{fig:SBM-VarySparsity}
\end{figure}


 \section{Hierarchical communities in genes associated with anemia}\label{sec:app}

We illustrate the HCD method by fitting hierarchical communities to a network of genes that have been found to associate with anemia.  Another detailed example of hierarchical communities of topics in statistical research literature is included in Appendix~\ref{sec:stat-net}.  

Anemia is a blood disease that is caused by either a decrease in the number of red blood cells, or a general lower capability of the blood for oxygen transport.  Arguably the most common blood disorder, it was estimated by the World Health Organization to affect roughly a quarter of humans globally in 2005 \citep{de2008worldwide}. Causes for anemia include acute blood loss, an increased breakdown of red blood cell populations or a decreased production of red blood cells. The causes for an increased breakdown of red blood cells also include genetic conditions, e.g., sickle-cell anemia. Here, we investigate associations between genes in the context of anemia using the DigSee framework \citep{kim2017analysis}, which identifies disease-related genes from Medline abstracts. We look for associations of anemia with biological events related to mutation, gene expression, regulation and transcription. A total of 4449 papers were found that link genes to anemia. We only retain papers that link at least two genes to anemia, resulting in a set of 1580 papers that relate anemia to the biological events defined above. In total, 1657 genes are mentioned across these papers. We represent the gene co-occurrence information by a network in which two genes are connected if they were mentioned together in at least two papers. Following \cite{wang2016discussion}, we focus on the 3-core of the largest connected component of the network, ignoring the periphery which typically carries little information about the structure of the core. The resulting network, shown in Figure 7a, has 140 nodes (genes) and the average node degree is 7.03.    The hierarchical clustering result on the 140 genes is robust to the core extraction step, since it is similar to the result on the largest connected component of the initial network (see Appendix~\ref{sec:RSC-gene}).

The community labels and hierarchy returned by the HCD-Sign algorithm are given in Figures~\ref{fig:GeneNetwork} and \ref{fig:GeneTree}, respectively. The gene communities turn out to be associated with different types of anemia or different processes causing the disease. We use the MSigDB database \citep{liberzon2011molecular} to perform enrichment analysis by computing overlaps with known functional gene sets, as a way to interpret the communities. Table~\ref{tab:geneSets} provides a summary of high-level interpretations.

\begin{figure}[H]
\centering
\begin{subfigure}[t]{0.5\textwidth}
\centering
\includegraphics[width=\textwidth]{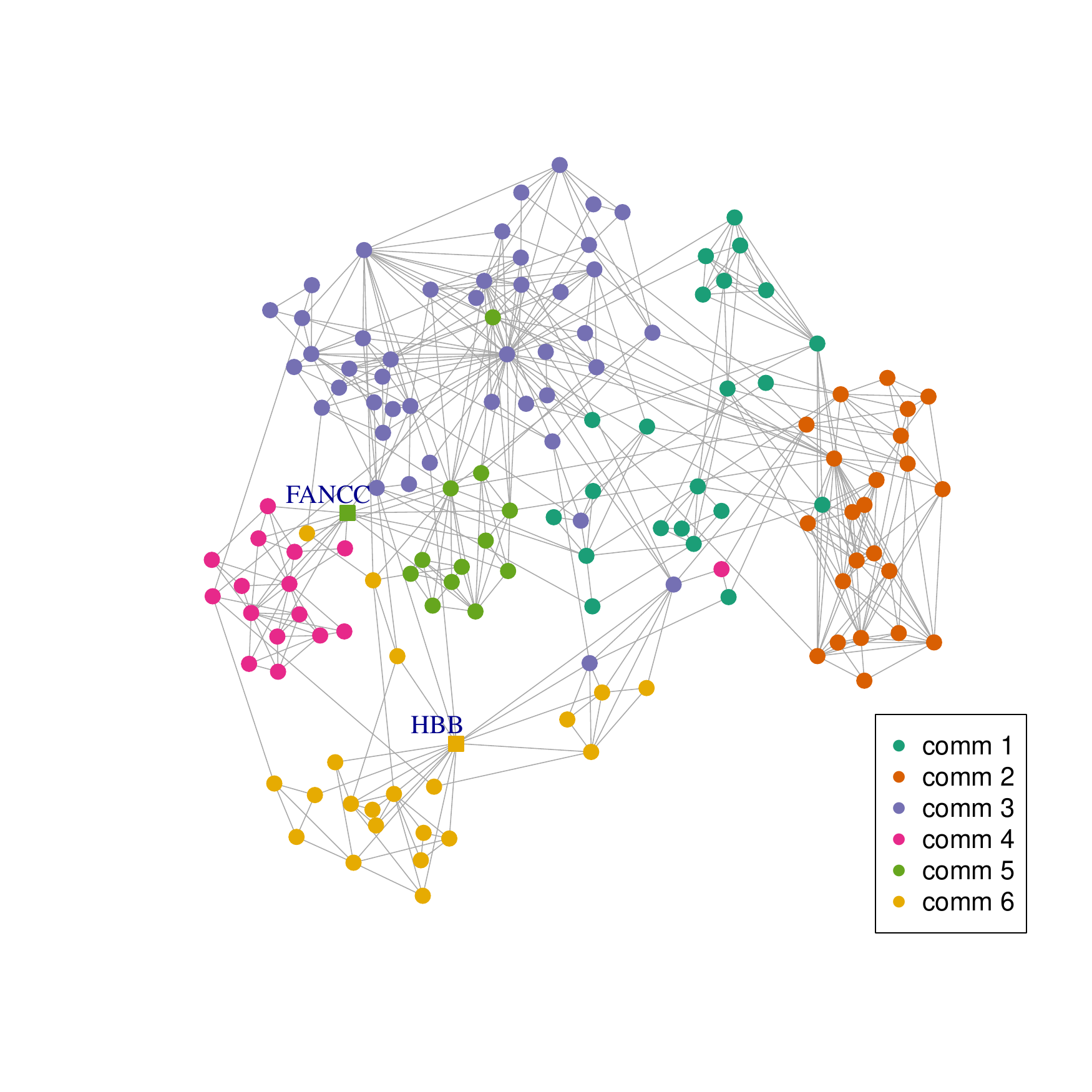}
\caption{The network and detected communities.}
\label{fig:GeneNetwork}
\end{subfigure}%
\hfill
\begin{subfigure}[t]{0.5\textwidth}
\centering
\includegraphics[width=\textwidth]{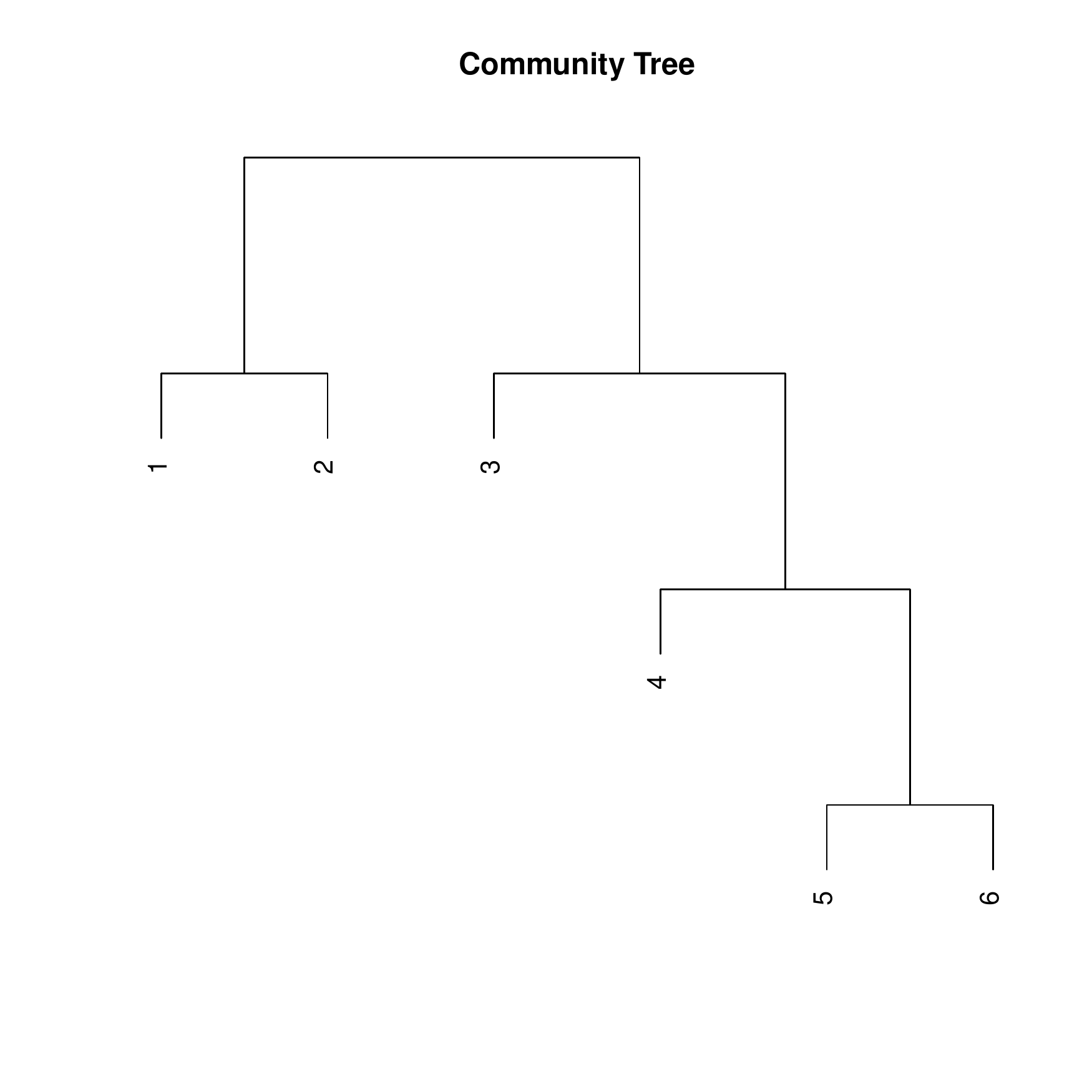}
\caption{Estimated hierarchy of communities.}
\label{fig:GeneTree}
\end{subfigure}
\caption{Hierarchical communities in a network of 140 genes associated with anemia.}
\label{fig:AnemiaResult}
\end{figure}

The left branch of the dendrogram consists of communities 1 and 2. The major biological processes related to these communities are both involved in the activation of cells due to exposure to a factor, leading to the activation of immune cells and hence contributing to an immune response \citep{liberzon2011molecular}. Indeed, many genes in these communities are associated with ``cluster of differentiation molecules", which are genes coding for cell surface markers that can be used by the immune system to activate cells and communicate, amongst others \citep{zola2007cd}.

The right branch of the hierarchical splitting tree consists of four communities resulting from three sequential splits. Community 3 contains several genes coding for interleukins, the signaling molecules that are mainly used by the immune system, and other signaling related genes such as MAP kinases. The most significant gene set for this community, response to cytokine, confirms that these genes are related to molecular signaling within the immune system. Community 4 is well separated in the network. Nine out of the 16 genes in the community are related to the FANC protein core complex, a set of proteins that have been linked to fanconi anemia. The fanconi anemia pathway is involved in removing DNA interstrand crosslinks (ICL), a form of DNA repair, during DNA replication and transcription \citep{ceccaldi2016fanconi}. Studies have related the mutations in these genes (typically FANCA, FANC and FANCG) to bone marrow failure \citep{schneider2017gli1+}. Related, the most significantly enriched gene ontology term  of this community is DNA repair. The conjecture that the genes in community 4 are related to fanconi anemia can be verified by checking the keywords in the titles. Specifically, for each community, we define its ``signal'' papers to be the set of papers that link at least two genes within the same community. Within this set, we found 68\% of the signal papers for community 4 have the word ``fanconi" in their titles, significantly more than other communities (see Figure~\ref{fig:Fanconi}). The most significantly enriched biological process for genes in community 5 is related to cytokine signaling. It contains one of the core genes of the FANC protein core complex (the gene ``FANCC" in Figure \ref{fig:GeneNetwork}), which interestingly connects many FANC genes with the other genes in the network, therefore acting as a connection between the fanconi anemia community and the remainder of the network. Finally, the genes in community 6 are related to oxygen transport. Note that impaired oxygen transport can be a result of insufficient healthy red blood cells to transport oxygen around the body. Sickle-cell anemia, caused by a mutation in the Hemoglobin- Beta (HBB) gene, results in red blood cells with a sickle-like morphology that impairs their oxygen transport. We hypothesize that the genes in this community are thus related to sickle-cell anemia. Indeed, the sickle-cell anemia  causative gene, HBB, is a hub in community 6 (see Figure \ref{fig:GeneNetwork}). Correspondingly, the fraction of signal papers for community $6$ that mention the word `sickle' or `thalassemia' is significantly higher than in other communities (see Figure \ref{fig:Sickle}).
\begin{table}
\caption{\label{tab:geneSets} Most significantly enriched gene sets for each community.}
\begin{center}
\begin{tabular}{ c|l } 
 \hline
 Community & Most significantly enriched gene set  \\ \hline
1 & Cell activation involved in immune response  \\ 
2 & Cell activation  \\ 
3 & Response to cytokine\\
 4 & DNA repair\\
5 & Cytokine mediated signaling pathway\\
6 & Oxygen transport\\
 \hline
\end{tabular}
\end{center}
\end{table}

For comparison, we also applied regularized spectral clustering to this the dataset, with results reported in Appendix~\ref{sec:RSC-gene}. The clustering labels of the RSC match 72\% of the HCD labels. However, community sizes from RSC are very unbalanced and interpretation is more difficult, and of course there is no hierarchy. 
\begin{figure}[H]
\centering
\begin{subfigure}[t]{0.5\textwidth}
\centering
\includegraphics[width=\textwidth]{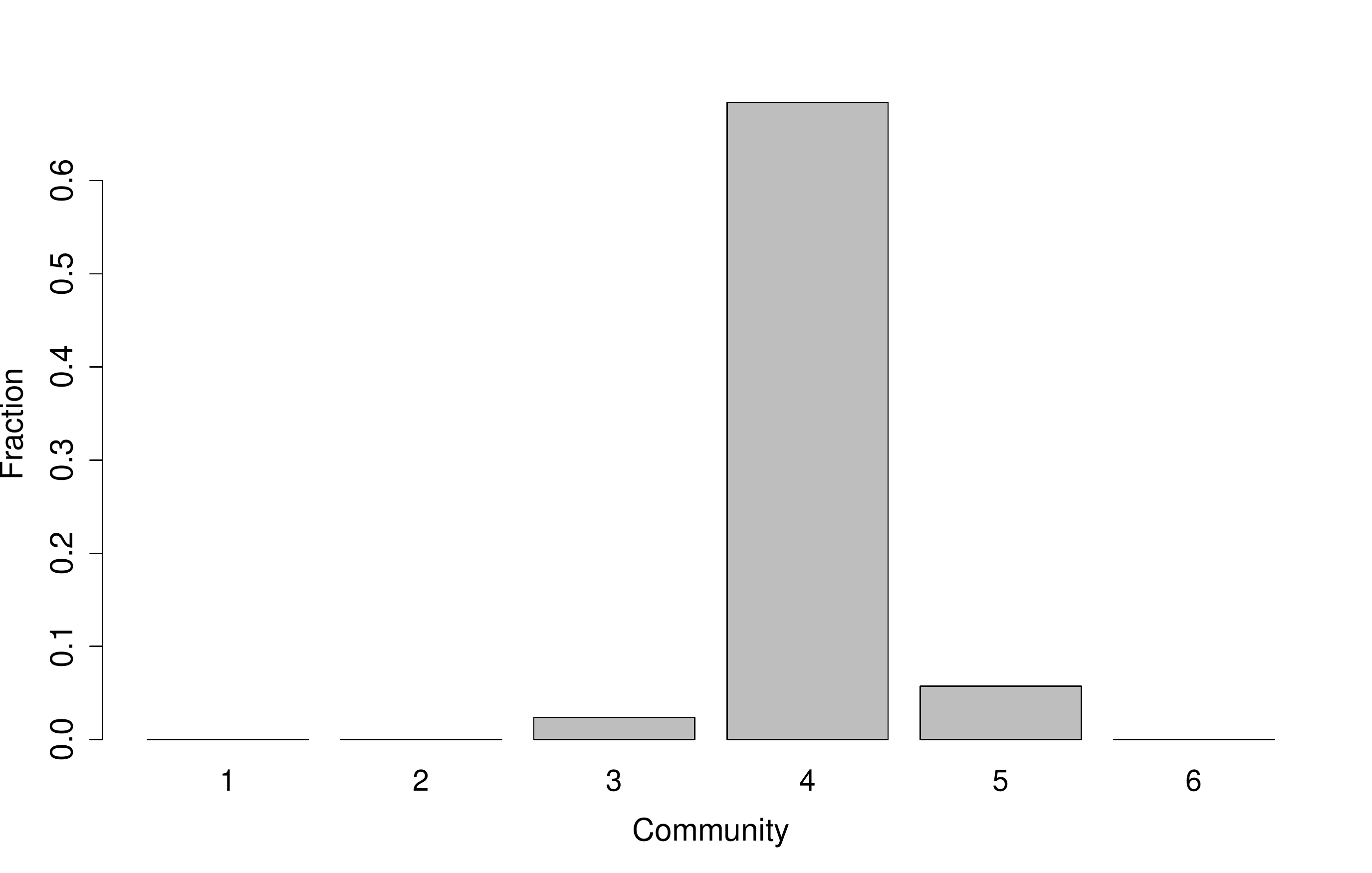}
\caption{Papers with ``fanconi" in the title.}
\label{fig:Fanconi}
\end{subfigure}%
\hfill
\begin{subfigure}[t]{0.5\textwidth}
\centering
\includegraphics[width=\textwidth]{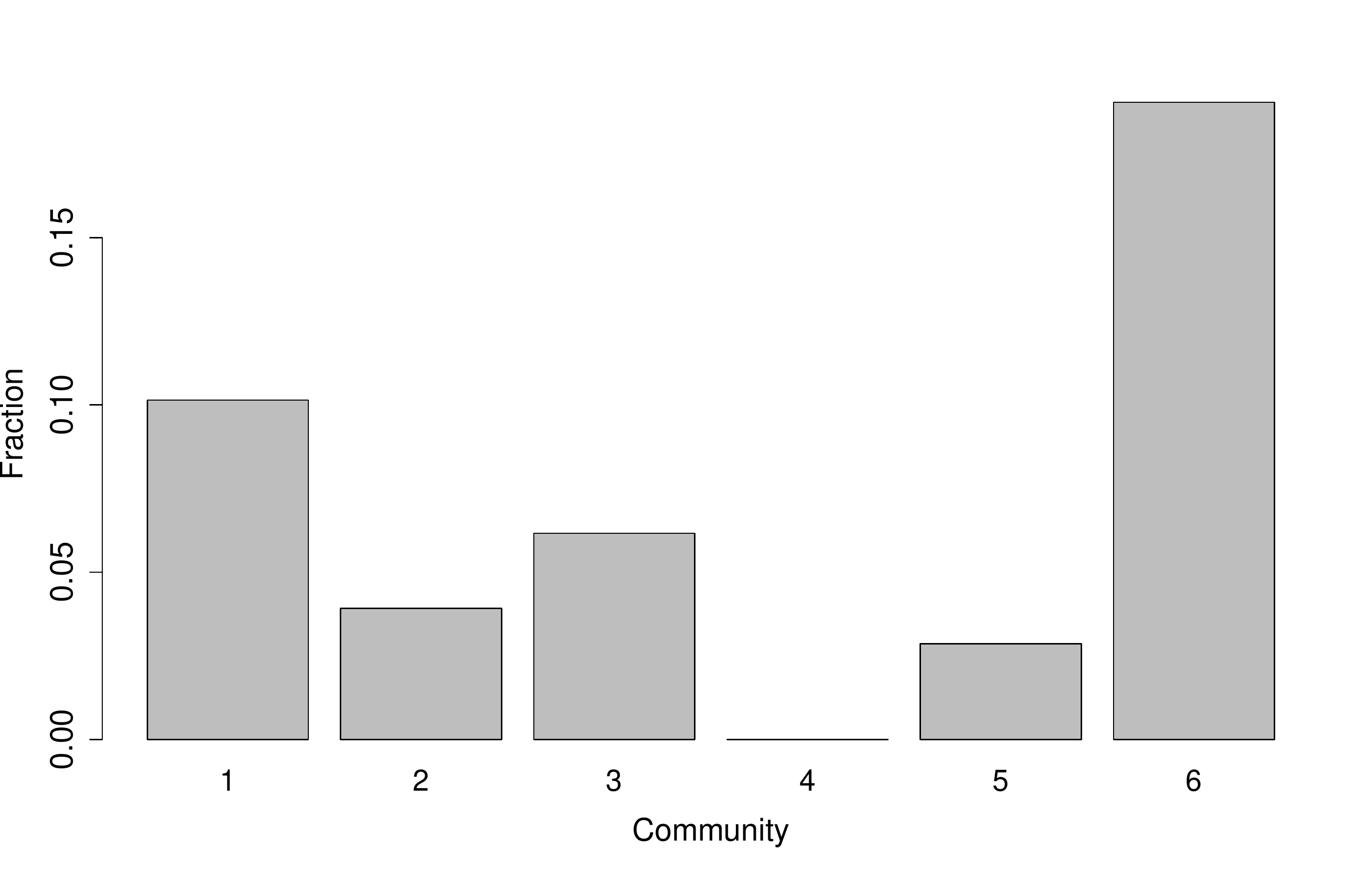}
\caption{Papers with ``sickle" or ``thalassemia" in the title.}
\label{fig:Sickle}
\end{subfigure}
\caption{The proportion of papers with certain key words in the title, out of the signal papers for each community (papers that mention at least 2 genes within that community).}
\label{fig:paper-proportion}
\end{figure}


\section{Discussion}\label{sec:discussion}

We studied recursive partitioning as a framework for hierarchical community detection and proposed two specific algorithms for implementing it, using either spectral clustering or sign splitting.    This framework requires a stopping rule to decide when to stop splitting communities, but otherwise is tuning-free.  We have shown that in certain regimes recursive partitioning has significant advantages in computational efficiency, community detection accuracy,  and hierarchal structure recovery, compared with $K$-way partitioning.   An important feature of hierarchical splitting is that it can recover high-level mega-communities correctly even when all $K$ smaller communities cannot be recovered.  It also provides a natural interpretable representation of the community structure,  and induces a tree-based similarity measure that does not depend on community label permutations and allows us to quantitatively compare entire hierarchies of communities.   The algorithm itself is model-free, but we showed it works under a new model we introduced, the binary tree SBM.   Under this model,  the hierarchical algorithm based on sign splitting is consistent for estimating both individual communities and the entire hierarchy.    We conjecture that the advantage of hierarchical clustering carries over to more general models;  more work will be needed to establish this formally.

\ack{We thank the Associate Editor and the referees for helpful and constructive comments.  }

\bibliography{CommonBib}{}

\begin{thebibliography}{83}
\providecommand{\natexlab}[1]{#1}
\providecommand{\url}[1]{\texttt{#1}}
\expandafter\ifx\csname urlstyle\endcsname\relax
  \providecommand{\doi}[1]{doi: #1}\else
  \providecommand{\doi}{doi: \begingroup \urlstyle{rm}\Url}\fi

\bibitem[Abbe(2018)]{abbe2017community}
E.~Abbe.
\newblock Community detection and stochastic block models: Recent developments.
\newblock \emph{Journal of Machine Learning Research}, 18\penalty0
  (177):\penalty0 1--86, 2018.

\bibitem[Abbe et~al.(2017)Abbe, Fan, Wang, and Zhong]{abbe2017entrywise}
E.~Abbe, J.~Fan, K.~Wang, and Y.~Zhong.
\newblock Entrywise eigenvector analysis of random matrices with low expected
  rank.
\newblock \emph{arXiv preprint arXiv:1709.09565}, 2017.

\bibitem[Airoldi et~al.(2008)Airoldi, Blei, Fienberg, and
  Xing]{airoldi2008mixed}
E.~M. Airoldi, D.~M. Blei, S.~E. Fienberg, and E.~P. Xing.
\newblock Mixed membership stochastic blockmodels.
\newblock \emph{Journal of Machine Learning Research}, 9\penalty0
  (Sep):\penalty0 1981--2014, 2008.

\bibitem[Amini and Levina(2018)]{amini2018semidefinite}
A.~A. Amini and E.~Levina.
\newblock On semidefinite relaxations for the block model.
\newblock \emph{The Annals of Statistics}, 46\penalty0 (1):\penalty0 149--179,
  2018.

\bibitem[Amini et~al.(2013)Amini, Chen, Bickel, and Levina]{amini2013pseudo}
A.~A. Amini, A.~Chen, P.~J. Bickel, and E.~Levina.
\newblock Pseudo-likelihood methods for community detection in large sparse
  networks.
\newblock \emph{The Annals of Statistics}, 41\penalty0 (4):\penalty0
  2097--2122, 2013.

\bibitem[Arenas et~al.(2008)Arenas, Fernandez, and Gomez]{arenas2008analysis}
A.~Arenas, A.~Fernandez, and S.~Gomez.
\newblock Analysis of the structure of complex networks at different resolution
  levels.
\newblock \emph{New journal of physics}, 10\penalty0 (5):\penalty0 053039,
  2008.

\bibitem[Balakrishnan et~al.(2011)Balakrishnan, Xu, Krishnamurthy, and
  Singh]{balakrishnan2011noise}
S.~Balakrishnan, M.~Xu, A.~Krishnamurthy, and A.~Singh.
\newblock Noise thresholds for spectral clustering.
\newblock In J.~Shawe-Taylor, R.~S. Zemel, P.~L. Bartlett, F.~Pereira, and
  K.~Q. Weinberger, editors, \emph{Advances in Neural Information Processing
  Systems 24}, pages 954--962. Curran Associates, Inc., 2011.
\newblock URL
  \url{http://papers.nips.cc/paper/4342-noise-thresholds-for-spectral-clustering.pdf}.

\bibitem[Bickel et~al.(2013)Bickel, Choi, Chang, and
  Zhang]{bickel2013asymptotic}
P.~Bickel, D.~Choi, X.~Chang, and H.~Zhang.
\newblock Asymptotic normality of maximum likelihood and its variational
  approximation for stochastic blockmodels.
\newblock \emph{The Annals of Statistics}, 41\penalty0 (4):\penalty0
  1922--1943, 2013.

\bibitem[Bickel and Chen(2009)]{bickel2009nonparametric}
P.~J. Bickel and A.~Chen.
\newblock A nonparametric view of network models and newman--girvan and other
  modularities.
\newblock \emph{Proceedings of the National Academy of Sciences}, 106\penalty0
  (50):\penalty0 21068--21073, 2009.

\bibitem[Bickel and Sarkar(2016)]{bickel2013hypothesis}
P.~J. Bickel and P.~Sarkar.
\newblock Hypothesis testing for automated community detection in networks.
\newblock \emph{Journal of the Royal Statistical Society: Series B (Statistical
  Methodology)}, 78\penalty0 (1):\penalty0 253--273, 2016.

\bibitem[Bickel et~al.(2011)Bickel, Chen, and Levina]{bickel2011method}
P.~J. Bickel, A.~Chen, and E.~Levina.
\newblock The method of moments and degree distributions for network models.
\newblock \emph{The Annals of Statistics}, 39\penalty0 (5):\penalty0
  2280--2301, 2011.

\bibitem[Blondel et~al.(2008)Blondel, Guillaume, Lambiotte, and
  Lefebvre]{blondel2008fast}
V.~D. Blondel, J.-L. Guillaume, R.~Lambiotte, and E.~Lefebvre.
\newblock Fast unfolding of communities in large networks.
\newblock \emph{Journal of statistical mechanics: theory and experiment},
  2008\penalty0 (10):\penalty0 P10008, 2008.

\bibitem[Blundell and Teh(2013)]{blundell2013bayesian}
C.~Blundell and Y.~W. Teh.
\newblock Bayesian hierarchical community discovery.
\newblock In \emph{Advances in Neural Information Processing Systems}, pages
  1601--1609, 2013.

\bibitem[Cai and Li(2015)]{cai2015robust}
T.~T. Cai and X.~Li.
\newblock Robust and computationally feasible community detection in the
  presence of arbitrary outlier nodes.
\newblock \emph{The Annals of Statistics}, 43\penalty0 (3):\penalty0
  1027--1059, 2015.

\bibitem[Cape et~al.(2019)Cape, Tang, and Priebe]{cape2019two}
J.~Cape, M.~Tang, and C.~E. Priebe.
\newblock The two-to-infinity norm and singular subspace geometry with
  applications to high-dimensional statistics.
\newblock \emph{The Annals of Statistics}, 47\penalty0 (5):\penalty0
  2405--2439, 2019.

\bibitem[Ceccaldi et~al.(2016)Ceccaldi, Sarangi, and
  D'Andrea]{ceccaldi2016fanconi}
R.~Ceccaldi, P.~Sarangi, and A.~D. D'Andrea.
\newblock The fanconi anaemia pathway: new players and new functions.
\newblock \emph{Nature reviews Molecular cell biology}, 17\penalty0
  (6):\penalty0 337, 2016.

\bibitem[Celisse et~al.(2012)Celisse, Daudin, and
  Pierre]{celisse2012consistency}
A.~Celisse, J.-J. Daudin, and L.~Pierre.
\newblock Consistency of maximum-likelihood and variational estimators in the
  stochastic block model.
\newblock \emph{Electronic Journal of Statistics}, 6:\penalty0 1847--1899,
  2012.

\bibitem[Chatterjee(2015)]{chatterjee2015matrix}
S.~Chatterjee.
\newblock Matrix estimation by universal singular value thresholding.
\newblock \emph{The Annals of Statistics}, 43\penalty0 (1):\penalty0 177--214,
  2015.

\bibitem[Chen and Lei(2018)]{chen2014network}
K.~Chen and J.~Lei.
\newblock Network cross-validation for determining the number of communities in
  network data.
\newblock \emph{Journal of the American Statistical Association}, 113\penalty0
  (521):\penalty0 241--251, 2018.

\bibitem[Chen and Xu(2014)]{chen2014statistical}
Y.~Chen and J.~Xu.
\newblock Statistical-computational tradeoffs in planted problems and submatrix
  localization with a growing number of clusters and submatrices.
\newblock \emph{arXiv preprint arXiv:1402.1267}, 2014.

\bibitem[Chen et~al.(2012)Chen, Sanghavi, and Xu]{chen2012clustering}
Y.~Chen, S.~Sanghavi, and H.~Xu.
\newblock Clustering sparse graphs.
\newblock In \emph{Advances in neural information processing systems}, pages
  2204--2212, 2012.

\bibitem[Clauset et~al.(2004)Clauset, Newman, and Moore]{clauset2004finding}
A.~Clauset, M.~E. Newman, and C.~Moore.
\newblock Finding community structure in very large networks.
\newblock \emph{Physical review E}, 70\penalty0 (6):\penalty0 066111, 2004.

\bibitem[Clauset et~al.(2008)Clauset, Moore, and
  Newman]{clauset2008hierarchical}
A.~Clauset, C.~Moore, and M.~E. Newman.
\newblock Hierarchical structure and the prediction of missing links in
  networks.
\newblock \emph{Nature}, 453\penalty0 (7191):\penalty0 98--101, 2008.

\bibitem[Dasgupta et~al.(2006)Dasgupta, Hopcroft, Kannan, and
  Mitra]{dasgupta2006spectral}
A.~Dasgupta, J.~Hopcroft, R.~Kannan, and P.~Mitra.
\newblock Spectral clustering by recursive partitioning.
\newblock In \emph{European Symposium on Algorithms}, pages 256--267. Springer,
  2006.

\bibitem[De~Benoist et~al.(2008)De~Benoist, Cogswell, Egli, and
  McLean]{de2008worldwide}
B.~De~Benoist, M.~Cogswell, I.~Egli, and E.~McLean.
\newblock Worldwide prevalence of anaemia 1993-2005; who global database of
  anaemia.
\newblock 2008.

\bibitem[de~B{\'e}zieux et~al.(2020)de~B{\'e}zieux, Street, Fischer, Van~den
  Berge, Chance, Risso, Gillis, Ngai, Purdom, and Dudoit]{de2020improving}
H.~R. de~B{\'e}zieux, K.~Street, S.~Fischer, K.~Van~den Berge, R.~Chance,
  D.~Risso, J.~Gillis, J.~Ngai, E.~Purdom, and S.~Dudoit.
\newblock Improving replicability in single-cell rna-seq cell type discovery
  with dune.
\newblock \emph{bioRxiv}, 2020.

\bibitem[Eldridge et~al.(2017)Eldridge, Belkin, and
  Wang]{eldridge2017unperturbed}
J.~Eldridge, M.~Belkin, and Y.~Wang.
\newblock Unperturbed: spectral analysis beyond {Davis-Kahan}.
\newblock \emph{arXiv preprint arXiv:1706.06516}, 2017.

\bibitem[Fortunato(2010)]{fortunato2010community}
S.~Fortunato.
\newblock Community detection in graphs.
\newblock \emph{Physics Reports}, 486\penalty0 (3):\penalty0 75--174, 2010.

\bibitem[Gao and Lafferty(2017)]{gao2017testing}
C.~Gao and J.~Lafferty.
\newblock Testing for global network structure using small subgraph statistics.
\newblock \emph{arXiv preprint arXiv:1710.00862}, 2017.

\bibitem[Gao et~al.(2016)Gao, Lu, Ma, and Zhou]{gao2016optimal}
C.~Gao, Y.~Lu, Z.~Ma, and H.~H. Zhou.
\newblock Optimal estimation and completion of matrices with biclustering
  structures.
\newblock \emph{Journal of Machine Learning Research}, 17\penalty0
  (161):\penalty0 1--29, 2016.

\bibitem[Gao et~al.(2017)Gao, Ma, Zhang, and Zhou]{gao2015achieving}
C.~Gao, Z.~Ma, A.~Y. Zhang, and H.~H. Zhou.
\newblock Achieving optimal misclassification proportion in stochastic block
  models.
\newblock \emph{The Journal of Machine Learning Research}, 18\penalty0
  (1):\penalty0 1980--2024, 2017.

\bibitem[Girvan and Newman(2002)]{girvan2002community}
M.~Girvan and M.~E. Newman.
\newblock Community structure in social and biological networks.
\newblock \emph{Proceedings of the national academy of sciences}, 99\penalty0
  (12):\penalty0 7821--7826, 2002.

\bibitem[Gleiser and Danon(2003)]{gleiser2003community}
P.~M. Gleiser and L.~Danon.
\newblock Community structure in jazz.
\newblock \emph{Advances in complex systems}, 6\penalty0 (04):\penalty0
  565--573, 2003.

\bibitem[Goldenberg et~al.(2010)Goldenberg, Zheng, Fienberg, and
  Airoldi]{goldenberg2010survey}
A.~Goldenberg, A.~X. Zheng, S.~E. Fienberg, and E.~M. Airoldi.
\newblock A survey of statistical network models.
\newblock \emph{Foundations and Trends{\textregistered} in Machine Learning},
  2\penalty0 (2):\penalty0 129--233, 2010.

\bibitem[Handcock et~al.(2007)Handcock, Raftery, and
  Tantrum]{handcock2007model}
M.~S. Handcock, A.~E. Raftery, and J.~M. Tantrum.
\newblock Model-based clustering for social networks.
\newblock \emph{Journal of the Royal Statistical Society: Series A (Statistics
  in Society)}, 170\penalty0 (2):\penalty0 301--354, 2007.

\bibitem[Hoff(2008)]{hoff2008modeling}
P.~Hoff.
\newblock Modeling homophily and stochastic equivalence in symmetric relational
  data.
\newblock In \emph{Advances in Neural Information Processing Systems}, pages
  657--664, 2008.

\bibitem[Holland et~al.(1983)Holland, Laskey, and
  Leinhardt]{holland1983stochastic}
P.~W. Holland, K.~B. Laskey, and S.~Leinhardt.
\newblock Stochastic blockmodels: First steps.
\newblock \emph{Social Networks}, 5\penalty0 (2):\penalty0 109--137, 1983.

\bibitem[Holme et~al.(2003)Holme, Huss, and Jeong]{holme2003subnetwork}
P.~Holme, M.~Huss, and H.~Jeong.
\newblock Subnetwork hierarchies of biochemical pathways.
\newblock \emph{Bioinformatics}, 19\penalty0 (4):\penalty0 532--538, 2003.

\bibitem[Ji and Jin(2016)]{ji2016coauthorship}
P.~Ji and J.~Jin.
\newblock Coauthorship and citation networks for statisticians.
\newblock \emph{The Annals of Applied Statistics}, 10\penalty0 (4):\penalty0
  1779--1812, 2016.

\bibitem[Jin et~al.(2019)Jin, Ke, and Luo]{jin2019optimal}
J.~Jin, Z.~T. Ke, and S.~Luo.
\newblock Optimal adaptivity of signed-polygon statistics for network testing.
\newblock \emph{arXiv preprint arXiv:1904.09532}, 2019.

\bibitem[Joseph and Yu(2016)]{joseph2016impact}
A.~Joseph and B.~Yu.
\newblock Impact of regularization on spectral clustering.
\newblock \emph{The Annals of Statistics}, 44\penalty0 (4):\penalty0
  1765--1791, 2016.

\bibitem[Kannan et~al.(2004)Kannan, Vempala, and Vetta]{kannan2004clusterings}
R.~Kannan, S.~Vempala, and A.~Vetta.
\newblock On clusterings: Good, bad and spectral.
\newblock \emph{Journal of the ACM (JACM)}, 51\penalty0 (3):\penalty0 497--515,
  2004.

\bibitem[Karrer and Newman(2011)]{karrer2011stochastic}
B.~Karrer and M.~E. Newman.
\newblock Stochastic blockmodels and community structure in networks.
\newblock \emph{Physical Review E}, 83\penalty0 (1):\penalty0 016107, 2011.

\bibitem[Kim et~al.(2017)Kim, Kim, and Lee]{kim2017analysis}
J.~Kim, J.-j. Kim, and H.~Lee.
\newblock An analysis of disease-gene relationship from medline abstracts by
  digsee.
\newblock \emph{Scientific reports}, 7\penalty0 (1):\penalty0 1--13, 2017.

\bibitem[Kleinberg(2002)]{kleinberg2002small}
J.~M. Kleinberg.
\newblock Small-world phenomena and the dynamics of information.
\newblock In \emph{Advances in neural information processing systems}, pages
  431--438, 2002.

\bibitem[Krzakala et~al.(2013)Krzakala, Moore, Mossel, Neeman, Sly,
  Zdeborov{\'a}, and Zhang]{krzakala2013spectral}
F.~Krzakala, C.~Moore, E.~Mossel, J.~Neeman, A.~Sly, L.~Zdeborov{\'a}, and
  P.~Zhang.
\newblock Spectral redemption in clustering sparse networks.
\newblock \emph{Proceedings of the National Academy of Sciences}, 110\penalty0
  (52):\penalty0 20935--20940, 2013.

\bibitem[Lancichinetti and Fortunato(2009)]{lancichinetti2009community}
A.~Lancichinetti and S.~Fortunato.
\newblock Community detection algorithms: a comparative analysis.
\newblock \emph{Physical review E}, 80\penalty0 (5):\penalty0 056117, 2009.

\bibitem[Larsen(1998)]{larsen1998lanczos}
R.~M. Larsen.
\newblock Lanczos bidiagonalization with partial reorthogonalization.
\newblock \emph{DAIMI Report Series}, 27\penalty0 (537), 1998.

\bibitem[Le and Levina(2015)]{le2015estimating}
C.~M. Le and E.~Levina.
\newblock Estimating the number of communities in networks by spectral methods.
\newblock \emph{arXiv preprint arXiv:1507.00827}, 2015.

\bibitem[Le et~al.(2017)Le, Levina, and Vershynin]{le2017concentration}
C.~M. Le, E.~Levina, and R.~Vershynin.
\newblock Concentration and regularization of random graphs.
\newblock \emph{Random Structures \& Algorithms}, 2017.

\bibitem[Lei and Rinaldo(2014)]{lei2014consistency}
J.~Lei and A.~Rinaldo.
\newblock Consistency of spectral clustering in stochastic block models.
\newblock \emph{The Annals of Statistics}, 43\penalty0 (1):\penalty0 215--237,
  2014.

\bibitem[Li et~al.(2020)Li, Levina, and Zhu]{li2016network}
T.~Li, E.~Levina, and J.~Zhu.
\newblock {Network cross-validation by edge sampling}.
\newblock \emph{Biometrika}, 04 2020.
\newblock ISSN 0006-3444.
\newblock \doi{10.1093/biomet/asaa006}.
\newblock URL \url{https://doi.org/10.1093/biomet/asaa006}.
\newblock asaa006.

\bibitem[Liberzon et~al.(2011)Liberzon, Subramanian, Pinchback,
  Thorvaldsd{\'o}ttir, Tamayo, and Mesirov]{liberzon2011molecular}
A.~Liberzon, A.~Subramanian, R.~Pinchback, H.~Thorvaldsd{\'o}ttir, P.~Tamayo,
  and J.~P. Mesirov.
\newblock Molecular signatures database (msigdb) 3.0.
\newblock \emph{Bioinformatics}, 27\penalty0 (12):\penalty0 1739--1740, 2011.

\bibitem[Lloyd(1982)]{lloyd1982least}
S.~Lloyd.
\newblock Least squares quantization in pcm.
\newblock \emph{IEEE transactions on information theory}, 28\penalty0
  (2):\penalty0 129--137, 1982.

\bibitem[Lyzinski et~al.(2017)Lyzinski, Tang, Athreya, Park, and
  Priebe]{lyzinski2017community}
V.~Lyzinski, M.~Tang, A.~Athreya, Y.~Park, and C.~E. Priebe.
\newblock Community detection and classification in hierarchical stochastic
  blockmodels.
\newblock \emph{IEEE Transactions on Network Science and Engineering},
  4\penalty0 (1):\penalty0 13--26, 2017.

\bibitem[Mariadassou et~al.(2010)Mariadassou, Robin, and
  Vacher]{mariadassou2010uncovering}
M.~Mariadassou, S.~Robin, and C.~Vacher.
\newblock Uncovering latent structure in valued graphs: a variational approach.
\newblock \emph{The Annals of Applied Statistics}, pages 715--742, 2010.

\bibitem[Matias and Miele(2017)]{matias2017statistical}
C.~Matias and V.~Miele.
\newblock Statistical clustering of temporal networks through a dynamic
  stochastic block model.
\newblock \emph{Journal of the Royal Statistical Society: Series B (Statistical
  Methodology)}, 79\penalty0 (4):\penalty0 1119--1141, 2017.

\bibitem[Newman(2010)]{newman2010networks}
M.~Newman.
\newblock \emph{Networks: an introduction}.
\newblock Oxford university press, 2010.

\bibitem[Newman(2006)]{newman2006modularity}
M.~E. Newman.
\newblock Modularity and community structure in networks.
\newblock \emph{Proceedings of the National Academy of Sciences}, 103\penalty0
  (23):\penalty0 8577--8582, 2006.

\bibitem[Newman and Girvan(2004)]{newman2004finding}
M.~E. Newman and M.~Girvan.
\newblock Finding and evaluating community structure in networks.
\newblock \emph{Physical review E}, 69\penalty0 (2):\penalty0 026113, 2004.

\bibitem[Olhede and Wolfe(2014)]{olhede2014network}
S.~C. Olhede and P.~J. Wolfe.
\newblock Network histograms and universality of blockmodel approximation.
\newblock \emph{Proceedings of the National Academy of Sciences}, 111\penalty0
  (41):\penalty0 14722--14727, 2014.

\bibitem[Peel and Clauset(2015)]{peel2015detecting}
L.~Peel and A.~Clauset.
\newblock Detecting change points in the large-scale structure of evolving
  networks.
\newblock In \emph{AAAI}, pages 2914--2920, 2015.

\bibitem[Pons and Latapy(2005)]{pons2005computing}
P.~Pons and M.~Latapy.
\newblock Computing communities in large networks using random walks.
\newblock In \emph{International symposium on computer and information
  sciences}, pages 284--293. Springer, 2005.

\bibitem[Radicchi et~al.(2004)Radicchi, Castellano, Cecconi, Loreto, and
  Parisi]{radicchi2004defining}
F.~Radicchi, C.~Castellano, F.~Cecconi, V.~Loreto, and D.~Parisi.
\newblock Defining and identifying communities in networks.
\newblock \emph{Proceedings of the national academy of sciences}, 101\penalty0
  (9):\penalty0 2658--2663, 2004.

\bibitem[Reichardt and Bornholdt(2006)]{reichardt2006statistical}
J.~Reichardt and S.~Bornholdt.
\newblock Statistical mechanics of community detection.
\newblock \emph{Physical review E}, 74\penalty0 (1):\penalty0 016110, 2006.

\bibitem[Rohe et~al.(2011)Rohe, Chatterjee, and Yu]{rohe2011spectral}
K.~Rohe, S.~Chatterjee, and B.~Yu.
\newblock Spectral clustering and the high-dimensional stochastic blockmodel.
\newblock \emph{The Annals of Statistics}, 39\penalty0 (4):\penalty0
  1878--1915, 2011.

\bibitem[Schneider et~al.(2017)Schneider, Mullally, Dugourd, Peisker,
  Hoogenboezem, Van~Strien, Bindels, Heckl, B{\"u}sche, and
  Fleck]{schneider2017gli1+}
R.~K. Schneider, A.~Mullally, A.~Dugourd, F.~Peisker, R.~Hoogenboezem, P.~M.
  Van~Strien, E.~M. Bindels, D.~Heckl, G.~B{\"u}sche, and D.~Fleck.
\newblock Gli1+ mesenchymal stromal cells are a key driver of bone marrow
  fibrosis and an important cellular therapeutic target.
\newblock \emph{Cell stem cell}, 20\penalty0 (6):\penalty0 785--800, 2017.

\bibitem[Shi and Malik(2000)]{shi2000normalized}
J.~Shi and J.~Malik.
\newblock Normalized cuts and image segmentation.
\newblock \emph{Pattern Analysis and Machine Intelligence, IEEE Transactions
  on}, 22\penalty0 (8):\penalty0 888--905, 2000.

\bibitem[Spielman and Teng(1996)]{spielman1996spectral}
D.~A. Spielman and S.-H. Teng.
\newblock Spectral partitioning works: Planar graphs and finite element meshes.
\newblock In \emph{Foundations of Computer Science, 1996. Proceedings., 37th
  Annual Symposium on}, pages 96--105. IEEE, 1996.

\bibitem[Van Der~Hofstad(2016)]{van2016random}
R.~Van Der~Hofstad.
\newblock Random graphs and complex networks, 2016.

\bibitem[Wakita and Tsurumi(2007)]{wakita2007finding}
K.~Wakita and T.~Tsurumi.
\newblock Finding community structure in mega-scale social networks.
\newblock In \emph{Proceedings of the 16th international conference on World
  Wide Web}, pages 1275--1276, 2007.

\bibitem[Wang and Rohe(2016)]{wang2016discussion}
S.~Wang and K.~Rohe.
\newblock Discussion of ``coauthorship and citation networks for
  statisticians".
\newblock \emph{The Annals of Applied Statistics}, 10\penalty0 (4):\penalty0
  1820--1826, 2016.

\bibitem[Wang and Bickel(2017)]{wang2015likelihood}
Y.~R. Wang and P.~J. Bickel.
\newblock Likelihood-based model selection for stochastic block models.
\newblock \emph{The Annals of Statistics}, 45\penalty0 (2):\penalty0 500--528,
  2017.

\bibitem[Ward~Jr(1963)]{ward1963hierarchical}
J.~H. Ward~Jr.
\newblock Hierarchical grouping to optimize an objective function.
\newblock \emph{Journal of the American statistical association}, 58\penalty0
  (301):\penalty0 236--244, 1963.

\bibitem[Wilkinson and Huberman(2002)]{wilkinson2002finding}
D.~Wilkinson and B.~A. Huberman.
\newblock Finding communities of related genes.
\newblock \emph{arXiv preprint cond-mat/0210147}, 2002.

\bibitem[Xu and Hero(2013)]{xu2013dynamic}
K.~S. Xu and A.~O. Hero.
\newblock Dynamic stochastic blockmodels: Statistical models for time-evolving
  networks.
\newblock In \emph{International conference on social computing,
  behavioral-cultural modeling, and prediction}, pages 201--210. Springer,
  2013.

\bibitem[Yang et~al.(2016)Yang, Algesheimer, and Tessone]{yang2016comparative}
Z.~Yang, R.~Algesheimer, and C.~J. Tessone.
\newblock A comparative analysis of community detection algorithms on
  artificial networks.
\newblock \emph{Scientific reports}, 6:\penalty0 30750, 2016.

\bibitem[Yao(2003)]{yao2003information}
Y.~Yao.
\newblock Information-theoretic measures for knowledge discovery and data
  mining.
\newblock In \emph{Entropy Measures, Maximum Entropy Principle and Emerging
  Applications}, pages 115--136. Springer, 2003.

\bibitem[Young and Scheinerman(2007)]{young2007random}
S.~J. Young and E.~R. Scheinerman.
\newblock Random dot product graph models for social networks.
\newblock In \emph{International Workshop on Algorithms and Models for the
  Web-Graph}, pages 138--149. Springer, 2007.

\bibitem[Yu et~al.(2014)Yu, Wang, and Samworth]{yu2015useful}
Y.~Yu, T.~Wang, and R.~J. Samworth.
\newblock A useful variant of the davis--kahan theorem for statisticians.
\newblock \emph{Biometrika}, 102\penalty0 (2):\penalty0 315--323, 2014.

\bibitem[Zhang et~al.(2014)Zhang, Levina, and Zhu]{zhang2014detecting}
Y.~Zhang, E.~Levina, and J.~Zhu.
\newblock Detecting overlapping communities in networks using spectral methods.
\newblock \emph{arXiv preprint arXiv:1412.3432}, 2014.

\bibitem[Zhao et~al.(2012)Zhao, Levina, and Zhu]{zhao2012consistency}
Y.~Zhao, E.~Levina, and J.~Zhu.
\newblock Consistency of community detection in networks under degree-corrected
  stochastic block models.
\newblock \emph{The Annals of Statistics}, 40\penalty0 (4):\penalty0
  2266--2292, 2012.

\bibitem[Zola et~al.(2007)Zola, Swart, Banham, Barry, Beare, Bensussan,
  Boumsell, Buckley, B{\"u}hring, Clark, et~al.]{zola2007cd}
H.~Zola, B.~Swart, A.~Banham, S.~Barry, A.~Beare, A.~Bensussan, L.~Boumsell,
  C.~D. Buckley, H.-J. B{\"u}hring, G.~Clark, et~al.
\newblock Cd molecules 2006—human cell differentiation molecules.
\newblock \emph{Journal of immunological methods}, 319\penalty0 (1-2):\penalty0
  1--5, 2007.

\end{thebibliography}
\bibliographystyle{abbrvnat}

\newpage


\newpage

\setcounter{page}{1}

\begin{center}
\bf{\Large Appendix}
\end{center}

\begin{appendix}
\setcounter{thm}{0}
\renewcommand{\thethm}{\Alph{section}.\arabic{thm}}

\setcounter{lem}{0}
\renewcommand{\thelem}{\Alph{section}.\arabic{lem}}

\setcounter{figure}{0}
\renewcommand{\thefigure}{\thesection.\arabic{figure}} 

\setcounter{equation}{0}
\renewcommand{\theequation}{\thesection.\arabic{equation}}

\section{More on the eigenstructure of the BTSBM}\label{app:eigen}
Recall that $Z\in \bR^{n\times K}$ is the membership matrix with the $i$-th row given by $Z_{i} = e_{\mathcal{I}(c(i))}$, where $e_{j}$ is the $j$-th canonical basis vector in $\bR^{K}$ and $\mathcal{I}$ is the integer given by the binary representation. Under the BTSBM, we have
\[P = \e A = ZBZ^{T} - p_{0}I.\]
Without loss of generality, we can rearrange the nodes so that
\[Z =
  \begin{bmatrix}
    \one_{n_{1}} & 0 & \cdots & 0\\
    0 & \one_{n_{2}} & \cdots & 0\\
    \vdots & \vdots & \ddots & \vdots\\
    0 & 0 & \cdots & \one_{n_{K}}
  \end{bmatrix},
\]
where $\one_{m}$ denotes an $m$-dimensional vector with all entries equal to $1$. Under \eqref{eq:equal_block}, $Z / \sqrt{m}$ has orthonormal columns and we can rewrite $P$ as
\begin{equation}\label{eq:PBZ}
  P = \frac{Z}{\sqrt{m}} (m B)\frac{Z^{T}}{\sqrt{m}} - p_{0}I = \frac{Z}{\sqrt{m}}(m B - p_{0}I)\frac{Z^{T}}{\sqrt{m}}.
\end{equation}
Therefore $\tilde{P}=ZBZ^T$ has the same eigenvalues as $m B$ and the same  eigenvectors as $(Z / \sqrt{m})V$, where $V\in \bR^{K\times K}$ is any basis matrix that gives the eigenspace of $mB$. Therefore, Theorem \ref{thm:eigen2} in Section \ref{sec:theory} is a direct result from the following eigenstructure of $B$ that will be proved at the end of this section.

\begin{thm}\label{thm:eigenB}
  Let $B$ be the $K\times K$ community connection probability matrix of the BTSBM with $K = 2^d$. 
  Then the following holds.  \\
  1. (Eigenvalues) The distinct nonzero eigenvalues of $B$, denoted by $\lambda_{d, 0} \ge  \lambda_{d, 1} \ge  \ldots \ge \lambda_{d, d}$,    are given by 
\begin{equation}\label{eq:lambdaB}
  \lambda_{d, 0} = p_0 + \sum_{r=1}^{d}2^{r-1}p_r, \quad \lambda_{d, q} = p_0 + \sum_{r=1}^{d - q} 2^{r-1}p_r - 2^{d - q}p_{d - q + 1}, ~~q = 1, \ldots, d.
\end{equation}
2. (Eigenvectors) For any $q \in \{1, \ldots, d\}$ and each $x \in S_{q-1}$, let $\nu_{x}^{(d,q)}$ be a $K$-dimensional vector, such that for any $k \in [K]$, 
\[\nu_{x, k}^{(d,q)} = \left\{
    \begin{array}{rl}
      1 & (\mbox{if }\mathcal{I}^{-1}(k) \in S_d\cap \{x0+\}) \, , \\
      -1 & (\mbox{if }\mathcal{I}^{-1}(k) \in S_d\cap \{x1+\}) \, , \\
      0 & (\mbox{otherwise})   \, . 
    \end{array}
\right.,\]
where $\mathcal{I}^{-1}$ maps the integer to its binary representation and $\{x0+\}, \{x1+\}$ are defined above Theorem \ref{thm:eigen2}. Then the eigenspace corresponding to $\lambda_{d,q}$ is spanned by $\{\nu_{x}^{(d,q)}: x \in S_{q-1}\}$.
\end{thm}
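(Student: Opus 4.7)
The plan is to prove Theorem \ref{thm:eigenB} by induction on the depth $d$, exploiting the recursive block structure of $B$. The key observation is that, after ordering the community labels by their binary representation, the matrix $B^{(d)}$ decomposes as
\[
B^{(d)} = \begin{pmatrix} B^{(d-1)} & p_d J \\ p_d J & B^{(d-1)} \end{pmatrix},
\]
where $B^{(d-1)}$ is the BTSBM matrix of depth $d-1$ with parameters $p_0,\ldots,p_{d-1}$, and $J = \mathbf{1}\mathbf{1}^{T}$ is the $2^{d-1}\times 2^{d-1}$ all-ones matrix (two communities in different top-level halves first differ at position $1$, so $D(x,x')=d$ and the entry is $p_d$). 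The base case $d=0$ is immediate.

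For the inductive step, I would use the standard block trick: for any symmetric matrix of the form $\bigl(\begin{smallmatrix}C & D \\ D & C\end{smallmatrix}\bigr)$, a vector $(v,v)^T$ is an eigenvector with eigenvalue $\mu$ iff $(C+D)v=\mu v$, and $(v,-v)^T$ iff $(C-D)v=\mu v$. Now for any $v$ orthogonal to $\mathbf{1}_{2^{d-1}}$ one has $Jv=0$, so such a $v$ is an eigenvector of $B^{(d-1)}\pm p_d J$ with the same eigenvalue as for $B^{(d-1)}$. By the inductive hypothesis, each $\nu_{y}^{(d-1,q)}$ with $y\in S_{q-1}$, $q\ge 1$, has equal numbers of $+1$ and $-1$ entries, hence lies in $\mathbf{1}_{2^{d-1}}^{\perp}$, and lifts via $(v,v)^T,(v,-v)^T$ to eigenvectors of $B^{(d)}$ with eigenvalue $\lambda_{d-1,q}$. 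A direct computation from \eqref{eq:lambdaB} shows $\lambda_{d-1,q}=\lambda_{d,q+1}$ for $q=1,\ldots,d-1$. The remaining eigenvector of $B^{(d-1)}$ is $\mathbf{1}_{2^{d-1}}$ with eigenvalue $\lambda_{d-1,0}$, and since $J\mathbf{1}_{2^{d-1}}=2^{d-1}\mathbf{1}_{2^{d-1}}$, this produces two new eigenvalues of $B^{(d)}$, namely $\lambda_{d-1,0}+2^{d-1}p_d=\lambda_{d,0}$ with eigenvector $\mathbf{1}_{2^d}$, and $\lambda_{d-1,0}-2^{d-1}p_d=\lambda_{d,1}$ with eigenvector $\nu_{\emptyset}^{(d,1)}=(\mathbf{1}_{2^{d-1}},-\mathbf{1}_{2^{d-1}})^T$.

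To match the claimed basis $\{\nu_x^{(d,q)}: x\in S_{q-1}\}$ for the eigenspace of $\lambda_{d,q}$ when $q\ge 2$, write $x=bx'$ with $b\in\{0,1\}$ and $x'\in S_{q-2}$ and verify directly from the definitions that
\[
\nu_{0x'}^{(d,q)}=\bigl(\nu_{x'}^{(d-1,q-1)},\, 0\bigr)^T, \qquad \nu_{1x'}^{(d,q)}=\bigl(0,\,\nu_{x'}^{(d-1,q-1)}\bigr)^T.
\]
Each such pair is a linear combination of $(v,v)^T$ and $(v,-v)^T$ for $v=\nu_{x'}^{(d-1,q-1)}$, so the span of the $\{\nu_x^{(d,q)}:x\in S_{q-1}\}$ coincides with the eigenspace produced by the lifting argument above. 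A dimension check confirms completeness: summing multiplicities gives $1+1+\sum_{q=2}^{d}2^{q-1}=2^d=K$, accounting for all eigenvectors.

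The bookkeeping (keeping track of which eigenvectors lift to which lifted pair and aligning indices so that $\lambda_{d-1,q}$ becomes $\lambda_{d,q+1}$) is the main source of friction, but nothing deep is happening — the whole argument is just the observation that a recursive $2\times 2$ block structure with equal diagonal blocks and a rank-one off-diagonal block admits an explicit inductive diagonalization using $\mathbf{1}$-symmetric and $\mathbf{1}$-antisymmetric lifts. Once this recursion is set up cleanly, Theorem \ref{thm:eigen2} follows immediately by \eqref{eq:PBZ}: the eigenvalues of $P$ are $m$ times those of $B$ (off the zero eigenspace, with a $-p_0 I$ shift absorbed by defining $\tilde{P}=P+p_0I$), and the eigenvectors of $\tilde P$ are obtained by applying $Z/\sqrt{m}$ to the eigenvectors of $B$, which precisely turns each $\nu_x^{(d,q)}$ into the vector $\nu_x^{q+1}$ described in Theorem \ref{thm:eigen2}.
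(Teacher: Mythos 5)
Your proof is correct, and for the eigenvector part it takes a genuinely different route from the paper. Both arguments start from the same recursive block decomposition $B^{(d)} = \bigl(\begin{smallmatrix} B^{(d-1)} & p_d J \\ p_d J & B^{(d-1)}\end{smallmatrix}\bigr)$ and induct on $d$, and your eigenvalue recursion ($\lambda_{d-1,q}=\lambda_{d,q+1}$, with $\lambda_{d-1,0}$ splitting into $\lambda_{d-1,0}\pm 2^{d-1}p_d$) matches the paper's Theorem \ref{thm:eigen} exactly; the paper's conjugation by $U_1\otimes I_{2^{d-1}}$ is your symmetric/antisymmetric lift in disguise. Where you diverge is in how the basis $\{\nu_x^{(d,q)}\}$ is handled: the paper first establishes the Hadamard factorization $B_d = U_1^{\otimes d}\Lambda_d U_1^{\otimes d}$ (which gives eigenvalues and multiplicities), and then verifies separately, by a direct entrywise computation of $B_i^T\nu_j^{(q)}$ using binary representations of the indices, that each $\nu$ is an eigenvector, concluding by orthogonality plus the multiplicity count. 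You instead carry the $\nu$ basis through the induction itself via the identities $\nu_{0x'}^{(d,q)}=(\nu_{x'}^{(d-1,q-1)},0)^T$ and $\nu_{1x'}^{(d,q)}=(0,\nu_{x'}^{(d-1,q-1)})^T$, observing that each such pair spans the same space as the two lifts $(v,\pm v)^T$, and close with the dimension count $1+1+\sum_{q=2}^d 2^{q-1}=2^d$. Your version is cleaner and avoids both the Hadamard detour and the index bookkeeping; what it does not deliver is the Hadamard eigenbasis itself, which the paper needs later (its entries are all $\pm 1/\sqrt{K}$, a delocalization property that is essential in the $\ell_\infty$ perturbation analysis behind Theorems \ref{thm:consistency-ass} and \ref{thm:consistency-dis-ass}), so the paper's extra machinery is not wasted even though it is unnecessary for Theorem \ref{thm:eigenB} alone.
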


Theorem \ref{thm:eigenB} gives one representation of the BTSBM eigenspace. While it is easy to describe, it is not technically convenient, because all the eigenvectors except the first two are sparse, with $\ell_{\infty}$ norm much larger than $\frac{1}{\sqrt{n}}$. However, since $P$ has at most $d + 1 <\!\!< n$ distinct eigenvalues, there are many ways to represent the eigenspace, and in particular, we can find a representation such that all entries of all eigenvectors corresponding to non-zero eigenvalues are $\pm \frac{1}{\sqrt{n}}$.

We start from $d = 1$. In this case,
$B_1 =
  \begin{bmatrix}
    p_0 & p_1 \\
    p_1 & p_0
  \end{bmatrix}
.$ 
Let 
$U_1 =\frac{1}{\sqrt{2}} \begin{bmatrix}
    1 & 1 \\
    1 & -1
  \end{bmatrix}$.
Then
$$B_1 =  U_1
  \begin{bmatrix}
    p_0 + p_1 & 0 \\
    0 & p_0 - p_1
  \end{bmatrix} U_1.$$
Next, consider $d = 2$. By definition,
\[B_2 =
  \begin{bmatrix}
    p_0 & p_1 & p_2 & p_2\\
    p_1 & p_0 & p_2 & p_2\\
    p_2 & p_2 & p_1 & p_0\\
    p_2 & p_2 & p_0 & p_1
  \end{bmatrix}
=
\begin{bmatrix}
  B_1 & p_2\one_{2}\one_{2}^{T}\\
  p_2\one_{2}\one_{2}^{T} & B_1
\end{bmatrix}
.\]
Note that 
\[B_{1} = U_1
  \begin{bmatrix}
    p_0 + p_1 & 0\\
    0 & p_0 - p_1
  \end{bmatrix}
U_1, \quad p_3\one_{2}\one_{2}^{T} = U_1
  \begin{bmatrix}
    2p_3 & 0\\
    0 & 0
  \end{bmatrix} U_1.
\]
We can rewrite $B_2$ as
\[B_2 =
  (I_{2}\otimes U_1)
  \begin{bmatrix}
    p_0 + p_1 & 0 & 2p_2 & 0\\
    0 & p_0 - p_1 & 0 & 0\\
    2p_2 & 0 & p_0 + p_1 & 0\\
    0 & 0 & 0 & p_0 - p_1
  \end{bmatrix}
  (I_{2}\otimes U_1)
\]
where $I_2 =
\begin{bmatrix}
  1 & 0\\
  0 & 1
\end{bmatrix}
$ and $\otimes$ denotes the Kronecker product. 
Using the block diagonal structure and observing that 
\begin{align*}
  \begin{bmatrix}
    p_0 + p_1 & 2p_2\\
    2p_2 & p_0 + p_1\\
  \end{bmatrix} &  = U_1   \begin{bmatrix}
    p_0 + p_1 + 2p_2 & 0\\
    0 & p_0 + p_1 - 2p_2\\
  \end{bmatrix}U_1,
\\  
  \begin{bmatrix}
    p_0 - p_1 & 0 \\
    0 & p_0 - p_1
  \end{bmatrix} & = U_1
    \begin{bmatrix}
    p_0 - p_1 & 0 \\
    0 & p_0 - p_1
  \end{bmatrix} U_1
\end{align*}
we obtain 
\[B_2 = U_2
  \begin{bmatrix}
    p_0 + p_1 + 2p_2 & 0 & 0 & 0\\
    0 & p_0 - p_1 & 0 & 0\\
    0 & 0 & p_0 + p_1 - 2p_2 & 0\\
    0 & 0 & 0 & p_0 - p_1
  \end{bmatrix} U_2,\]
where   $U_2 = (I_{2}\otimes U_1) (U_1 \otimes I_{2}) = U_1\otimes U_1$.  
In other words, $U_2$ is the standard $4\times 4$ Hadamard matrix rescaled by $\frac{1}{2}$.    The following theorem describes a similar structure for the general case.
  \begin{thm}\label{thm:eigen}
For any positive integer $m$, let $\ord_{2}(m) = \max\{q: m \mbox{ is divisible by }2^{q}\}$.  Let $\Lambda_{d}$ be a $2^{d}\times 2^{d}$ diagonal matrix with
\begin{equation}\label{eq:Lambdad}
  \Lambda_{d, ii} = \left\{
    \begin{array}{cc}
      \lambda_{d, 0} & (i = 1)\\
      \lambda_{d, d - \ord_{2}(i - 1)} & (i > 1)
    \end{array}
\right.
\end{equation}
where $\lambda$'s are defined in \eqref{eq:lambdaB}. Then for any $d$,
\begin{equation}\label{eq:eigen1}
B_{d} = U_1^{\otimes d} \Lambda_{d} U_1^{\otimes d}.
\end{equation}
where $U_1^{\otimes d} = \frac{1}{2^{d / 2}}H_{K}$ is symmetric and orthogonal, with  $H_{K}$ the standard $K\times K$ Hadamard matrix.
\end{thm}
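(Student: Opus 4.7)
I would prove Theorem~\ref{thm:eigen} by induction on $d$. The base case $d=1$ is the explicit computation
$B_1 = U_1 \,\diag(p_0+p_1,\, p_0-p_1)\, U_1$ given earlier in the appendix, and one checks $\Lambda_1$ agrees with \eqref{eq:Lambdad}: $\Lambda_{1,11}=\lambda_{1,0}=p_0+p_1$ and $\Lambda_{1,22}=\lambda_{1,\,1-\ord_2(1)}=\lambda_{1,1}=p_0-p_1$.

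For the inductive step, use the recursive block decomposition
\[B_d = \begin{bmatrix} B_{d-1} & p_d J \\ p_d J & B_{d-1} \end{bmatrix},\qquad J := \one_{2^{d-1}}\one_{2^{d-1}}^T,\]
which follows immediately from $D(0x,1x')=d$ whenever the first bits differ. Write $U_1^{\otimes d} = (U_1\otimes I_{2^{d-1}})(I_2\otimes U_1^{\otimes(d-1)})$ via the mixed product property, and apply these two orthogonal transformations in succession. The inner one, by the inductive hypothesis and the key identity $U_1^{\otimes(d-1)}\one = \sqrt{2^{d-1}}\,e_1$ (since the first row of the Hadamard matrix is all ones), gives
\[(I_2\otimes U_1^{\otimes(d-1)})\,B_d\,(I_2\otimes U_1^{\otimes(d-1)}) = \begin{bmatrix} \Lambda_{d-1} & 2^{d-1}p_d\,e_1 e_1^T \\ 2^{d-1}p_d\,e_1 e_1^T & \Lambda_{d-1} \end{bmatrix}.\]
Then the outer transformation uses the elementary identity
\[(U_1\otimes I)\begin{bmatrix}A & C \\ C & A\end{bmatrix}(U_1\otimes I) = \begin{bmatrix}A+C & 0 \\ 0 & A-C\end{bmatrix},\]
yielding a diagonal matrix with blocks $\Lambda_{d-1}\pm 2^{d-1}p_d\,e_1 e_1^T$. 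This proves diagonalization by $U_1^{\otimes d}$; it remains to show the diagonal matches $\Lambda_d$.

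The last step is bookkeeping on the diagonal entries against the formula $\Lambda_{d,ii}=\lambda_{d,\,d-\ord_2(i-1)}$. Only positions $i=1$ and $i=2^{d-1}+1$ are perturbed by $\pm 2^{d-1}p_d$; for these, $\lambda_{d-1,0}+2^{d-1}p_d = \lambda_{d,0}$ and $\lambda_{d-1,0}-2^{d-1}p_d = \lambda_{d,1}$ follow directly from \eqref{eq:lambdaB}, matching $\ord_2(0)$ (treated as giving index $0$) and $\ord_2(2^{d-1})=d-1$ respectively. For every other position, with $q:=\ord_2(i-1)\in\{0,\dots,d-2\}$, I would verify the telescoping identity
\[\lambda_{d-1,\,(d-1)-q} \;=\; p_0 + \sum_{r=1}^{q}2^{r-1}p_r - 2^q p_{q+1} \;=\; \lambda_{d,\,d-q},\]
which holds since both sides reduce to the same expression upon substituting into \eqref{eq:lambdaB}. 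The only mildly delicate point is handling the offset index $i=2^{d-1}+j$ for $2\le j\le 2^{d-1}$: one uses that $\ord_2(2^{d-1}+(j-1))=\ord_2(j-1)$ whenever $\ord_2(j-1)<d-1$, so the $\ord_2$-value of the reindexed position is preserved. Orthogonality and symmetry of $U_1^{\otimes d}$, together with $U_1^{\otimes d}=H_K/\sqrt{K}$, follow immediately from the corresponding properties of $U_1$ and the Kronecker product.

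The main obstacle is the index-matching in the final step; everything else is mechanical once the two Kronecker identities are in place. Once Theorem~\ref{thm:eigen} is established, Theorem~\ref{thm:eigenB} follows by identifying the eigenspace corresponding to each eigenvalue with the span of the columns of $U_1^{\otimes d}$ sharing that diagonal value, and Theorem~\ref{thm:eigen2} then follows via \eqref{eq:PBZ}.
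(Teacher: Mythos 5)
Your proposal is correct and follows essentially the same route as the paper's proof: induction on $d$, the block decomposition of $B_d$ into $B_{d-1}$ and $p_d\one\one^T$, conjugation first by $I_2\otimes U_1^{\otimes(d-1)}$ and then by $U_1\otimes I_{2^{d-1}}$, and the identities $\lambda_{d-1,0}\pm 2^{d-1}p_d=\lambda_{d,0},\lambda_{d,1}$. Your final index-matching via $\ord_2$ is in fact carried out more explicitly than in the paper, which leaves that verification to the reader.
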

Theorem \ref{thm:eigen} gives a complete and precise description of the eigenstructure of $B$.   There are $d + 1$ distinct eigenvalues and there exists an eigenvector basis such that all entries are on the scale of $1 / \sqrt{m} = \sqrt{K / n}$.  
This representation will be important in later proofs, but note that it does not reflect the hierarchical community structure.  

\begin{proof}[\textbf{Proof of Theorem \ref{thm:eigen}}]
  We prove \eqref{eq:eigen1} by induction. We have already shown the result holds for $d = 1$ and $2$.  Suppose it holds for $d - 1$.   Then 
\[B_{d} =
  \begin{bmatrix}
    B_{d-1} & p_{d}\one_{2^{d-1}}\one_{2^{d-1}}^{T}\\
    p_{d}\one_{2^{d-1}}\one_{2^{d-1}}^{T} & B_{d - 1}
  \end{bmatrix}.\]
By the induction hypothesis,
\[B_{d-1} = U_{1}^{\otimes (d-1)}\Lambda_{d - 1}U_{1}^{\otimes (d-1)}, \,\, p_{d}\one_{2^{d-1}}\one_{2^{d-1}}^{T} = U_{1}^{\otimes (d-1)}
  \begin{bmatrix}
    2^{d-1}p_{d} & 0\\
    0 & 0
  \end{bmatrix}U_{1}^{\otimes (d-1)}.
\]
Write
\[\Lambda_{d - 1} =
  \begin{bmatrix}
    \lambda_{d - 1, 0} & 0\\
    0 & \td{\Lambda}_{d-1}
  \end{bmatrix}
\]
where $\lambda_{d-1, 0}\in \bR$ and $\td{\Lambda}_{d-1}$ is a $(2^{d - 1} - 1)$-dimensional diagonal matrix. Then 
\begin{align*}
B_{d} &= \lb I_2 \otimes U_{1}^{\otimes (d-1)}\rb
  \begin{bmatrix}
    \lambda_{d-1, 0} & 0 & 2^{d-1}p_d & 0\\
    0 & \td{\Lambda}_{d-1} & 0 & 0\\
    2^{d-1}p_d & 0 & \lambda_{d-1, 0} & 0\\
    0 & 0 & 0 & \td{\Lambda}_{d-1}
  \end{bmatrix}\lb I_2 \otimes U_{1}^{\otimes (d-1)}\rb
\end{align*}
Note that
\begin{align*}
  \begin{bmatrix}
    \lambda_{d-1, 0} & 2^{d-1}p_d\\
    2^{d-1}p_d & \lambda_{d-1, 0}
  \end{bmatrix} &= U_1
  \begin{bmatrix}
    \lambda_{d-1, 0} + 2^{d - 1}p_d & 0\\
    0 & \lambda_{d-1, 0} - 2^{d - 1}p_d
  \end{bmatrix}U_1
\end{align*}
and for any $j$,
\[
  \begin{bmatrix}
    \td{\Lambda}_{jj} & 0\\
    0 & \td{\Lambda}_{jj}
  \end{bmatrix}
 = U_1   \begin{bmatrix}
    \td{\Lambda}_{jj} & 0\\
    0 & \td{\Lambda}_{jj}
  \end{bmatrix} U_1.\]
Therefore,
\begin{align*}
B_{d} &= U_{d}
  \begin{bmatrix}
    \lambda_{d-1, 0} + 2^{d-1}p_d & 0 & 0 & 0\\
    0 & \td{\Lambda}_{d-1} & 0 & 0\\
    0 & 0 & \lambda_{d-1, 0} - 2^{d-1}p_d & 0\\
    0 & 0 & 0 & \td{\Lambda}_{d-1}
  \end{bmatrix}U_{d}
\end{align*}
where 
\[U_{d} = \lb I_2 \otimes U_{1}^{\otimes (d-1)}\rb (U_1 \otimes I_{2^{d-1}}) = U_{1}^{\otimes d}.\]
Finally, by definition, 
\[\lambda_{d-1, 0} + 2^{d-1}p_d = \lambda_{d, 0}, \quad \lambda_{d-1, 0} - 2^{d-1}p_d = \lambda_{d, 1}.\]
Then it is easy to verify that 
\[\Lambda_{d} =   \begin{bmatrix}
    \lambda_{d-1, 0} + 2^{d-1}p_d & 0 & 0 & 0\\
    0 & \td{\Lambda}_{d-1} & 0 & 0\\
    0 & 0 & \lambda_{d-1, 0} - 2^{d-1}p_d & 0\\
    0 & 0 & 0 & \td{\Lambda}_{d-1}
  \end{bmatrix}.\]
This completes the proof.  
\end{proof}

Using Theorem \ref{thm:eigen} we can now prove Theorem \ref{thm:eigen2}.

\begin{proof}[\textbf{Proof of Theorem \ref{thm:eigen2}}]
Part (i) has been proved in Theorem \ref{thm:eigen} and now we need to prove part (ii). It is easy to see that $\{\nu_{1}^{(q)}, \ldots, \nu_{2^{q-1}}^{(q)}\}$ are mutually orthogonal. By Theorem \ref{thm:eigen}, $\lambda_{d, q}$ has multiplicity $2^{q-1}$. Thus it remains to prove that $\nu_{j}^{(q)}$ is an eigenvector. For any $i\in \{1, \ldots, 2^{d}\}$, let $B_{i}^{T}$ be the $i$-th row of $B$. Then
\begin{equation}\label{eq:eigen2}
B_{i}^{T} \nu_{j}^{(q)} = \sum_{h = 1}^{2^{d}}B_{ih}\nu_{j, h}^{(q)} = \sum_{h\in S_{2j-2}^{(q)}}B_{ih} - \sum_{h\in S_{2j-1}^{(q)}}B_{ih}.
\end{equation}
Noting that
\[S_{2j-1}^{(q)} = \{i + 2^{d - q}: i \in S_{2j-2}^{(q)}\},\]
we can rewrite \eqref{eq:eigen2} as
\[B_{i}^{T} \nu_{j}^{(q)} = \sum_{h\in S_{2j-2}^{(q)}}\lb B_{ih} - B_{i(h + 2^{d-q})}\rb.\]
Let $i_1 i_2 \ldots i_d$ (resp. $h_1 h_2 \ldots h_d$) be the binary representation of $i$ (resp. $h$), with zeros added to the front of the list  whenever necessary. Then 
\[h\in S_{2j-2}^{(q)}\Longleftrightarrow h_1 h_2\ldots h_{q} \mbox{ is the binary representation of }2j - 2.\]
As a consequence, $h_{q} = 0$ and hence
\[h_1 h_2 \ldots (h_{q} + 1) h_{q+1}\ldots h_{d}\mbox{ is the binary representation of }h + 2^{d - q}.\]
By definition,
\[B_{ih} = p_{w(i, h)}, \quad \mbox{where} \quad w(i, h) = \left\{
    \begin{array}{ll}
     d + 1 - \min\{r: i_{r} \not = h_{r}\} & (i \not = h)\\
      0 & i = h
    \end{array}
\right..\]
Thus, if $(i_1, \ldots, i_{q-1})\not = (h_1, \ldots, h_{q-1})$, then
\[w(i, h) = w(i, h + 2^{d-q}) = d + 1 - \min\{r\le q - 1: i_{r}\not = h_{r}\} \Longrightarrow B_{ih} - B_{i(h + 2^{d-q})} = 0.\]
Thus, 
\begin{equation}
  \label{eq:case1}
  B_{i}^{T} \nu_{j}^{(q)} = 0.
\end{equation}
If $(i_1, \ldots, i_{q-1}) = (h_1, \ldots, h_{q-1})$, we consider two cases. If $i_{q} = 0$, then
\[w(i, h + 2^{d - q}) = d + 1 - q, \quad w(i, h) = \left\{\begin{array}{ll}
d + 1 - \min\{r > q: i_{r} \not = h_{r}\} & \mbox{ if } i \not = h\\
0 & \mbox{ if } i = h
\end{array}\right.\]
Then 
\begin{align}
B_{i}^{T} \nu_{j}^{(q)} &= (p_0 - p_{d+1 - q}) + \sum_{h\in S_{2j-2}^{(q)}\setminus \{i\}}\lb p_{d + 1 - \min\{r > q: i_{r} \not = h_{r}\}} - p_{d + 1 - q}\rb  \nonumber\\
& = (p_0 - p_{d+1 - q}) + \sum_{j = 1}^{d - q}2^{j-1}\lb p_{j} - p_{d + 1 - q}\rb = \lambda_{q}\label{eq:case2}.
\end{align}
Similarly, if $i_{q} = 1$, then
\begin{equation}
  \label{eq:case3}
  B_{i}^{T} \nu_{j}^{(q)} = -\lambda_{q}.
\end{equation}
Putting \eqref{eq:case1} - \eqref{eq:case3} together, we conclude that $B\nu_{j}^{(q)} = \lambda_{q}\nu_{j}^{(q)}$.  
\end{proof}
Finally, Corollary \ref{cor:eigen} is a direct consequence of Theorems \ref{thm:eigen2} and \ref{thm:eigen}.
\begin{proof}[\textbf{Proof of Corollary \ref{cor:eigen}}]
 By Theorem \ref{thm:eigen}, under \eqref{eq:assortative}, we have
 \begin{align*}
   \lambda_{d, 0} - \lambda_{d, 1} & = 2^{d}p_{d} > 0,  \\
   \lambda_{d, q} - \lambda_{d, q + 1} & = 2^{d - q}(p_{d - q} - p_{d -
   q + 1}) > 0
   \end{align*}
for $q = 1, \ldots, d - 1$.   Thus $\lambda_{d, q}$ is decreasing in
$q$. Since $P$ has the same eigenstructure as $mB - p_{0}I$, the
second largest eigenvalue of $P$ is
\[\lambda_{2} = m\lambda_{d, 1} - p_{0} = (m - 1)p_{0} + m\sum_{i=1}^{d-1}2^{i-1}p_{i} - m2^{d - 1}p_{d},\] 
and the eigengap is
\begin{align*}
\Delta_{2} = &m\min\{\lambda_{d, 0} - \lambda_{d, 1}, \lambda_{d, 1} - \lambda_{d, 2}\} = m\min\{2^{d}p_{d}, 2^{d-1}(p_{d-1} - p_{d})\} \\
= & n\min\{p_{d}, (p_{d-1} - p_{d}) / 2\} = n\rho_{n}\min\{a_{d}, (a_{d-1} - a_{d}) / 2\}.
\end{align*}
If on the other hand we have dis-assortative sequence \eqref{eq:disassortative}, the above inequalities are reverted thus $\lambda_{d, q}$ is increasing in
$q$ but we have
$$\lambda_{d, d} < 0.$$
Thus $\lambda_{d,1} < \lambda_{d,2} \cdots, < \lambda_{d,d} <0.$ However, notice that $|\lambda_{d,1}|$ is still larger than $|\lambda_{d,q}|, q>1$ so it is still the one with second largest magnitude.

To prove the claim about $u_2$, note that 
\[H_{d} = \begin{bmatrix}
    1 & 1 \\
    1 & -1
  \end{bmatrix} \otimes H_{d - 1} = 
  \begin{bmatrix}
    H_{d-1} & H_{d-1}\\ 
    H_{d-1} & -H_{d-1}
  \end{bmatrix}.
\]
It is easy to prove by induction in $d$ that the first column of $H_{d}$ is
$\one_{2^{d}}$. The $(2^{d-1} + 1)$-th column is
\[
  \begin{bmatrix}
    \one_{2^{d-1}}\\ -\one_{2^{d-1}}
  \end{bmatrix}.
\]
By definition, 
\[u_{2} = \frac{1}{\sqrt{m}}Z \lb\frac{1}{\sqrt{K}}H_{d}[, 2^{d-1} + 1]\rb = \frac{1}{\sqrt{n}}  \begin{bmatrix}
    \one_{n / 2} \\ -\one_{n / 2}
  \end{bmatrix}.\]

\end{proof}

\section{Proof of exact recovery by the HCD-Sign algorithm}
Throughout this section, we denote the eigenvalues of $P$ and $A$  by $\lambda_{1}\ge \ldots \ge \lambda_{n}$  and $\td{\lambda}_{1}\ge \ldots \ge \td{\lambda}_{n}$, respectively.   Let $u_{1}, \ldots, u_{n}$ (resp. $\td{u}_{1}, \ldots, \td{u}_{n}$) be an orthonormal set of eigenvectors (not necessarily unique) satisfying $Pu_{s} = \lambda_{s}u_{s}$, and similarly, $A\td{u}_{s} = \td{\lambda}_{s}\td{u}_{s}$.   Let
\[H = A - P.\]
For completeness, we state some results on concentration of adjacency matrices we will use.   

\begin{lem}[Theorem 17 of \cite{eldridge2017unperturbed}]\label{lem:FixTerm}
Let $X$ be an $n\times n$ symmetric random matrix such that $\e X = (0)$ and all of its entries on and above the diagonals are independent,  $\e|X_{ij}|^p \le \frac{1}{n}$ for all $i, j \in [n]$ and $p\ge 2$. Let $u$ be an $n$-vector with $\norm{u}_{\infty}=1$. Given constants $\xi >1$ and $0 < \kappa < 1$, with probability at least $1-n^{-\frac{1}{4}(\log_{\mu}{n})^{\xi-1}(\log_{\mu}{e})^{-\xi}}$, 
$$\norm{X^pu}_{\infty} < (\log{n})^{p\xi} \text{~~for all~~} p\le \frac{\kappa}{8}(\log^{\xi}{n})$$
where $\mu = \frac{2}{\kappa+1}$.
\end{lem}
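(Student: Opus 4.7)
The plan is to establish this bound via the method of moments combined with combinatorial walk counting, a strategy in the spirit of Wigner's trace method adapted to the $\ell_\infty$ setting. First I would expand
\[
(X^{p}u)_{i} \;=\; \sum_{i_{1},\ldots,i_{p}\in[n]} X_{i,i_{1}}X_{i_{1},i_{2}}\cdots X_{i_{p-1},i_{p}}\, u_{i_{p}},
\]
exhibiting each coordinate as a sum over length-$p$ walks in the complete graph on $[n]$, weighted by entries of $X$ and endpoint entries of $u$. The aim is to control $\e(X^{p}u)_{i}^{2q}$ for a large even integer $2q$ to be tuned, then apply Markov's inequality with threshold $(\log n)^{p\xi}$ and take a union bound over the $n$ coordinates and the admissible values of $p$.

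Next I would expand the $2q$-th moment as a sum over $2q$-tuples of walks of length $p$. Because the entries of $X$ on and above the diagonal are independent and mean-zero, the only tuples that contribute nonzero expectation are those whose edge multiset (treating edges as unordered pairs) has every distinct edge used at least twice. Using $\e|X_{ij}|^{p}\le 1/n$ for every $p\ge 2$ and $\|u\|_{\infty}=1$, each surviving tuple contributes at most $n^{-E}$, where $E$ is the number of distinct edges appearing in the union of the $2q$ walks. The combinatorial core is then the standard bound: the number of such multi-walk patterns on a vertex set of size at most $E+1$ is at most $n^{E}$ times a counting factor of the form $(2qp)^{O(qp)}$ coming from enumerating walk shapes. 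After cancellation with the $n^{-E}$, one obtains
\[
\e(X^{p}u)_{i}^{2q} \;\le\; \bigl(C\, qp\bigr)^{O(qp)}
\]
for an absolute constant $C$.

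Having this, I would apply Markov's inequality:
\[
\p\!\left(|(X^{p}u)_{i}|\ge(\log n)^{p\xi}\right) \;\le\; \frac{(Cqp)^{O(qp)}}{(\log n)^{2qp\xi}}.
\]
Choosing $q$ of order $\log n$ (balanced against $p\le\kappa\log^{\xi}n/8$) makes the numerator polynomial in $\log n$ to the power $O(qp)$ while the denominator is $(\log n)^{2qp\xi}$, giving a stretched-exponential decay whose exponent can be shown to be at least $\tfrac14(\log_{\mu}n)^{\xi-1}(\log_{\mu}e)^{-\xi}\log n$ with $\mu=2/(\kappa+1)$; the constant $\mu$ arises as the natural ratio between the target power $(\log n)^{\xi}$ and the combinatorial loss of $qp$ per path step. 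A final union bound over $i\in[n]$ and over the at most $\kappa\log^{\xi}n/8$ relevant values of $p$ is absorbed by slightly shrinking this exponent.

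The main obstacle I expect is the combinatorial walk-counting step: the bound on the number of distinct walk shapes must be tight enough that the counting factor $(Cqp)^{O(qp)}$ remains compatible with $p$ as large as $\kappa\log^{\xi}n/8$. Any loose enumeration of walk topologies would blow up the numerator and destroy the high-probability statement. The elegant way to handle this is to organize walks by their \emph{skeleton} (the sequence of "new edge / return to old edge" events) and count skeletons with a Catalan-type bound, and only then enumerate the labelings on the revealed vertices; getting the constants right here is what ultimately pins down the precise $\mu=2/(\kappa+1)$ in the probability bound.
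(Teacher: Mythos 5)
First, a point of reference: the paper does not prove this lemma at all --- it is imported verbatim as Theorem 17 of \cite{eldridge2017unperturbed} and used as a black box (in contrast to Lemma \ref{lem:deterministicControl}, for which the paper supplies its own corrected proof in Appendix D). So there is no in-paper argument to compare against; the relevant comparison is with the cited source, and your plan --- expand $(X^{p}u)_{i}$ over length-$p$ walks, bound $\e (X^{p}u)_{i}^{2q}$ by noting that only tuples with every distinct edge repeated survive, charge $n^{-1}$ per distinct edge via $\e|X_{ij}|^{m}\le 1/n$, cancel against the at most $n^{E}$ vertex labelings, apply Markov with threshold $(\log n)^{p\xi}$, and union bound over $i$ and $p$ --- is indeed the standard moment-method route that the source follows. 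The skeleton of the optimization also checks out: with a moment bound of the form $(2qp)^{2qp}$, Markov yields $\bigl(2qp/(\log n)^{\xi}\bigr)^{2qp}$, choosing $2qp \approx \tfrac{\kappa+1}{2}\log^{\xi}n$ (feasible with integer $q\ge 1$ precisely because $p\le \tfrac{\kappa}{8}\log^{\xi}n$) gives $\mu^{-\frac{\kappa+1}{2}\log^{\xi}n}$ with $\mu=2/(\kappa+1)$, and since $\xi>1$ the union-bound factor $n\cdot\tfrac{\kappa}{8}\log^{\xi}n$ is absorbed, leaving $\mu^{-\frac14\log^{\xi}n}=n^{-\frac14(\log_{\mu}n)^{\xi-1}(\log_{\mu}e)^{-\xi}}$.

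The genuine gap is the one you flag yourself but then step around: writing the walk count as $(Cqp)^{O(qp)}$ is not an innocuous simplification here, because the lemma stands or falls on the exponent constant. If the count of contributing configurations (after cancelling $n^{E}$ against $n^{-E}$) is $(Cqp)^{cqp}$ with $c>2$, then Markov requires $qp \lesssim (\log n)^{2\xi/c}$, which is incompatible with $p$ alone being as large as $\tfrac{\kappa}{8}\log^{\xi}n$, and the argument collapses entirely --- it does not merely lose constants. One must therefore actually verify that the skeleton enumeration gives at most roughly $\binom{2qp}{V-1}V^{2qp}\le (2(qp+1))^{2qp}$ shapes, i.e.\ $c=2$ with a small base constant, and then trace $C$ through the choice of $q$ to recover the specific $\mu=2/(\kappa+1)$ and the $\tfrac14$ in the exponent (which are used downstream in Lemma \ref{lem:KConcentration} with the concrete values $\kappa=1/2$, $\mu=4/3$). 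As a blind reconstruction, your proposal is the correct plan with the decisive quantitative step asserted rather than executed.
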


We also need the matrix perturbation result based on the Neumann trick, again from \cite{eldridge2017unperturbed}. However, the original paper contains a mistake in the proof of Theorem 12 where the authors claimed in their equation (10) that 
\[\lb\frac{|\lambda_t|}{|\lambda_t| - \|H\|}\rb^{k}\le \frac{|\lambda_t|}{|\lambda_t| - \|H\|}, \quad \forall k \ge 1.\]
This is clearly wrong as the RHS is bigger than $1$. Fortunately, this error can be fixed, though the conclusion has to be changed. In addition, there is a missing assumption in the original statement of \cite{eldridge2017unperturbed}.     Here we state the corrected version with a slightly different conclusion.  A self-contained proof can be found in Appendix~\ref{app:eldridge}.

\begin{lem}[Modified Theorem 9 of \cite{eldridge2017unperturbed}]\label{lem:deterministicControl}
Given any $t \in [n]$, suppose $\lambda_{t}$ has multiplicity $1$. Define the eigengap as 
\[\Delta_t = \min\{|\lambda_{t} - \lambda_{s}|: s\not = t\}.\] 
If $\norm{H} < |\lambda_t| / 2$, then for any set of eigenvectors $u_1, \ldots, u_n$, it holds for all $j \in [n]$ that
\begin{align*}
  &\min_{\zeta\in \{-1, +1\}}|(\td{u}_{t} - \zeta u_{t})_{j}|\\
\le& \lb\frac{4\|H\|^{2}}{\Delta_{t}^{2}} + \frac{2\|H\|}{|\lambda_{t}|}\rb|u_{t, j}| + 2\xi_{j}(u_{t}; H, \lambda_{t}) + \frac{4\sqrt{2}\|H\|}{\Delta_{t}}\sum_{s\not = t}\bigg|\frac{\lambda_{s}}{\lambda_{t}}\bigg|\cdot \lb |u_{s, j}| + \xi_{j}(u_{s}; H, \lambda_{t})\rb.
\end{align*}
where $\xi(u;H,\lambda)$ is a $n$-vector whose $j$th entry is defined to be 
$$\xi_j(u;H,\lambda) = \sum_{p\ge 1}\lb\frac{2}{|\lambda|}\rb^{p}|(H^{p}u)_{j}|.$$
\end{lem}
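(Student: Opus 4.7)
The plan is to derive a Neumann-series expansion for $\td u_t$ driven by the eigenvalue equation $A\td u_t = \td\lambda_t \td u_t$ and read the four pieces of the claimed bound off from it directly. Decompose $\td u_t = \sum_s \alpha_s u_s$ with $\alpha_s = \langle \td u_t, u_s\rangle$ and iterate the identity $\td u_t = \td\lambda_t^{-1}(P+H)\td u_t$. Weyl's inequality yields $|\td\lambda_t - \lambda_t| \le \|H\|$, and combined with $\|H\| < |\lambda_t|/2$ this gives $|\td\lambda_t| \ge |\lambda_t|/2 > \|H\|$, so the iteration converges and
\[\td u_t = \sum_s \alpha_s \frac{\lambda_s}{\td\lambda_t}\left(u_s + \sum_{p\ge 1}\frac{H^p u_s}{\td\lambda_t^{p}}\right).\]

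Now separate the $s=t$ term, evaluate at coordinate $j$, and pick $\zeta = \mathrm{sign}(\alpha_t \lambda_t / \td\lambda_t) \in \{-1,+1\}$. This expresses $(\td u_t - \zeta u_t)_j$ as the sum of three contributions: (a) $(\alpha_t\lambda_t/\td\lambda_t - \zeta)u_{t,j}$; (b) $\alpha_t(\lambda_t/\td\lambda_t)\sum_{p\ge 1}(H^p u_t)_j/\td\lambda_t^{p}$; (c) $\sum_{s\ne t}\alpha_s(\lambda_s/\td\lambda_t)\left[u_{s,j} + \sum_{p\ge 1}(H^p u_s)_j/\td\lambda_t^p\right]$. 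For (a), the triangle inequality yields $|\alpha_t\lambda_t/\td\lambda_t - \zeta| \le (1-|\alpha_t|)|\lambda_t/\td\lambda_t| + |\lambda_t/\td\lambda_t - 1|$, and since $|\lambda_t/\td\lambda_t| \le 2$ and $|\lambda_t/\td\lambda_t - 1| \le 2\|H\|/|\lambda_t|$, it remains only to bound $1-|\alpha_t|$. A Davis--Kahan argument based on the resolvent identity $\alpha_s(\td\lambda_t - \lambda_s) = \langle H\td u_t, u_s\rangle$ together with $|\td\lambda_t - \lambda_s| \ge \Delta_t/2$ (if $\|H\| > \Delta_t/2$ the conclusion is already trivial since the $4\|H\|^2/\Delta_t^2$ term alone exceeds $|u_{t,j}|$) gives $1-|\alpha_t| \le 1-\alpha_t^2 = \sum_{s\ne t}\alpha_s^2 \le 4\|H\|^2/\Delta_t^2$, which reproduces the first bracket of the claim.

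For contributions (b) and (c), replacing $|\td\lambda_t|^{-p}$ by $(2/|\lambda_t|)^p$ turns each inner sum $\sum_{p\ge 1}|(H^p u)_j|/|\td\lambda_t|^p$ into $\xi_j(u; H, \lambda_t)$. Group (b) is then at most $|\alpha_t||\lambda_t/\td\lambda_t|\,\xi_j(u_t; H, \lambda_t) \le 2\xi_j(u_t; H, \lambda_t)$. For group (c), the resolvent identity supplies the uniform bound $|\alpha_s|\le \|H\|/|\td\lambda_t - \lambda_s| \le 2\|H\|/\Delta_t$, which together with $|\lambda_s/\td\lambda_t|\le 2|\lambda_s/\lambda_t|$ and a careful accounting of the Davis--Kahan constants (contributing an extra factor of $\sqrt 2$) produces the prefactor $4\sqrt{2}\,\|H\|/\Delta_t$ in front of $\sum_{s\ne t}|\lambda_s/\lambda_t|(|u_{s,j}| + \xi_j(u_s; H, \lambda_t))$.

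The main obstacle is precisely the error in the original Eldridge et al.\ proof: they iterated the recursion indefinitely and then collapsed a geometric series with common ratio $r = |\lambda_t|/(|\lambda_t|-\|H\|) > 1$ by invoking $r^k \le r$, which of course fails. The fix is to truncate after the first iteration: the $p=0$ contribution is controlled separately by Davis--Kahan on $\alpha_t$ and $\alpha_s$, introducing the new $4\|H\|^2/\Delta_t^2$ correction absent from the original (incorrect) statement, while the $p\ge 1$ tail is folded into the definition of $\xi_j$ via $|\td\lambda_t|\ge|\lambda_t|/2$. The most delicate part will be the constant bookkeeping---the factors $4$, $2$, and $4\sqrt 2$ arise through this chain of elementary replacements---and verifying that the bound remains informative right up to the boundary case $\|H\|\approx |\lambda_t|/2$, where several of these ratios simultaneously saturate.
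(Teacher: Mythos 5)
Your overall strategy coincides with the paper's: expand $\td u_t$ via the Neumann series in the eigenbasis of $P$ (the paper quotes this as Theorem 8 of \cite{eldridge2017unperturbed}), control the overlaps $\alpha_s=\langle \td u_t, u_s\rangle$ by a Davis--Kahan-type bound, use Weyl's inequality to pass between $\td\lambda_t$ and $\lambda_t$, split off the $s=t$ term, and fold the $p\ge 1$ tails into $\xi_j$ via $|\td\lambda_t|\ge|\lambda_t|/2$. Your diagnosis of the error in the original Eldridge et al.\ argument and of how the new $4\|H\|^2/\Delta_t^2$ term arises is also the same as the paper's.

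Two steps, however, do not go through as written. First, the fallback for the case $\|H\|>\Delta_t/2$ --- ``the conclusion is already trivial since the $4\|H\|^2/\Delta_t^2$ term alone exceeds $|u_{t,j}|$'' --- is not trivial: $\min_{\zeta}|(\td u_t-\zeta u_t)_j|$ is not bounded by $|u_{t,j}|$, since it also involves $|\td u_{t,j}|$, which must itself be absorbed into the right-hand side; carrying that out requires the coefficient of $|u_{t,j}|$ to be at least $3$, and $\|H\|>\Delta_t/2$ only guarantees it exceeds $1$, leaving a hole in the intermediate regime $\Delta_t/2<\|H\|<|\lambda_t|/2$. The paper avoids the case split entirely by invoking the Yu--Wang--Samworth version of Davis--Kahan (its Lemma on $\sin\Theta$), which gives $\sin\Theta(\td u_t,u_t)\le 2\|H\|/\Delta_t$ and hence $1-\alpha_t^2\le 4\|H\|^2/\Delta_t^2$ and $\max_{s\ne t}|\alpha_s|\le 2\sqrt2\|H\|/\Delta_t$ with no side condition relating $\|H\|$ to $\Delta_t$; you should use that (or an unconditional form of your resolvent bound) instead. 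Second, your triangle-inequality split $|\alpha_t\lambda_t/\td\lambda_t-\zeta|\le(1-|\alpha_t|)\,|\lambda_t/\td\lambda_t|+|\lambda_t/\td\lambda_t-1|$ multiplies the $(1-|\alpha_t|)$ piece by $|\lambda_t/\td\lambda_t|\le 2$ and therefore produces $8\|H\|^2/\Delta_t^2$ rather than the claimed $4\|H\|^2/\Delta_t^2$; the paper's ordering, $|1-\alpha_t|+|1-\lambda_t/\td\lambda_t|\,|\alpha_t|$ with $|\alpha_t|\le 1$, yields the stated constant. Both defects are repairable without changing the architecture of the argument, but as written the proof neither covers all $\|H\|<|\lambda_t|/2$ nor delivers the lemma's constants.
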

Finally, we will need a concentration result for the adjacency matrix, stated here in the form proved by  \cite{lei2014consistency}.
\begin{lem}[\cite{lei2014consistency}]\label{lem:BasicConcentration}
If $n\max_{ij}P_{ij} \ge c_{0}\log n$ for some universal constant $c_{0} > 0$, then given any $\phi > 0$, there exists a constant $\tilde{C}(\phi, c_{0})$, which only depends on $\phi$ and $c_0$, such that 
\[\|H\|\le \tilde{C}(\phi, c_{0})\sqrt{n\max_{ij}P_{ij}},\]
with probability at least $1-n^{-\phi}$.
\end{lem}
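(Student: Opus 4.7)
The plan is to avoid the suboptimal $\sqrt{\log n}$ factor that a direct matrix Bernstein bound would produce, and instead follow the discretization approach of Feige--Ofek and Friedman--Kahn--Szemer\'edi that was adapted to sparse stochastic block models by Lei--Rinaldo. I would use the variational characterization $\|H\|=\sup_{x,y\in S^{n-1}} x^{\top} H y$ and replace the unit sphere by a discrete grid $T\subset\{x\in\bR^n:\|x\|_2\le 1,\ \sqrt{n}\,x_i\in\mathbb{Z}\}$ via the standard cardinality-distortion tradeoff, so that $\|H\|\le C_0\sup_{x,y\in T} x^{\top} H y$ for a universal constant $C_0$, with $|T|\le\exp(Cn)$. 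Setting $d=n\max_{ij}P_{ij}$, I would then split each bilinear form as
\[x^{\top} H y \;=\; \sum_{(i,j)\in L(x,y)} H_{ij}x_iy_j \;+\; \sum_{(i,j)\in \bar L(x,y)} H_{ij}x_iy_j,\]
where $L(x,y)=\{(i,j):|x_iy_j|\le \sqrt{d}/n\}$ is the set of light coordinate pairs and $\bar L(x,y)$ its heavy complement.

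For the light part, each summand is bounded by $\sqrt{d}/n$ in absolute value and the aggregate variance is at most $\sum_{ij}P_{ij}x_i^2y_j^2\le d$, so Bernstein's inequality produces $|\sum_{L}H_{ij}x_iy_j|\le C\sqrt{d}$ with probability at least $1-\exp(-c d)$ for any fixed $(x,y)$. Since $|T\times T|\le\exp(2Cn)$ and $d\ge c_0\log n$, choosing $c_0$ sufficiently large depending on $\phi$ lets a union bound over the net yield a failure probability of at most $n^{-\phi-1}$ for the light contribution.

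For the heavy part, the key auxiliary statement is a bounded discrepancy property: with high probability, for all nonempty subsets $S,S'\subset[n]$, the edge count $e_A(S,S')$ in the random graph satisfies a Feige--Ofek type inequality, namely either $e_A(S,S')\le C\,e_P(S,S')$ in the large regime where $|S||S'|\ge n/d$, or $e_A(S,S')\log\!\frac{n^2}{|S||S'|}\le C(\sqrt{d|S||S'|}+\log n)$ in the small regime. Given this, I would dyadically bucket the coordinates of $x$ and $y$ by magnitude, write $\sum_{\bar L}|H_{ij}x_iy_j|$ as a sum over pairs of buckets, bound each contribution using the discrepancy estimate for the supports of those buckets, and sum the resulting geometric series to obtain a total of order $\sqrt{d}$.

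The main obstacle will be the discrepancy lemma itself. Its proof requires separate treatments for small subsets (via Chernoff applied to $e_A(S,S')$, with a union bound whose entropy cost $\binom{n}{|S|}\binom{n}{|S'|}$ is dominated by the log-probability only in the regime where $|S||S'|$ is small enough) and large subsets (via row-sum, i.e.\ degree, concentration for $H$). The hypothesis $n\max_{ij}P_{ij}\ge c_0\log n$ enters precisely at this step: below this threshold, vertices of anomalously large degree appear with non-negligible probability, inflating the heavy contribution beyond $\sqrt{d}$, and the proof would then require an additional graph-pruning step of the sort used by \cite{le2017concentration}.
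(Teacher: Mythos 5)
The paper does not actually prove this lemma; it is imported verbatim from \cite{lei2014consistency}, and your outline --- grid discretization of the sphere, light/heavy splitting of the bilinear form, Bernstein for the light pairs, a Feige--Ofek discrepancy property for the heavy pairs, with the hypothesis $n\max_{ij}P_{ij}\ge c_0\log n$ entering through degree concentration --- is exactly the strategy of that source. So the architecture is the right one.

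There is, however, a genuine gap in your light-pair step. You bound the aggregate variance by $\sum_{ij}P_{ij}x_i^2y_j^2\le d$, but the whole point of this step is the much stronger bound $\sum_{ij}P_{ij}x_i^2y_j^2\le \bigl(\max_{ij}P_{ij}\bigr)\|x\|_2^2\|y\|_2^2\le d/n$. With variance proxy $d$ and summands bounded by $\sqrt{d}/n$, Bernstein at level $t=C\sqrt{d}$ gives an exponent of order $-C^2d/\bigl(d+Cd/n\bigr)\asymp -C^2$, a constant; and even your claimed per-point rate $\exp(-cd)$ with $d\ge c_0\log n$ is only polynomially small in $n$, which is hopelessly insufficient against the union bound over the net $T\times T$ of cardinality $e^{cn}$. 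Using the correct variance $d/n$ together with the sup bound $\sqrt{d}/n$, the Bernstein exponent becomes of order $-Cn$, which does beat the $e^{cn}$ entropy once $C$ is chosen large. Relatedly, you cannot ``choose $c_0$ sufficiently large depending on $\phi$'': in the statement $c_0$ is a fixed universal constant and the $\phi$-dependence must be absorbed into $\tilde C(\phi,c_0)$; with the corrected exponent $-cCn$ this is immediate by taking $C$ large enough. Your heavy-pair sketch and your remark about where $d\ge c_0\log n$ is genuinely needed (degree concentration, failing below which one needs the pruning of \cite{le2017concentration}) are accurate.
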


We are now ready to prove the following lemma about the concentration of adjacency matrix eigenvectors in $\ell_{\infty}$ norm. 
\begin{lem}\label{lem:KConcentration}
Let  $A\in \bR^{n\times n}$ be generated from the BTSBM with parameters $(n, \rho_{n}; a_{1}, \ldots, a_{d})$ as defined in \eqref{eq:rhon}, with $n=Km = 2^dm$. Let $\eta_{d} = \min\{a_{d}, |a_{d-1} - a_{d}| / 2\}$.
Assume that
\begin{equation}
  \label{eq:KConcentration_cond}
  \frac{\log^{\xi}n\max\{1, a_d\}}{\sqrt{n\rho_n}\eta_{d}}\le \frac{1}{4},
\end{equation}
for some $\xi > 1$. If one of the following conditions is true:
\begin{enumerate}
\item  the model is assortative as \eqref{eq:assortative}, 
\item the model is dis-assortative as \eqref{eq:disassortative},
\end{enumerate} 
 then for any $\phi > 0$, there exists a constant $C_1(\phi)$, which only depend on $q$, such that
\[\min_{\zeta\in \{-1, +1\}}\norm{\tilde{u}_2 - \zeta u_2}_{\infty} \le C_1(\phi)\frac{1}{\sqrt{n}}\frac{\max\{1,a_d\}}{\sqrt{n\rho_n}\eta_{d}} \max\left\{\log^{\xi} n, \frac{\max\{1, a_{d}\}}{\eta_{d}}\right\}, \]
with probability at least $1-2n^{-\phi}$ for sufficiently large $n$.
\end{lem}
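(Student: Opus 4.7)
The plan is to combine the deterministic eigenvector perturbation bound of Lemma~\ref{lem:deterministicControl} with the concentration estimates of Lemmas~\ref{lem:FixTerm} and~\ref{lem:BasicConcentration}, applied at $t=2$. By Corollary~\ref{cor:eigen}, in both the assortative and dis-assortative regimes $\lambda_2$ has multiplicity one, the canonical choice of $u_2$ has $|u_{2,j}|=1/\sqrt{n}$ for every $j$, and the eigengap satisfies $\Delta_2 \gtrsim n\rho_n\eta_d$. Since $\max_{ij}P_{ij} \le \rho_n\max\{1,a_d\}$, Lemma~\ref{lem:BasicConcentration} yields $\|H\| \le \tilde C(\phi,c_0)\sqrt{n\rho_n\max\{1,a_d\}}$ with probability at least $1-n^{-\phi}$, and hypothesis~\eqref{eq:KConcentration_cond} then forces $\|H\|/\Delta_2 \le 1/(4\log^{\xi}n)$, which in particular verifies $\|H\|<|\lambda_2|/2$ as required by Lemma~\ref{lem:deterministicControl}.

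Next I would work in the eigenbasis from Theorems~\ref{thm:eigenB} and~\ref{thm:eigen}: the $K$ non-kernel eigenvectors of $P$ can be taken as $u_s=(1/\sqrt{m})Z h_s$, where $h_s$ is the $s$th column of the normalized Hadamard matrix $H_K/\sqrt{K}$, so that $|u_{s,j}|=1/\sqrt{n}$ for every $j$, while the kernel of $\tilde P=ZBZ^T$ contributes $n-K$ eigenvectors sharing the common eigenvalue $-p_0$. To bound each $\xi_j(u_s;H,\lambda_2)=\sum_{p\ge 1}(2/|\lambda_2|)^p|(H^p u_s)_j|$ for $s\le K$, I rescale $v_s=\sqrt n\,u_s$ to unit $\ell_\infty$ norm and apply Lemma~\ref{lem:FixTerm} to $X=H/\sqrt{n\rho_n\max\{1,a_d\}}$, which satisfies $\e|X_{ij}|^p \le 1/n$ for all $p\ge 2$ because each Bernoulli entry of $H$ has variance at most $\rho_n\max\{1,a_d\}$. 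This yields $|(H^p u_s)_j| \le n^{-1/2}\bigl(\sqrt{n\rho_n\max\{1,a_d\}}\,\log^{\xi}n\bigr)^p$ for $p$ up to order $\log^{\xi}n$. Together with $|\lambda_2|\ge\Delta_2 \gtrsim n\rho_n\eta_d$, the Neumann series has common ratio bounded by $(\sqrt{\max\{1,a_d\}}\log^{\xi}n)/(\sqrt{n\rho_n}\eta_d) \le 1/4$ under \eqref{eq:KConcentration_cond}, and hence sums to at most $Cn^{-1/2}\sqrt{\max\{1,a_d\}}\log^{\xi}n/(\sqrt{n\rho_n}\eta_d)$.

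For the kernel part of the sum in Lemma~\ref{lem:deterministicControl}, I would pick an orthonormal basis whose vectors are supported inside a single block and orthogonal to $\mathbf{1}$, so that for each coordinate $j$ only $m-1$ basis vectors have a nonzero $j$-entry, each of magnitude $O(1/\sqrt m)$; hence $\sum_{s>K}|u_{s,j}| \lesssim \sqrt m$, and together with $|\lambda_s/\lambda_2|=p_0/|\lambda_2| \lesssim 1/(n\eta_d)$ this makes the kernel piece negligible compared with the main bound. Plugging all estimates into Lemma~\ref{lem:deterministicControl}, the $\log^{\xi}n$ branch of the final $\max\{\log^{\xi}n,\max\{1,a_d\}/\eta_d\}$ arises from $\xi_j(u_2;H,\lambda_2)$, while the $\max\{1,a_d\}/\eta_d$ branch comes from $(\|H\|/\Delta_2)|\lambda_1/\lambda_2||u_{1,j}|$, using $|\lambda_1|\le n\rho_n\max\{1,a_d\}$ so that $|\lambda_1/\lambda_2|\lesssim \max\{1,a_d\}/\eta_d$. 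A union bound over Lemma~\ref{lem:FixTerm} applied to the $K$ non-kernel eigenvectors, together with Lemma~\ref{lem:BasicConcentration}, delivers total probability at least $1-2n^{-\phi}$.

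The principal obstacle is controlling the Neumann contributions from the $n-K$ kernel eigenvectors uniformly in $j$: Lemma~\ref{lem:FixTerm} provides only per-vector concentration, so applying it separately to an entire orthonormal basis of an $(n-K)$-dimensional subspace would either exhaust the probability budget or produce a bound too large in magnitude. Exploiting the block structure to choose a localized basis (as above), or equivalently bounding $\mnorm{H^p\Pi_0}$ directly where $\Pi_0$ is the projector onto the kernel, is what allows the kernel piece to be absorbed rather than to dominate. The remainder of the argument is then essentially algebraic bookkeeping of the three terms appearing in Lemma~\ref{lem:deterministicControl}, which match the claimed bound once the geometric Neumann sums are carried out.
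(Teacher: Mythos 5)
Your overall architecture matches the paper's proof: apply the deterministic Neumann-series bound of Lemma~\ref{lem:deterministicControl} at $t=2$, use the Hadamard eigenbasis from Theorem~\ref{thm:eigen} so that $|u_{s,j}|\equiv 1/\sqrt{n}$ for $s\le K$, control $\|H\|$ by Lemma~\ref{lem:BasicConcentration} and the powers $H^pu_s$ by Lemma~\ref{lem:FixTerm} with the rescaling $X=H/\sqrt{n\rho_n\max\{1,a_d\}}$. However, what you identify as ``the principal obstacle''--- the contribution of the $n-K$ kernel eigenvectors --- is not an obstacle in the paper at all, and your proposed workaround does not close the gap. The paper applies the perturbation lemma to the shifted matrices $\tilde A=A+p_0I$ and $\tilde P=P+p_0I=ZBZ^T$, which have exactly the same eigenvectors as $A$ and $P$ and the same perturbation $H=\tilde A-\tilde P$; for $\tilde P$ the kernel eigenvalues are exactly $0$, so every summand with $s>K$ in Lemma~\ref{lem:deterministicControl} carries the factor $|\lambda_s/\lambda_t|=0$ and vanishes identically. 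You instead keep the unshifted $P$, whose kernel eigenvalue is $-p_0$, and try to absorb those terms by choosing a block-localized basis. That localization only controls the $p=0$ piece $|u_{s,j}|$: the tails $\xi_j(u_s;H,\lambda_2)=\sum_{p\ge 1}(2/|\lambda_2|)^p|(H^pu_s)_j|$ involve $H^pu_s$, whose support spreads over all coordinates after one application of $H$, so for each $j$ you must sum $\xi_j(u_s;H,\lambda_2)$ over all $n-K$ kernel vectors (each with $\|u_s\|_\infty\asymp\sqrt{K/n}$), multiplied by $p_0/|\lambda_2|$. A direct accounting gives a contribution of order $\sqrt{K}/(\sqrt{n\rho_n}\,\eta_d^2)$ relative to the claimed bound, which is not $O(1)$ under \eqref{eq:KConcentration_cond} when $K$ is large or $\eta_d$ is small. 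So as written the kernel piece is not ``absorbed''; the shift to $\tilde P$ is the missing idea that makes it disappear.

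A second, smaller gap: you attribute the $\max\{1,a_d\}/\eta_d$ branch of the final bound to the single summand $(\|H\|/\Delta_2)|\lambda_1/\lambda_2||u_{1,j}|$, but Lemma~\ref{lem:deterministicControl} requires bounding $\sum_{s\neq 2,\,s\le K}|\lambda_s/\lambda_2|$ over all $K-1$ nonzero eigenvalues (with multiplicities). Bounding this sum by $(K-1)|\lambda_1/\lambda_2|$ loses a factor of $K$. The paper instead evaluates the sum exactly using the telescoping identity $\sum_{s=1}^{K}|\lambda_s|=2^dmp_0=n\rho_n$ in the assortative case (and $\le 2^{d+1}mp_d$ in the dis-assortative case), which combined with $|\lambda_2|\ge n\rho_n\eta_d$ gives the factor $\max\{1,a_d\}/\eta_d$ without any loss in $K$. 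You would need to reproduce this computation (or an equivalent cancellation) for your bookkeeping to yield the stated rate.
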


\begin{proof}[Proof of Lemma~\ref{lem:KConcentration}]
Let $H = A-P$ be the symmetric Bernoulli noise matrix. Recall that $\tilde{P} = P + p_{0}I$ and define $\tilde{A} = A+ p_{0}I$. As discussed, $\tilde{A}$ and $\tilde{P}$ have the same eigenvectors as $A$ and $P$ respectively and $H = A - P = \tilde{A} - \tilde{P}$. Since we only need to study the perturbation bound on the eigenvectors of $A$, we can apply the lemmas to $\tilde{A}$ and $\tilde{P}$ instead of $A$ and $P$. Therefore, we would ignore the trivial difference caused by enforcing zeros on diagonals of $A$ 
for the rest of the proof.

We first show the result in the assortative setting. By Corollary \ref{cor:eigen}, 
\begin{align}
  \lambda_2 &= mp_{0} + m\sum_{i=1}^{d-1}2^{i-1}p_{i} - m2^{d - 1}p_{d} = n\rho_{n} \frac{1 + \sum_{i=1}^{d-1}2^{i-1}a_{i} - 2^{d-1}a_{d}}{K}\notag\\
& = n\rho_{n} \frac{(1 - a_{d}) + \sum_{i=1}^{d-1}2^{i-1}(a_{i} - a_{d})}{K}\notag\\
& \ge n\rho_{n} (a_{d - 1} - a_{d})\frac{1 + \sum_{i=1}^{d-1}2^{i-1}}{K}\notag\\
& = n\rho_{n} \frac{a_{d - 1} - a_{d}}{2} \ge n\rho_{n}\eta_{d}\label{eq:lambda2}.
\end{align}

By Theorem \ref{thm:eigen}, 
\[B = \lb\frac{H_{K}}{2^{d / 2}}\rb \Lambda_{d} \lb\frac{H_{K}}{2^{d / 2}}\rb.\]
By \eqref{eq:PBZ} and recalling that $P$ has been replaced by $P + p_{0}I$,
\[P = \bar{U} \Lambda_{d} \bar{U}^{T}, \quad \mbox{where }\bar{U} = \frac{ZH_{K}}{\sqrt{m2^{d}}} = \frac{ZH_{K}}{\sqrt{n}}.\]
Set $u_{1}, \ldots, u_{K}$ as the columns of $\bar{U}$ and $u_{K + 1}, \ldots, u_{n}$ to be arbitrary eigenvectors of the eigenvalue $0$. Without loss of generality, we assume that $\langle \td{u}_{2}, u_{2}\rangle\ge 0$. Then
\begin{equation}
  \label{eq:usj}
  |u_{s, j}| \equiv \frac{1}{\sqrt{n}}, \quad \forall s \in [K], j\in [n].
\end{equation}

When $n\ge 3$, 
\[n\rho_{n}\ge \frac{\log^{2\xi} n}{\eta_{d}^{2}}\ge \log^{2\xi}n \ge \log n.\]
Then Lemma \ref{lem:BasicConcentration} implies that there exists $C_0(\phi) > 0$ that only depends on $q$ such that
\begin{equation}
  \label{eq:pE0}
  \p(E_{0}) \ge 1 - n^{-\phi},
\end{equation}
where
\begin{equation}
  \label{eq:E0}
  E_{0} = \left\{ \norm{H} \le C_0(\phi) \sqrt{n\rho_n} \right\}.
\end{equation}
Since $\sqrt{n\rho_{n}}\eta_{d} \ge 4\log^{\xi} n$, for sufficiently large $n$ (e.g., $n \ge \exp(C_0(\phi) / \xi)$), 
\begin{equation}\label{eq:c0q}
  2C_0(\phi) \sqrt{n\rho_n}\le n\rho_{n}\eta_{d}.
\end{equation}

Note that \eqref{eq:c0q} and \eqref{eq:lambda2} imply that $\|H\|\le |\lambda_{2}| / 2$. Thus Lemma \ref{lem:deterministicControl} holds for $t = 2$. By Corollary \ref{cor:eigen}, 
$$\delta_{2} = \min\{|\lambda_{2} - \lambda_{s}|: s\not = 2\} = n\rho_{n} \min\{a_{d}, (a_{d-1} - a_{d}) / 2\}= n\rho_{n} \eta_{d}.$$
Thus, by Lemma \ref{lem:deterministicControl},
\begin{align}
  |(\tilde{u}_2-u_2)_j| \le&  \frac{1}{\sqrt{n}}\left( \frac{4\norm{H}^{2}}{(n\rho_{n}\eta_{d})^{2}}+\frac{2\norm{H}}{n\rho_{n}\eta_{d}}\right) + 2\xi_j(u_2; H, \lambda_2) \notag\\
& + \frac{4\sqrt{2}\|H\|}{n\rho_{n}\eta_{d}}\sum_{s\not = 2, s\le K}\bigg|\frac{\lambda_{s}}{\lambda_{2}}\bigg|\cdot \lb \frac{1}{\sqrt{n}} + \xi_{j}(u_{s}; H, \lambda_{2})\rb.
  \label{eq:deterministic1}
\end{align}
The summands with $s > K$ in the last term can be dropped because $\lambda_{s} = 0$. 


Next we derive a bound for $\xi_j(u_s; H, \lambda_2)=\sum_{p\ge 1}\lb\frac{2}{|\lambda_2|}\rb^{p}|(H^{p}u_s)_{j}|$. First let
\[\gamma = \sqrt{n\rho_{n}}, \quad X = H / \gamma.\]
For sufficiently large $n$ (e.g., $n\ge 3$), we have $\gamma > 1$ and hence $|H_{ij} / \gamma| \le 1$, since $|H_{ij}|\le 1$. Then
\begin{equation}\label{eq:EHijp}
  \e|H_{ij}/\gamma|^p \le \e |H_{ij}/\gamma|^2 = \frac{P_{ij}(1-P_{ij})}{\gamma^2} \le \frac{\max_{ij}P_{ij}}{\gamma^2} = \frac{\rho_n}{\gamma^2} = \frac{1}{n}. 
\end{equation}
Let
\begin{equation}\label{eq:E1s}
  E_{1s} = \left\{\norm{X^p u_s }_{\infty} < (\log{n})^{p\xi}\norm{u_s}_{\infty}, \text{~~for all~~} p\le \frac{1}{16}(\log^{\xi}{n})\right\}.
\end{equation}
Then by Lemma \ref{lem:FixTerm} with $\kappa = 1/2$, 
\begin{equation}
  \label{eq:pE1s}
  \p(E_{1s})\ge 1-n^{-\frac{1}{4}(\log_{\mu}{n})^{\xi-1}(\log_{\mu}{e})^{-\xi}}, \quad \mbox{where }\mu = 4 / 3.
\end{equation}
This implies 
\begin{equation}
  \label{eq:E1}
  \p(E_{1})\ge 1-n^{-\frac{1}{4}(\log_{\mu}{n})^{\xi-1}(\log_{\mu}{e})^{-\xi} + 1}, \quad \mbox{where }E_{1} = \bigcap_{s=1}^{K}E_{1s}.
\end{equation}
Since $\xi > 1$, for sufficiently large $n$, the above probability is larger than $1 - n^{-\phi}$. Now let 
\begin{equation}
  \label{eq:E}
  E = E_{0}\cap E_{1}, \quad \mbox{then }\p(E)\ge 1 - 2n^{-\phi}.
\end{equation}

Let $M = \lfloor \log^{\xi} n / 16\rfloor$.  If event $E$ holds, by \eqref{eq:usj} we have
\[\|X\| \le C_0(\phi), \quad \|X^{p}u_s\|_{\infty}\le \frac{(\log n)^{p\xi}}{\sqrt{n}}, \,\, \forall p \in [M], s\in [K].\]
In addition, using  $\|v\|_{\infty}\le \|v\|_{2}$ for any vector $v$, for all $p\ge 1$,
\[\|X^{p}u_s\|_{\infty}\le \|X^{p}u_s\|_{2}\le \|X^{p}\| \le C_0(\phi)^{p}.\]
Using these properties, we have
\begin{align}
\xi_j  (u_s; H, \lambda_2) &  = \sum_{p\ge 1}\lb\frac{2\gamma}{\lambda_2}\rb^{p}|(X^p u_s)_j|
  = \sum_{p = 1}^{M}\lb\frac{2\gamma}{\lambda_2}\rb^{p}|(X^p u_s)_j| + \sum_{p > M}\lb\frac{2\gamma}{\lambda_2}\rb^{p}|(X^p u_s)_j|\notag\\
                           & \le \frac{1}{\sqrt{n}}\sum_{p = 1}^{M}\lb\frac{2\gamma (\log n)^{\xi}}{\lambda_2}\rb^{p} + \sum_{p > M}\lb\frac{2\gamma C_0(\phi)}{\lambda_2}\rb^{p} \notag \\
&                             \le \frac{1}{\sqrt{n}}\sum_{p = 1}^{M}\lb\frac{2\log^{\xi} n}{\sqrt{n\rho_{n}}\eta_{d}}\rb^{p} + \sum_{p > M}\lb\frac{2C_0(\phi)}{\sqrt{n\rho_{n}}\eta_{d}}\rb^{p}\label{eq:Z-decomposition},
\end{align}
where the last line uses \eqref{eq:lambda2}. Since $\frac{\log^{\xi} n}{\sqrt{n\rho_{n}}\eta_{d}} \le \frac{1}{4}$, for sufficiently large $n$ (e.g., $n\ge e^{C_0(\phi)}$), 
\begin{equation}\label{eq:logxin}
  \frac{2\max\{\log^{\xi} n, C_0(\phi)\}}{\sqrt{n\rho_{n}}\eta_{d}} \le \frac{1}{2}.
\end{equation}
Then \eqref{eq:Z-decomposition} implies that
\begin{align}
  \xi_j(u_s; H, \lambda_2) &\le \frac{1}{\sqrt{n}}\frac{2\log^{\xi} n}{\sqrt{n\rho_{n}}\eta_{d}}\sum_{p = 1}^{M}2^{-(p - 1)} + \frac{2C_0(\phi)}{\sqrt{n\rho_{n}}\eta_{d}}\sum_{p > M}2^{-(p - 1)}\notag\\
\le &\frac{1}{\sqrt{n}}\frac{4\log^{\xi} n}{\sqrt{n\rho_{n}}\eta_{d}} + \frac{4C_0(\phi)}{\sqrt{n\rho_{n}}\eta_{d}} 2^{-M}.  \label{eq:zetaj1}
\end{align}
Since $M = \log^{\xi}n / 16$ and $\xi > 1$, $2^{-M}$ decays faster than any polynomial of $n$. Thus for sufficiently large $n$,
\begin{equation}\label{eq:2-M}
  2^{-M}\le \frac{1}{\sqrt{n}}, \quad C_0(\phi)\le \log^{\xi}n.
\end{equation}
This simplifies \eqref{eq:zetaj1} to 
\begin{equation}
  \label{eq:zetaj}
  \xi_j(u_s; H, \lambda_2) \le \frac{1}{\sqrt{n}}\frac{8\log^{\xi} n}{\sqrt{n\rho_{n}}\eta_{d}}.
\end{equation}

By \eqref{eq:deterministic1} and \eqref{eq:zetaj}, assuming event $E$, 
\begin{align}
  |(\tilde{u}_2-u_2)_j| \le&  \frac{1}{\sqrt{n}}\left( \frac{4C_0(\phi)^{2}}{n\rho_{n}\eta_{d}^{2}}+\frac{2C_0(\phi)}{\sqrt{n\rho_{n}}\eta_{d}}\right) + \frac{1}{\sqrt{n}}\frac{16\log^{\xi} n}{\sqrt{n\rho_{n}}\eta_{d}} \notag\\
& + \frac{1}{\sqrt{n}}\frac{4\sqrt{2}C_0(\phi)}{\sqrt{n\rho_{n}}\eta_{d}}\lb 1 + \frac{8\log^{\xi} n}{\sqrt{n\rho_{n}}\eta_{d}}\rb \frac{\sum_{s\not = 2, s\le K}|\lambda_{s}|}{\lambda_{2}}\notag\\
\le & \frac{1}{\sqrt{n}}\frac{22\log^{\xi} n}{\sqrt{n\rho_{n}}\eta_{d}} + \frac{1}{\sqrt{n}}\frac{12\sqrt{2}C_0(\phi)}{\sqrt{n\rho_{n}}\eta_{d}}\frac{\sum_{s\not = 2, s\le K}|\lambda_{s}|}{\lambda_{2}}\notag\\
\le & \frac{1}{\sqrt{n}}\frac{1}{\sqrt{n\rho_{n}}\eta_{d}} \lb 22\log^{\xi}n + 12\sqrt{2}C_0(\phi)\frac{\sum_{s=1}^{K}|\lambda_{s}|}{\lambda_{2}}\rb.\label{eq:deterministic3}
\end{align}
where the second inequality uses \eqref{eq:c0q}, \eqref{eq:logxin} and \eqref{eq:2-M}. By Theorem \ref{thm:eigen}, 
\[\lambda_{1} = m\lb p_0 + \sum_{r=1}^{d}2^{r-1}p_r\rb\]
and for any $j\in [d]$,
\[\lambda_{2^{j-1}+1} = \ldots = \lambda_{2^{j}} = m\lb p_0 + \sum_{r=1}^{d - j} 2^{r-1}p_r - 2^{d - j}p_{d - j + 1}\rb.\]
Since $p_{0} > p_{1} > \ldots > p_{d}$, 
\[\lambda_{2^{j}} = m(p_{0} - p_{d - j + 1}) + m\sum_{r = 1}^{d - j}2^{r - 1}(p_r - p_{d - j + 1})\ge 0.\]
As a result,
\begin{align*}
  \frac{1}{m}\sum_{s=1}^{K}|\lambda_{s}| & = \frac{1}{m}\sum_{s = 1}^{K}\lambda_{s} = \frac{1}{m}\lambda_{1} + \frac{1}{m}\sum_{j=1}^{d}2^{j-1}\lambda_{2^{j}}\\
& = p_0 + \sum_{r=1}^{d}2^{r-1}p_r + \sum_{j=1}^{d}2^{j-1}\lb p_0 + \sum_{r=1}^{d - j} 2^{r-1}p_r - 2^{d - j}p_{d - j + 1}\rb\\
& = p_0 \lb 1 + \sum_{j=1}^{d}2^{j - 1}\rb + \sum_{r=1}^{d}2^{r - 1}p_{r} \lb 1 + \sum_{j=1}^{d - r}2^{j - 1} - 2^{d - r}\rb\\
& = 2^{d}p_{0} = 2^{d}\rho_{n}.
\end{align*}
Further, by \eqref{eq:lambda2} we have
\[\frac{\sum_{s=1}^{K}|\lambda_{s}|}{\lambda_{2}} = \frac{m2^{d}\rho_{n}}{\lambda_{2}} \le \frac{n\rho_{n}}{n\rho_{n}\eta_{d}}\le \eta_{d}^{-1}.\]
Plugging this into \eqref{eq:deterministic3}, we conclude that, assuming event $E$, 
\[|(\tilde{u}_2-u_2)_j| \le\frac{1}{\sqrt{n}}\frac{1}{\sqrt{n\rho_{n}}\eta_{d}} \lb 22\log^{\xi}n + 12\sqrt{2}C_0(\phi)\eta_{d}^{-1}\rb.\]
The proof of assortative case is then completed by setting $C_1(\phi) = 22 + 12\sqrt{2}C_0(\phi)$.

 For the dis-assortative case, the proof is almost the same and we only point out the steps with slight different arguments. First, \eqref{eq:lambda2} would become
\begin{equation}\label{eq:lambda2-dis}
| \lambda_2 | \ge n\rho_{n}\eta_{d}.
\end{equation}

 Lemma \ref{lem:BasicConcentration} implies that there exists $C_0(\phi) > 0$ that only depends on $q$ such that
\begin{equation}
  \label{eq:pE0}
  \p(E_{0}) \ge 1 - n^{-\phi},
\end{equation}
where
\begin{equation}
  \label{eq:E0}
  E_{0} = \left\{ \norm{H} \le C_0(\phi) \sqrt{n\rho_na_d} \right\}.
\end{equation}
From $\frac{\log^{\xi}na_d}{\sqrt{n\rho_n}\eta_{d}}\le \frac{1}{4}$, for sufficiently large $n$, we have
$$a_d \le \frac{1}{4}\frac{\sqrt{n\rho_n}\eta_{d}}{\log^{\xi}{n}}$$
and
$$\sqrt{n\rho_n}\eta_d \log^{\xi}{n} \ge 4a_d\log^{2\xi}{n}>1/4 $$
Thus for sufficiently large $n$
\begin{align}\label{eq:c0q}
  2C_0(\phi) \sqrt{n\rho_na_d} &\le 2C_0(\phi) \sqrt{n\rho_n\frac{1}{4}\frac{\sqrt{n\rho_n}\eta_{d}}{\log^{\xi}{n}}}\notag\\
  &  = n\rho_n\sqrt{\frac{1}{4}\frac{\eta_{d}}{\sqrt{n\rho_n}\log^{\xi}{n}}}\notag=n\rho_n\sqrt{\frac{1}{4}\frac{\eta_{d}^2}{\sqrt{n\rho_n}\eta_d\log^{\xi}{n}}}\notag\\
  & < n\rho_n\sqrt{\frac{1}{4}4\eta_d^2}= n\rho_{n}\eta_{d}.
\end{align}
Thus the condition $\norm{H} < |\lambda_2|/2$ is still true under the event $E_0$. Next, for the bound of $\xi_j(u_s; H, \lambda_2)=\sum_{p\ge 1}\lb\frac{2}{|\lambda_2|}\rb^{p}|(H^{p}u_s)_{j}|$. We instead take
\[\gamma = \sqrt{n\rho_{n}a_d}, \quad X = H / \gamma.\]
For sufficiently large $n$ 
\begin{equation}\label{eq:EHijp-dis-ass}
  \e|H_{ij}/\gamma|^p \le \e |H_{ij}/\gamma|^2 = \frac{P_{ij}(1-P_{ij})}{\gamma^2} \le \frac{\max_{ij}P_{ij}}{\gamma^2} = \frac{\rho_na_d}{\gamma^2} = \frac{1}{n}. 
\end{equation}

As with \eqref{eq:E1s}, \eqref{eq:pE1s}, \eqref{eq:E1} and \eqref{eq:E},
\begin{align}
\xi_j  (u_s; H, \lambda_2)                            & \le \frac{1}{\sqrt{n}}\sum_{p = 1}^{M}\lb\frac{2\gamma (\log n)^{\xi}}{\lambda_2}\rb^{p} + \sum_{p > M}\lb\frac{2\gamma C_0(\phi)}{\lambda_2}\rb^{p} \notag \\
&                             \le \frac{1}{\sqrt{n}}\sum_{p = 1}^{M}\lb\frac{2a_d\log^{\xi} n}{\sqrt{n\rho_{n}}\eta_{d}}\rb^{p} + \sum_{p > M}\lb\frac{2C_0(\phi)a_d}{\sqrt{n\rho_{n}}\eta_{d}}\rb^{p}\label{eq:Z-decomposition-dis-ass}.
\end{align}
Since $\frac{a_d\log^{\xi} n}{\sqrt{n\rho_{n}}\eta_{d}} \le \frac{1}{4}$, for sufficiently large $n$
\begin{equation}\label{eq:logxin-dis-ass}
  \frac{2\max\{a_d\log^{\xi} n, a_dC_0(\phi)\}}{\sqrt{n\rho_{n}}\eta_{d}} \le \frac{1}{2}.
\end{equation}
Then \eqref{eq:Z-decomposition-dis-ass} implies that
\begin{align}
  \xi_j(u_s; H, \lambda_2) &\le \frac{1}{\sqrt{n}}\frac{2a_d\log^{\xi} n}{\sqrt{n\rho_{n}}\eta_{d}}\sum_{p = 1}^{M}2^{-(p - 1)} + \frac{2a_dC_0(\phi)}{\sqrt{n\rho_{n}}\eta_{d}}\sum_{p > M}2^{-(p - 1)}\notag\\
\le &\frac{1}{\sqrt{n}}\frac{4a_d\log^{\xi} n}{\sqrt{n\rho_{n}}\eta_{d}} + \frac{4C_0(\phi)a_d}{\sqrt{n\rho_{n}}\eta_{d}} 2^{-M}.  \label{eq:zetaj1-dis-ass}
\end{align}

Again, as $M = \log^{\xi}n / 16$ and $\xi > 1$, $2^{-M}$ decays faster than any polynomial of $n$, for sufficiently large $n$,
\begin{equation}\label{eq:2-M-dis-ass}
  2^{-M}\le \frac{1}{\sqrt{n}}, \quad C_0(\phi)\le \log^{\xi}n.
\end{equation}
This simplifies \eqref{eq:zetaj1-dis-ass} to 
\begin{equation}
  \label{eq:zetaj-dis-ass}
  \xi_j(u_s; H, \lambda_2) \le \frac{1}{\sqrt{n}}\frac{8a_d\log^{\xi} n}{\sqrt{n\rho_{n}}\eta_{d}}.
\end{equation}

By \eqref{eq:deterministic1} and \eqref{eq:zetaj-dis-ass}, on event $E$, 
\begin{align}
  |(\tilde{u}_2-u_2)_j| \le&  \frac{1}{\sqrt{n}}\left( \frac{4C_0(\phi)^{2}a_d^2}{n\rho_{n}\eta_{d}^{2}}+\frac{2C_0(\phi)a_d}{\sqrt{n\rho_{n}}\eta_{d}}\right) + \frac{1}{\sqrt{n}}\frac{16a_d\log^{\xi} n}{\sqrt{n\rho_{n}}\eta_{d}} \notag\\
& + \frac{1}{\sqrt{n}}\frac{4\sqrt{2}C_0(\phi)a_d}{\sqrt{n\rho_{n}}\eta_{d}}\lb 1 + \frac{8a_d\log^{\xi} n}{\sqrt{n\rho_{n}}\eta_{d}}\rb \frac{\sum_{s\not = 2, s\le K}|\lambda_{s}|}{|\lambda_{2}|}\notag\\
\le & \frac{1}{\sqrt{n}}\frac{22a_d\log^{\xi} n}{\sqrt{n\rho_{n}}\eta_{d}} + \frac{1}{\sqrt{n}}\frac{12\sqrt{2}C_0(\phi)a_d}{\sqrt{n\rho_{n}}\eta_{d}}\frac{\sum_{s\not = 2, s\le K}|\lambda_{s}|}{|\lambda_{2}|}\notag\\
\le & \frac{1}{\sqrt{n}}\frac{a_d}{\sqrt{n\rho_{n}}\eta_{d}} \lb 22\log^{\xi}n + 12\sqrt{2}C_0(\phi)\frac{\sum_{s=1}^{K}|\lambda_{s}|}{\lambda_{2}}\rb.\label{eq:deterministic3}
\end{align}
where the second inequality uses \eqref{eq:c0q}, \eqref{eq:logxin-dis-ass} and \eqref{eq:2-M-dis-ass}. By Theorem \ref{thm:eigen}, 
\[\lambda_{1} = m\lb p_0 + \sum_{r=1}^{d}2^{r-1}p_r\rb\]
and for any $j\in [d]$,
\[\lambda_{2^{j-1}+1} = \ldots = \lambda_{2^{j}} = m\lb p_0 + \sum_{r=1}^{d - j} 2^{r-1}p_r - 2^{d - j}p_{d - j + 1}\rb.\]
Since $p_{0} < p_{1} < \ldots < p_{d}$, 
\[\lambda_{2^{j}} = m(p_{0} - p_{d - j + 1}) + m\sum_{r = 1}^{d - j}2^{r - 1}(p_r - p_{d - j + 1})\le 0.\]
As a result,
\begin{align*}
  \frac{1}{m}\sum_{s=1}^{K}|\lambda_{s}| & = \frac{1}{m}\lambda_{1} - \frac{1}{m}\sum_{j=1}^{d}2^{j-1}\lambda_{2^{j}}\\
& = p_0 + \sum_{r=1}^{d}2^{r-1}p_r - \sum_{j=1}^{d}2^{j-1}\lb p_0 + \sum_{r=1}^{d - j} 2^{r-1}p_r - 2^{d - j}p_{d - j + 1}\rb\\
& = p_0 \lb 1 - \sum_{j=1}^{d}2^{j - 1}\rb + \sum_{r=1}^{d}2^{r - 1}p_{r} \lb 1 - \sum_{j=1}^{d - r}2^{j - 1} + 2^{d - r}\rb\\
& = \lb\sum_{j=1}^{d}2^{r}p_{r}\rb - (2^{d} - 2)p_{0}\le 2^{d + 1}p_{d}.
\end{align*}
Thus, 
\[\frac{\sum_{s=1}^{K}|\lambda_{s}|}{|\lambda_{2}|}\le \frac{m2^{d+1}p_{d}}{n\rho_{n}\eta_{d}} = \frac{2a_{d}}{\eta_{d}}.\]

Plugging this into \eqref{eq:deterministic3}, we conclude that, on event $E$, 
\[|(\tilde{u}_2-u_2)_j| \le\frac{1}{\sqrt{n}}\frac{a_d}{\sqrt{n\rho_{n}}\eta_{d}} \lb 22\log^{\xi}n + 24\sqrt{2}C_0(\phi)a_{d}\eta_{d}^{-1}\rb.\]
The proof of dis-assortative case is then completed by setting $C_1(\phi) = 22 + 24\sqrt{2}C_0(\phi)$.

\end{proof}


\begin{coro}\label{cor:KConcentration}
 Fix $\phi > 0$. Under the settings of Lemma \ref{lem:KConcentration}, there exists a constant $C_{2}(\phi) > 0$ which only depends on $\phi$ and $\zeta\in\{-1, +1\}$ such that, for some $\xi > 1$, if 
\[\frac{\max\{1,a_d\}}{\sqrt{n\rho_n}\eta_{d}} \max\left\{\log^{\xi} n, \frac{\max\{1, a_{d}\}}{\eta_{d}}\right\}\le C_{2}(\phi),\]
then
\[ \sign(\td{u}_{2j}) = \sign(u_{2j})\zeta\mbox{ for all }j\]
with probability at least $1-2n^{-\phi}$.
\end{coro}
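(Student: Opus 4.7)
The plan is to obtain the corollary as an almost immediate consequence of Lemma~\ref{lem:KConcentration}. The key observation is that the population eigenvector $u_2$ has entries of uniform magnitude: from the proof of Lemma~\ref{lem:KConcentration}, $u_2$ is taken to be a column of $\bar U = ZH_K/\sqrt n$, and because $H_K$ is a Hadamard matrix with $\pm 1$ entries, we have $|u_{2,j}| = 1/\sqrt n$ for every $j \in [n]$. Consequently, in order to guarantee that $\sign(\tilde u_{2,j}) = \zeta\,\sign(u_{2,j})$ for every coordinate $j$, it suffices to ensure the strict entrywise bound
\[
|(\tilde u_2 - \zeta u_2)_j| < |u_{2,j}| = \frac{1}{\sqrt n}, \qquad \forall\, j \in [n],
\]
since then $\tilde u_{2,j}$ and $\zeta u_{2,j}$ must lie on the same side of $0$.

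Next, I would apply Lemma~\ref{lem:KConcentration}, which says that on an event of probability at least $1 - 2n^{-\phi}$, there exists $\zeta \in \{-1,+1\}$ (the one making $\langle \tilde u_2, u_2\rangle \ge 0$) such that
\[
\|\tilde u_2 - \zeta u_2\|_\infty \le \frac{C_1(\phi)}{\sqrt n}\cdot \frac{\max\{1,a_d\}}{\sqrt{n\rho_n}\,\eta_d}\,\max\!\left\{\log^\xi n,\ \frac{\max\{1,a_d\}}{\eta_d}\right\}.
\]
Substituting the hypothesis of the corollary with $C_2(\phi) := \min\{1/4,\ 1/C_1(\phi)\}$ (the $1/4$ factor is chosen so that condition \eqref{eq:KConcentration_cond} of Lemma~\ref{lem:KConcentration} is automatically implied and the lemma can be invoked) makes the right-hand side strictly smaller than $1/\sqrt n$. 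Combined with the uniform-magnitude property of $u_2$, this immediately yields sign agreement on every coordinate, on the same high-probability event.

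There is essentially no genuine obstacle here beyond bookkeeping: one just has to confirm that the constant $C_2(\phi)$ simultaneously (i) validates the $1/4$ condition in \eqref{eq:KConcentration_cond}, and (ii) makes the entrywise perturbation strictly smaller than $1/\sqrt n$, both of which are ensured by taking $C_2(\phi) = \min\{1/4,\ 1/C_1(\phi)\}$. The nontrivial work has already been done in establishing the $\ell_\infty$ perturbation bound, so the corollary is best presented as a one-line deduction from Lemma~\ref{lem:KConcentration} together with the observation $|u_{2,j}| \equiv 1/\sqrt n$.
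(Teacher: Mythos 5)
Your proposal is correct and follows essentially the same route as the paper: invoke Lemma~\ref{lem:KConcentration}, observe that $|u_{2j}| \equiv 1/\sqrt{n}$, and choose $C_2(\phi)$ small enough to simultaneously validate condition \eqref{eq:KConcentration_cond} and push the entrywise perturbation below $1/\sqrt{n}$. The one small slip is that your choice $C_2(\phi)=\min\{1/4,\,1/C_1(\phi)\}$ only yields the non-strict bound $\|\tilde u_2-\zeta u_2\|_\infty \le 1/\sqrt{n}$, which does not by itself rule out a flipped or zero entry; the paper takes $C_2(\phi)=1/\max\{4,\,2C_1(\phi)\}$ so that the bound becomes $1/(2\sqrt{n})$, which is what actually forces sign agreement in every coordinate.
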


\begin{proof}
Without loss of generality, assume $\langle \td{u}_{2}, u_{2}\rangle \ge 0$ and take $s = 1$. Let $C_{2}(\phi) = 1 / \max\{4, 2C_1(\phi)\}$ where $C_1(\phi)$ is defined in Lemma \ref{lem:KConcentration}. Under our condition, the condition \eqref{eq:KConcentration_cond} in Lemma \ref{lem:KConcentration} is satisfied since
\[\frac{\max\{1,a_d\}}{\sqrt{n\rho_n}\eta_{d}}\le \frac{\max\{1,a_d\}\log^{\xi}n}{\sqrt{n\rho_n}\eta_{d}}\le C_{2}(\phi) \le 1/4.\]
Thus by Lemma \ref{lem:KConcentration}, 
\[\norm{\tilde{u}_2 - u_2}_{\infty} = \min_{\zeta\in\{-1, +1\}}\norm{\tilde{u}_2 - \zeta u_2}_{\infty}\le C_1(\phi)\frac{C_{2}(\phi)}{\sqrt{n}}\le \frac{1}{2\sqrt{n}},\]
with probability at least $1-2n^{-\phi}$. The proof is completed by noting $|u_{2j}|\equiv 1 / \sqrt{n}$.
\end{proof}

\begin{proof}[\textbf{Proofs of Theorem~\ref{thm:consistency-ass} and Theorem~\ref{thm:consistency-dis-ass}}]
We first prove the theorem assuming a perfect stopping rule without error is used. Define
\[T(\ell; \phi) = 1 + 2^{\phi+1}T(\ell-1; \phi), \quad T(0; \phi) = 0.\]
Then
\[\frac{T(\ell; \phi)}{2^{\ell(\phi + 1)}} = \frac{T(\ell-1;\phi)}{2^{(\ell-1)(\phi+1)}} + \frac{1}{2^{\ell(\phi + 1)}} = \ldots = T(0; \phi) + \sum_{i=1}^{\ell}\frac{1}{2^{i(\phi + 1)}} \le 1.\]
Thus $T(\ell; \phi)\le 2^{\ell(\phi + 1)} = (2^{\ell})^{\phi + 1} = K^{\ell (\phi + 1) / d}$. It is left to show that Theorem \ref{thm:consistency-ass} and Theorem \ref{thm:consistency-dis-ass} hold with probability $1 - 2T(\ell; \phi)n^{-\phi}$, with $C(\phi) = C_{2}(\phi)$ where $C_{2}(\phi)$ is defined in Corollary \ref{cor:KConcentration}.

We prove this by induction in $d$ and $\ell$. Starting with the case of $\ell = 1$, let $E_{\mathrm{top}}$ be the event that the top two mega-communities which form the first split are exactly recovered by HCD-Sign. For convenience we write $\td{a}_{\ell}$ for $\max\{1, a_{\ell}\}$. By definition, $\eta_{(1)} = \eta_d$ and $\nu_{(1)} = 2^{-1/2}\td{a}_{d}$.    Note that

\begin{align*}
 \frac{\td{a}_{d} \max\{\log^{\xi}n, \td{a}_{d}\eta_{d}^{-1}\}}{\sqrt{n\rho_n}\eta_{d}}  & =  \frac{\sqrt{2}}{\sqrt{n\rho_n}\eta_{d}}2^{-1/2} \td{a}_{d}\max\{\log^{\xi}n, \td{a}_{d}\eta_{d}^{-1}\} \\
  &=  \sqrt{\frac{K^{1/d}}{n\rho_{n}}} \frac{\max\{\log^{\xi} n, \td{a}_{d}\eta_{(1)}^{-1}\}}{\eta_{(1)}} (2^{-1/2} \td{a}_{d}).
\end{align*}
In the assortative case, $\td{a}_{d} = 1$. By \eqref{eq:main_cond} in Theorem \ref{thm:consistency-ass}, we have
\[\sqrt{\frac{K^{1/d}}{n\rho_{n}}} \frac{\max\{\log^{\xi} n, \td{a}_{d}\eta_{(1)}^{-1}\}}{\eta_{(1)}} (2^{-1/2} \td{a}_{d})\le \sqrt{\frac{K^{1/d}}{n\rho_{n}}} \frac{\max\{\log^{\xi} n, \eta_{(1)}^{-1}\}}{\eta_{(1)}} < C(\phi).\]
In the dis-assortative case, $\td{a}_{d} = a_{d}$. By \eqref{eq:main_cond-dis-ass} in Theorem \ref{thm:consistency-dis-ass}, we have
\[\sqrt{\frac{K^{1/d}}{n\rho_{n}}} \frac{\max\{\log^{\xi} n, \td{a}_{d}\eta_{(1)}^{-1}\}}{\eta_{(1)}} (2^{-1/2} \td{a}_{d})\le \sqrt{\frac{K^{1/d}}{n\rho_{n}}} \frac{\max\{\log^{\xi} n, \sqrt{2}\nu_{(1)}\eta_{(1)}^{-1}\}}{\eta_{(1)}}\nu_{(1)} < \sqrt{2}C(\phi).\]
Thus, for both cases, 
\[ \frac{\td{a}_{d} \max\{\log^{\xi}n, \td{a}_{d}\eta_{d}^{-1}\}}{\sqrt{n\rho_n}\eta_{d}} < C_{2}(\phi).\]

Then Corollary \ref{cor:KConcentration} implies that
\begin{equation}
  \label{eq:Etop}
  \p(E_{\mathrm{top}})\ge 1 - 2n^{-\phi}. 
\end{equation}
When $d = 1$, the claim holds because $T(1; \phi) = 1$. For $d > 1$, let $T_{\ell}$ and $T_{r}$ be the left and the right branches of the root node.   Let $E_{\mathrm{left}}$  and $E_{\mathrm{right}}$ be the events that $T_{\ell}$ and $T_{r}$, respectively, are exactly recovered by HCD-Sign. Note that each branch is itself a BTSBM with parameters $(n / 2; \rho_{n}, a_{1}, \ldots, a_{d - 1})$.    We can again apply Corollary \ref{cor:KConcentration} to each branch with $n' = n / 2$, $d' = d-1$, $\ell' = \ell - 1$. Notice that in both assortative and dis-assortative setting, if we use the definitions of \eqref{eq:delta} and \eqref{eq:nu-new}
\begin{align*}
  \eta'_{(\ell')} &= \min\{\eta_{d'-r+1}: r\in [\ell']\}\\
                    &=  \min\{\eta_{d-r+1}: 2\le r \le \ell\}\\
& \ge  \min\{\eta_{d-r+1}: r\in [\ell]\}  = \eta_{(\ell)}.
\end{align*}
and
\begin{align*}
  \nu'_{(\ell')} &= \max\{2^{-\frac{\ell'-r+1}{2}}\max\{1,a_{d'-r+1}\}: r\in [\ell']\}\\
  & =  \max\{2^{-\frac{\ell-1-r+1}{2}}\max\{1,a_{d-1-r+1}\}: r\in [\ell-1]\}\\
  & = \max\{2^{-\frac{\ell-r+1}{2}}\max\{1,a_{d-r+1}\}:2\le r \le \ell\}\\
& \le  \max\{2^{-\frac{\ell-r+1}{2}}\max\{1,a_{d-r+1}\}:r \in [\ell]\} = \nu_{(\ell)}.
\end{align*}

Thus, in the assortative case, 
\[\sqrt{\frac{K^{1/d}}{n\rho_{n}}} \frac{\max\{\log^{\xi} n, \eta_{(\ell')}^{-1}\}}{\eta_{(\ell')}} \le \sqrt{\frac{K^{1/d}}{n\rho_{n}}} \frac{\max\{\log^{\xi} n, \eta_{(\ell)}^{-1}\}}{\eta_{(\ell)}} < C(\phi)\]
and in the dis-assortative case,
\[\sqrt{\frac{2^{\ell'}}{n'\rho_{n}}} \frac{\nu'_{(\ell')}\max\{\log^{\xi} n', \nu'_{(\ell')}\eta_{(\ell')}^{'-1}\}}{\eta'_{(\ell')}}\le \sqrt{\frac{2^{\ell}}{n\rho_{n}}} \frac{\nu_{(\ell)}\max\{\log^{\xi} n, \nu_{(\ell)}\eta_{(\ell)}^{-1}\}}{\eta_{(\ell)}} < C(\phi).\]


Then the induction hypothesis implies that
\[\p(E_{\mathrm{left}}) \ge 1 - 2T(\ell'; \phi)(n / 2)^{-\phi}, \quad \p(E_{\mathrm{right}}) \ge 1 - 2T(\ell'; \phi)(n / 2)^{-\phi}.\]
Let $E = E_{\mathrm{top}}\cap E_{\mathrm{left}}\cap E_{\mathrm{right}}$ be the event that $T$ is exactly recovered, i.e., 
\[\min_{\Pi \in \text{Perm}(q)}\Pi(\mc(\hat{T},\hat{c},q) )= \mc(T,c,q), \quad \text{for all~~}q\le \ell\]
where $\text{Perm}(q)$ contains all potential permutations of labels in $S_q$. A union bound then gives
\[\p(E) \ge 1 - 2(1 + T(\ell'; \phi)2^{\phi + 1})n^{-\phi} = 1 - 2T(\ell; \phi)n^{-\phi} , \]
as claimed.

Finally, if one uses a consistent stopping rule with rate $\phi'$, the only additional components we need to insert is we need to make sure that the stopping does not make error up to level $\ell$. By the union of events, it is easy to see that the error probability is upper bounded by 
$$\sum_{r=0}^\ell2^{(\phi'+1)r}n^{-\phi'} = K^{(\phi'+1)\ell/d}n^{-\phi'}.$$

Finally, at the end of $\ell=d$, the stopping rule make false split the final $K$ communities with probability no more than $K(n/K)^{-\phi'} = K(1+\phi')n^{-\phi}$.

This completes the proof.

 \end{proof}

Before proving Proposition~\ref{prop:NB-consistency}, we will need the result about model selection consistency of the non-backtracking matrix from \cite{le2015estimating}, as stated next.

\begin{lem}[Modified version of Theorem 4.2 of \cite{le2015estimating}]\label{lem:LeConsistency}
Given a network $A$ generated from probability matrix $P$ as the inhomogeneous Erd\"{o}s-Renyi model, assume all nodes have the same expected degree, with $\mu = \sum_{j=1}^nP_{ij}$ for all $i \in [n]$. Furthermore, assume 
\begin{equation}\label{eq:NB-condition3}
\mu^5\cdot \max_{ij}P_{ij} \le n^{-1/13}
\end{equation}

and for some constant $C>0$, 
\begin{equation}\label{eq:NB-condition1}
\mu \ge C\log{n}.
\end{equation}
Also, assume $P$ is rank $K$ with nonzero eigenvalues $|\lambda_1| \ge |\lambda_2| \cdots \ge |\lambda_K| > 0$. We have the following two properties:
\begin{enumerate}
\item If $K \ge 2$ and 
\begin{equation}\label{eq:NB-condition2}
|\lambda_1| \ge |\lambda_2| \ge 4\sqrt{\mu}+C\sqrt{\log{n}},
\end{equation}
 then for sufficiently large $n$, with probability at least $1-1/n$, the non-backtracking matrix has at least 2 real eigenvalues larger than 
$$\lb 1+C\lb\frac{\log{n}}{\mu}\rb^{1/4}\rb\sqrt{\mu}.$$
\item If $K = 1$ and $|\lambda_1|\ge 4\sqrt{\mu}+C\sqrt{\log{n}}$, then for sufficiently large $n$, with probability at least $1-1/n$, the non-backtracking matrix has exactly 1 real eigenvalue larger than 
$$\lb 1+C\lb\frac{\log{n}}{\mu}\rb^{1/4}\rb\sqrt{\mu}.$$
\end{enumerate}
\end{lem}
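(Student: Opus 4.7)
The plan is to derive this result as essentially a careful repackaging of Theorem 4.2 of \cite{le2015estimating}, specialized to the equal-expected-degree setting and with an additional rigidity statement for the $K=1$ case. The core technical machinery — the Bordenave--Lelarge--Massouli\'e (BLM) style spectral concentration for the non-backtracking matrix of a sparse inhomogeneous Erd\H{o}s--R\'enyi graph — will be taken as a black box. First I would verify that the hypotheses of Le and Levina's original theorem are implied by the hypotheses stated here: the gap condition $|\lambda_1|\ge |\lambda_2|\ge 4\sqrt{\mu}+C\sqrt{\log n}$ is exactly of the form their argument requires, while the sparsity condition $\mu^5\max_{ij}P_{ij}\le n^{-1/13}$ is the explicit form of the density restriction used in Bordenave's trace-method bound for the bulk spectral radius (the exponent $1/13$ arises from balancing the number of moments needed in $\mathrm{tr}(B_{\mathrm{nb}}^{2k})$ against the growth of $\max P_{ij}$).

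For the case $K\ge 2$, the plan is to invoke the original theorem directly. Its conclusion is that with probability at least $1-1/n$, every eigenvalue of $P$ whose magnitude exceeds $\sqrt{\mu}(1+o(1))$ produces a real outlier eigenvalue of $B_{\mathrm{nb}}$ within distance $o(\sqrt{\mu})$ of itself, so in particular the two largest eigenvalues $\lambda_1,\lambda_2$ yield two real eigenvalues of $B_{\mathrm{nb}}$ exceeding $(1+C(\log n/\mu)^{1/4})\sqrt{\mu}$; the gap assumption supplies the safety margin needed to push past this explicit threshold. Note that we do not need to argue that \emph{exactly} two outliers appear — the statement only asserts ``at least $2$'', which is exactly what is needed downstream as a stopping rule that correctly identifies presence of more than one community.

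For the case $K=1$, the original theorem (applied with $K=1$) produces one real outlier from $\lambda_1$. The additional content here is ruling out spurious outliers above the threshold. This follows from the complementary BLM bulk bound, which states that with probability at least $1-1/n$ every non-outlier eigenvalue of $B_{\mathrm{nb}}$ lies inside the disk of radius $\sqrt{\mu}(1+o(1))$. Since $\mu\ge C\log n$ by \eqref{eq:NB-condition1}, the bulk radius is strictly below $(1+C(\log n/\mu)^{1/4})\sqrt{\mu}$, so only the single Perron-type outlier crosses the threshold. A union bound over the two bad events gives the claimed probability.

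The main obstacle will be keeping the constants and sparsity conditions consistent between the original theorem and the modified statement, in particular making sure that the equal-expected-degree assumption $\sum_j P_{ij}\equiv \mu$ does not degrade the concentration rate coming from Bordenave's argument (it does not; in fact it simplifies the covariance structure of $B_{\mathrm{nb}}$). Nothing essentially new about sparse non-backtracking spectra is being proved here — the work is in tracking which pieces of the Le--Levina proof are needed to produce the two explicit statements in cases $K\ge 2$ and $K=1$, and in confirming that the sparsity condition \eqref{eq:NB-condition3} is the right explicit form of the abstract density hypothesis used there. As the paper notes, \eqref{eq:NB-condition3} is almost certainly an artifact of the proof technique and not intrinsic.
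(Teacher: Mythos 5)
Your proposal matches the paper's treatment: the paper states this lemma without any proof of its own, presenting it as a direct restatement and specialization of Theorem 4.2 of \cite{le2015estimating}, which is exactly what you propose to do by checking that the hypotheses here (the degree lower bound, the sparsity condition $\mu^5\max_{ij}P_{ij}\le n^{-1/13}$, and the eigenvalue gap) imply theirs and then reading off the $K\ge 2$ and $K=1$ cases from their outlier/bulk dichotomy. Since the paper supplies no independent argument for this lemma, there is nothing further to compare.
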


\begin{proof}[Proof of Proposition~\ref{prop:NB-consistency}]
For notational simplicity, we will ignore the difference between $P$ and $\tilde{P}$. As can be seen clearly, this does not change our result, except on slightly different constants.  We assume $\ell = d$ without loss of generality. It is left to show that the stopping rule is consistent with rate $1$ for any node. Note that the subgraph corresponding to a mega-community at layer $r$ is drawn from a BTSBM with parameters $(p_{0}, \ldots, p_{d - r + 1})$. Fix a mega- community at layer $r$.

By Theorem~\ref{thm:eigen2} and noting that each block in the mega-community is of size $m$, we have
$$\mu  = m\rho_n\lb 1+\sum_{j=1}^{d - r + 1}2^{j-1}a_j\rb,$$
while
$$|\lambda_2| =  m\rho_n\bigg|1+\sum_{j=1}^{d - r}2^{r-1}a_r-2^{d- r + 1}a_{d - r + 1}\bigg|.$$
First we verify \eqref{eq:NB-condition3}. Since the megacommunity has size $2^{d - r + 1}m$ and the maximum probability is $p_{0} = \rho_{n}\max\{1, a_{d - r + 1}\}\le \rho_{n}\max\{1, a_{d}\}$. Additionally, 
\[\mu\le m\rho_{n}2^{d - r + 1}\max\{1, a_{d}\}\le n\rho_{n}\max\{1, a_{d}\}.\]
By \eqref{eq:stopping-rule-3'},
\[\mu^{5}(n\max P_{ij}) \le (n\rho_{n}\max\{1, a_{d}\})^{6}\le n^{12/13}\Longrightarrow \mu^{5}\max P_{ij}\le n^{-1/13}\le (2^{d-r+1}m)^{-1/13}.\]
Thus, \eqref{eq:NB-condition3} holds. By \eqref{eq:stopping-rule-2'}, 
\[\log n\le m\rho_{n}\le m\rho_{n}\lb 1+\sum_{j=1}^{d- r +1}2^{j-1}a_j\rb = \mu.\]
Thus \eqref{eq:NB-condition1} holds with $C = 1$. On the other hand, by definition of $\zeta_{(\ell)}$, 
\[\frac{25}{m\rho_{n}} \le \frac{(1+\sum_{r=1}^{d-r}2^{r-1}a_r-2^{d-r}a_{d-r+1})^2}{1+\sum_{r=1}^{d-r+1}2^{r-1}a_r} = \frac{\lambda_{2}^{2}}{m\rho_{n} \mu}.\]
Thus, \eqref{eq:NB-condition2} holds with $C = 1$ because 
\[\lambda_{2}\ge 5\sqrt{\mu} \ge 4\sqrt{\mu} + \sqrt{\log n}.\]
Since the conditions \eqref{eq:NB-condition3}, \eqref{eq:NB-condition1} and \eqref{eq:NB-condition2} hold for any $r$, the stated consistency is a result of Lemma \ref{lem:LeConsistency}.

\end{proof}

  \section{The eigenstructure of the BTSBM with unequal block sizes}
  
 Our main presentation assumed the communities have equal sizes for simplicity and clarity. This assumption can be relaxed to community sizes of the same order within the same theoretical framework, at the cost of more involved calculations.  
  
  \begin{thm}\label{thm:unequal}
  Let $P$ be the probability matrix of an assortative or a dis-assortative BTSBM with $K = 2^{d}$ communities and parameters $p_0, p_1, \ldots, p_d$. Let $n_{i}$ denote the size of the $i$-th community. Let $\lambda_{2}$ be the second largest eigenvalue of $P$ in absolute value, and $u_{2}$ the corresponding eigenvector, and $\delta_{2}$ the gap between $\lambda_{2}$ and $\lambda_3$. Then there exists a constant $c > 0$ that only depends on $K$ and $p_0, \ldots, p_d$ such that if
  \begin{equation}
    \label{eq:1+c}
    \max_{i}\bigg|\sqrt{\frac{Kn_{i}}{n}} - 1\bigg|\le c, 
  \end{equation}
  then $\lambda_{2}$  has multiplicity 1,
  \begin{equation}
    \label{eq:Delta2}
    \delta_{2}\ge \left\{
      \begin{array}{ll}
        n\min\{p_d / 2, (p_{d-1} - p_{d}) / 4\} & (\mbox{assortative case})\\
        n(p_{d} - p_{d - 1}) / 4 & (\mbox{disassortative case})
      \end{array}\right.,
  \end{equation}
  and
  \begin{equation}
    \label{eq:u2i}
    \frac{1}{2\sqrt{n}}\le \min_{i}|u_{2i}|\le \max_{i}|u_{2i}|\le \frac{2}{\sqrt{n}}.
  \end{equation}
\end{thm}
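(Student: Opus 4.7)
The plan is to view the unbalanced BTSBM as a small perturbation of the balanced case handled by Theorem \ref{thm:eigen2} and Corollary \ref{cor:eigen}, and then to transport the eigengap and the eigenvector structure via Weyl's inequality and the Davis--Kahan $\sin\Theta$ theorem.  Write $D = \diag(n_1,\ldots,n_K)$ and $\epsilon_i = \sqrt{Kn_i/n} - 1$, so $E := \diag(\epsilon_1,\ldots,\epsilon_K)$ satisfies $\norm{E}\le c$ under \eqref{eq:1+c} and $D^{1/2} = \sqrt{n/K}(I+E)$.  Since $Z^TZ = D$, the columns of $W := ZD^{-1/2}$ are orthonormal and
\[P + p_0 I = W M W^T, \qquad M := D^{1/2} B D^{1/2}.\]
Thus the nonzero eigenvalues of $P + p_0 I$ coincide with those of the $K\times K$ matrix $M$, and each unit eigenvector $v$ of $M$ lifts to a unit eigenvector $u = Wv$ of $P$ whose coordinate on node $i$ is $u_i = v_{c(i)}/\sqrt{n_{c(i)}}$.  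Setting $M_0 := (n/K)B$ and $\Delta M := M - M_0 = (n/K)(EB + BE + EBE)$ yields the basic perturbation bound $\norm{\Delta M} \le (n/K)(2c + c^2)\norm{B}$.

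The balanced matrix $M_0$ has exactly the eigenstructure described in Theorem \ref{thm:eigen} and Corollary \ref{cor:eigen}: in either regime its second-largest-in-magnitude eigenvalue $\mu_2^{(0)}$ is simple, with gap to every other eigenvalue at least $\Delta_2^{(0)} := n\min\{p_d,(p_{d-1}-p_d)/2\}$ in the assortative case and $\Delta_2^{(0)} := n(p_d-p_{d-1})/2$ in the disassortative case, and its unit eigenvector $v_2^{(0)}$ has every entry equal to $\pm 1/\sqrt{K}$.  Choose $c$ small enough, as a function of $K$ and $p_0,\ldots,p_d$ only, so that $\norm{\Delta M}\le \Delta_2^{(0)}/4$.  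Weyl's inequality then keeps every eigenvalue of $M$ within $\Delta_2^{(0)}/4$ of its $M_0$-counterpart, which simultaneously maintains the simplicity of the second-largest-in-magnitude eigenvalue (now equal to $\lambda_2 + p_0$) and forces the gap to every other eigenvalue of $M$ to remain at least $\Delta_2^{(0)}/2$.  Since the uniform shift by $-p_0$ and the $-p_0$ null eigenvalues of $P$ do not alter gaps between the non-null eigenvalues (which dominate $p_0$ by a factor of order $n/K$), this delivers \eqref{eq:Delta2}.

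With the gap preserved, the Davis--Kahan $\sin\Theta$ theorem produces a unit eigenvector $v_2$ of $M$ (for an appropriate sign) satisfying $\norm{v_2 - v_2^{(0)}}_2 \le 2\norm{\Delta M}/\Delta_2^{(0)} \le c\cdot C_1(K,p_0,\ldots,p_d)$.  Because $v_2, v_2^{(0)}\in\bR^K$, the trivial bound $\norm{\cdot}_\infty \le \norm{\cdot}_2$ transfers this control coordinate-wise, so after shrinking $c$ one more time we obtain $|v_{2,k}| \in [\tfrac{3}{4\sqrt{K}}, \tfrac{5}{4\sqrt{K}}]$ for every $k$.  Back-transforming via $u_{2,i} = v_{2,c(i)}/\sqrt{n_{c(i)}}$ and $\sqrt{n_{c(i)}}\in[(1-c)\sqrt{n/K},(1+c)\sqrt{n/K}]$ then pins $|u_{2,i}|$ inside $[1/(2\sqrt{n}),2/\sqrt{n}]$ once $c$ is sufficiently small.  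The main obstacle is the simultaneous calibration of $c$: a single choice must dominate the balanced eigengap in Weyl, dominate $\sqrt{K}$ times the Davis--Kahan factor so that $v_2$ stays pointwise comparable to $v_2^{(0)}$, and control the explicit $1/\sqrt{n}$ envelope on $u_2$.  Since \eqref{eq:1+c} permits $c$ to depend on both $K$ and the $p_r$'s, this reduces to a bookkeeping exercise rather than a fundamental difficulty.
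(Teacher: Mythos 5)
Your proposal is correct and follows essentially the same route as the paper's proof: your $M = D^{1/2}BD^{1/2} = \tfrac{n}{K}(I+E)B(I+E)$ is exactly the paper's $\tfrac{n}{K}\Pi B\Pi$ with $\Pi = I+E$, and the subsequent steps --- the $(2c+c^2)\norm{B}$ perturbation bound, Weyl's inequality for the simple second eigenvalue and half-gap, Davis--Kahan plus $\norm{\cdot}_\infty \le \norm{\cdot}_2$ to pin $|v_{2,k}|$ in $[\tfrac{3}{4\sqrt{K}},\tfrac{5}{4\sqrt{K}}]$, and the back-transformation through $\sqrt{n_{c(i)}}$ --- coincide with the paper's argument in both structure and constants.
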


\begin{proof}

  First note that we can replace $P$ by $\td{P} = P + p_{0}I$ since this does not affect the eigenvalue multiplicity, the eigengap and the eigenvector.  Then 
  \[\td{P} = ZBZ^{T}.\]
  Let $\Pi = \diag(\sqrt{Kn_{1}/n}, \ldots, \sqrt{Kn_{K}/n})$ and
  \[Q =
    \begin{bmatrix}
      \one_{n_{1}} / \sqrt{n_{1}} & 0 & \cdots & 0\\
      0 & \one_{n_{2}} / \sqrt{n_{2}} & \cdots & 0\\
      \vdots & \vdots & \ddots & \vdots\\
      0 & 0 & \cdots & \one_{n_{K}} / \sqrt{n_{K}}
    \end{bmatrix}.
  \]
  Then $Q^{T}Q = I$ and
  \[\td{P} = \frac{n}{K}Q^{T}(\Pi B\Pi)Q.\]
  Let $\lambda_{j}'$ be the $j$-th largest eigenvalue of $\Pi B\Pi$ and $v_{j}$  the corresponding eigenvector. Further let $\delta_{j}' = \min\{\lambda_{j-1}' - \lambda_{j}', \lambda_{j}' - \lambda_{j+1}'\}$. Then
  \begin{equation}
    \label{eq:Delta2prime}
    \lambda_{1} = \frac{n}{K}\lambda_{1}', \quad \lambda_{2} = \frac{n}{K}\lambda_{2}', \quad \lambda_{3} = \frac{n}{K}\lambda_{3}' \Longrightarrow \delta_{2} = \frac{n}{K}\delta_{2}',
  \end{equation}
  and
  \begin{equation}
    \label{eq:u2}
    u_{2} =     \begin{bmatrix}
      v_{12}\one_{n_{1}} / \sqrt{n_{1}}\\
      v_{22}\one_{n_{2}} / \sqrt{n_{2}}\\
      \vdots\\
      v_{k2}\one_{n_{K}} / \sqrt{n_{K}}
    \end{bmatrix}.
  \end{equation}
  Let $c \le 1$. Then
  \begin{align*}
    \|\Pi B\Pi - B\| &\le 2\|B (\Pi - I)\| + \|(\Pi - I)B(\Pi - I)\|\\
    & \le (2c + c^{2})\|B\| \le 3c\|B\|.
  \end{align*}
  Let $\lambda_j(B)$ be the $j$-th largest eigenvalue of $B$.   Consider 
  \[c < \frac{\min\{\lambda_1(B) - \lambda_2(B), \lambda_2(B) - \lambda_3(B)\}}{12\|B\|},\]
  By Weyl's inequality,
  \[|\lambda_{j}' - \lambda_j(B)|\le \|\Pi B\Pi - B\|\le 3c\|B\| < \frac{\min\{\lambda_1(B) - \lambda_2(B), \lambda_2(B) - \lambda_3(B)\}}{4}.\]
  This implies
  \begin{align*}
    \lambda_{1}' - \lambda_{2}' \ge \lambda_{1} - \lambda_{2} - \frac{\min\{\lambda_1(B) - \lambda_2(B), \lambda_2(B) - \lambda_3(B)\}}{2}\ge \frac{1}{2}(\lambda_{1} - \lambda_{2}),
  \end{align*}
  and
  \begin{align*}
    \lambda_{2}' - \lambda_{3}' \ge \lambda_{2} - \lambda_{3} - \frac{\min\{\lambda_1(B) - \lambda_2(B), \lambda_2(B) - \lambda_3(B)\}}{2}\ge \frac{1}{2}(\lambda_{2} - \lambda_{3}).
  \end{align*}
  Thus $\lambda_{2}'$ has multiplicity $1$ and the eigengap is at least the half of the eigengap of $B$. By Theorem \ref{thm:eigenB},
  \[\min\{\lambda_1(B) - \lambda_2(B), \lambda_2(B) - \lambda_3(B)\} \ge \left\{
      \begin{array}{ll}
        K\min\left\{p_{d}, (p_{d-1} - p_{d}) / 2\right\} & (\mbox{assortative case})\\
        K(p_{d-1} - p_{d}) / 2 & (\mbox{disassortative case})\\
      \end{array}\right..\]
  Therefore, \eqref{eq:Delta2} follows from \eqref{eq:Delta2prime}.

Let $v_2(B)$ denote the second eigenvector of $B$ and assume $v_{2}^{T}v_2(B)\ge 0$ without loss of generality.   From the Davis-Kahan theorem ~\citep[e.g.][Theorem 6.9]{cape2019two}
  \[\|v_2(B) - v_{2}\|_{2}\le \frac{2\|\Pi B \Pi - B\|}{\min\{\lambda_1(B) -\lambda_2(B), \lambda_2(B) - \lambda_3(B)\}}\le \frac{6c\|B\|}{\min\{\lambda_1(B) -\lambda_2(B), \lambda_2(B) - \lambda_3(B)\}}.\]
Consider 
\[c < \frac{\min\{\lambda_1(B) -\lambda_2(B),\lambda_2(B) - \lambda_3(B)\}}{24\|B\|\sqrt{K}},\]
then
\[\|v_2(B) - v_{2}\|_{\infty}\le \|v_2(B) - v_{2}\|_{2}\le \frac{1}{4\sqrt{K}}.\]
By Theorem \ref{thm:eigenB},
\[|v_{2i}(B)| = \frac{1}{\sqrt{K}}\Longrightarrow \frac{3}{4\sqrt{K}}\le |v_{2i}|\le \frac{5}{4\sqrt{K}}.\]
By \eqref{eq:u2}, the condition \eqref{eq:1+c} and the restriction  $c\le 1$,
\[|u_{2i}| \ge \frac{1}{\max_{i}\sqrt{n_{i}}}\min_{i}|v_{2i}|\ge \sqrt{\frac{K}{(1 + c)n}}\frac{3}{4\sqrt{K}}\ge \frac{1}{2\sqrt{n}}\]
and
\[|u_{2i}|\le \frac{1}{\min_{i}\sqrt{n_{i}}}\max_{i}|v_{2i}|\le \sqrt{\frac{(1 + c)K}{n}}\frac{5}{4\sqrt{K}}\le \frac{2}{\sqrt{n}}\]
\end{proof}
Theorem \ref{thm:unequal} implies a certain degree of robustness of the eigenstructure to unequal community sizes. When $\max_{i}n_{i}$ differs from $\min_{i}n_{i}$ by a constant, the eigengap and the eigenvector behave essentially the same as they do with equal community sizes, except for a multiplicative constant. We can then prove Theorem \ref{thm:consistency-ass} and Theorem \ref{thm:consistency-dis-ass} for this case using exactly the same arguments.

\section{Proof of Lemma \ref{lem:deterministicControl}}\label{app:eldridge}
Our Lemma \ref{lem:deterministicControl} is a corrected version of Theorem 9 of \cite{eldridge2017unperturbed}, whose original proof contained a mistake we fixed.   After we made the earlier draft of this paper online,  \cite{eldridge2017unperturbed} also posted a similar corrected version. The Lemma \ref{lem:deterministicControl} we use for the proof has a slightly different form compared to that of \cite{eldridge2017unperturbed}. For completeness, we give a full proof of Lemma \ref{lem:deterministicControl} here.

\begin{lem}[Theorem 8 of \cite{eldridge2017unperturbed}]\label{lem:neumann}
  Fix any $u_{1}, \ldots, u_{n}$, $\td{u}_{1}, \ldots, \td{u}_{n}$ and $t\in [n]$. Suppose that $\|H\| < \td{\lambda}_{t}$.  Then 
\[\td{u}_{t} = \sum_{s=1}^{n}\frac{\lambda_{s}}{\td{\lambda}_{t}} \langle  \td{u}_{t}, u_{s}\rangle \sum_{p\ge 0}\lb\frac{H}{\td{\lambda}_{t}}\rb^{p}u_{s}.\]
\end{lem}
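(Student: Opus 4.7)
The plan is to derive the Neumann expansion directly from the eigenvalue equation $A \tilde{u}_{t} = \tilde{\lambda}_{t} \tilde{u}_{t}$. Writing $A = P + H$ and rearranging, I obtain $(\tilde{\lambda}_{t} I - H)\tilde{u}_{t} = P \tilde{u}_{t}$. Since $\|H\| < \tilde{\lambda}_{t}$, the operator norm $\|H/\tilde{\lambda}_{t}\| < 1$, so $\tilde{\lambda}_{t} I - H$ is invertible and its inverse admits the convergent Neumann expansion
\[
(\tilde{\lambda}_{t} I - H)^{-1} = \frac{1}{\tilde{\lambda}_{t}}\sum_{p \ge 0}\lb\frac{H}{\tilde{\lambda}_{t}}\rb^{p}.
\]
Applying this to both sides gives
\[
\tilde{u}_{t} = \frac{1}{\tilde{\lambda}_{t}}\sum_{p \ge 0}\lb\frac{H}{\tilde{\lambda}_{t}}\rb^{p} P \tilde{u}_{t}.
\]

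Next, I would expand $\tilde{u}_{t}$ in the eigenbasis of $P$. Because $P$ is symmetric, $\{u_{1}, \ldots, u_{n}\}$ is an orthonormal basis of $\bR^{n}$, so $\tilde{u}_{t} = \sum_{s=1}^{n} \langle \tilde{u}_{t}, u_{s}\rangle u_{s}$, and therefore $P \tilde{u}_{t} = \sum_{s=1}^{n} \lambda_{s}\langle \tilde{u}_{t}, u_{s}\rangle u_{s}$. Substituting this into the previous display and pulling the scalars $\lambda_{s}/\tilde{\lambda}_{t}$ and $\langle \tilde{u}_{t}, u_{s}\rangle$ outside the power series in $H$ (which is valid because each individual series converges absolutely in operator norm) yields
\[
\tilde{u}_{t} = \sum_{s=1}^{n} \frac{\lambda_{s}}{\tilde{\lambda}_{t}} \langle \tilde{u}_{t}, u_{s}\rangle \sum_{p \ge 0}\lb\frac{H}{\tilde{\lambda}_{t}}\rb^{p} u_{s},
\]
which is the claimed identity.

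There is essentially no obstacle here; the only point requiring a brief comment is that $\tilde{\lambda}_{t}$ must be nonzero and of the same sign as indicated by the hypothesis $\|H\| < \tilde{\lambda}_{t}$ (so $\tilde{\lambda}_{t} > 0$), ensuring the Neumann series is well defined. In the case where the relevant eigenvalue is negative, one applies the same argument after replacing $\tilde{\lambda}_{t}$ with $|\tilde{\lambda}_{t}|$ and tracking signs, but this does not change the structure of the proof. The substantive use of this lemma comes in Lemma~\ref{lem:deterministicControl}, where this expansion is truncated and combined with bounds on $\xi_{j}(u_{s}; H, \tilde{\lambda}_{t})$ to obtain the entrywise perturbation inequality; the identity itself is a purely algebraic consequence of the Neumann series and the eigenbasis expansion.
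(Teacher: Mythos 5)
Your proof is correct. Note that the paper does not actually prove this lemma itself---it is quoted verbatim as Theorem 8 of \cite{eldridge2017unperturbed} and used as an imported ingredient in the proof of Lemma~\ref{lem:deterministicControl}---but your derivation (rewrite the eigenvalue equation as $(\td{\lambda}_t I - H)\td{u}_t = P\td{u}_t$, invert via the Neumann series using $\|H\| < |\td{\lambda}_t|$, and expand $P\td{u}_t$ in the orthonormal eigenbasis of $P$) is exactly the standard argument behind that result. One trivial remark: no sign-tracking is actually needed when $\td{\lambda}_t < 0$, since the series $\sum_{p\ge 0}(H/\td{\lambda}_t)^p$ converges and the identity holds verbatim whenever $\|H\| < |\td{\lambda}_t|$; the hypothesis as written merely forces $\td{\lambda}_t > 0$ because $\|H\| \ge 0$.
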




\begin{lem}[Theorem 2 of \cite{yu2015useful}]\label{lem:davis_kahan}
  Let $\Sigma, \hat{\Sigma} \in \bR^{p\times p}$ be symmetric matrices with eigenvalues $\lambda_{1}\ge \lambda_{2}\ge \ldots \ge \lambda_{p}$ and $\td{\lambda}_{1}\ge \td{\lambda}_{2}\ge \ldots \ge \td{\lambda}_{p}$ respectively. Fix $1 \le r \le s \le p$ and assume that $\min\{\lambda_{r - 1} - \lambda_{r}, \lambda_{s} - \lambda_{s + 1}\} > 0$ where $\lambda_{0} = \infty$ and $\lambda_{p+1} = -\infty$. Let $d = s - r + 1$ and let $V = (v_{r}, v_{r+1}, \ldots, v_{s}) \in \bR^{p\times d}$ and $\hat{V} = (\hat{v}_{r}, \hat{v}_{r+1}, \ldots , \hat{v}_{s}) \in \bR^{p\times d}$ have orthonormal columns satisfying $\Sigma v_{j} = \lambda_{j}v_{j}$ and $\hat{\Sigma}\hat{v}_{j} = \hat{\lambda}_{j}\hat{v}_{j}$ for $j = r, r + 1, \ldots, s$. Then
\[\|\sin \Theta(V, \hat{V})\|_{\mathrm{F}}\le \frac{2\sqrt{d}\|\hat{\Sigma} - \Sigma\|_{\mathrm{op}}}{\min\{\lambda_{r - 1} - \lambda_{r}, \lambda_{s} - \lambda_{s + 1}\}},\]
where $\|\cdot\|_{\mathrm{F}}$ denotes the Frobenius norm and $\Theta(V, \hat{V})$ denotes the principal angle matrix between $V$ and $\hat{V}$. 
\end{lem}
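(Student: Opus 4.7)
The plan is to deduce the Yu--Wang--Samworth statement from the classical Davis--Kahan $\sin\Theta$ theorem, whose Frobenius-norm version says that if $V$ and $\hat V$ are orthonormal bases of the eigenspaces of $\Sigma$ and $\hat\Sigma$ corresponding to the index set $\{r,\ldots,s\}$, and if every eigenvalue of $\hat\Sigma$ \emph{outside} that range lies at distance at least $\tilde\delta > 0$ from the interval $[\lambda_s,\lambda_r]$, then
$$\|\sin\Theta(V,\hat V)\|_{\mathrm{F}} \;\le\; \frac{\|(\hat\Sigma - \Sigma)V\|_{\mathrm{F}}}{\tilde\delta} \;\le\; \frac{\sqrt{d}\,\|\hat\Sigma - \Sigma\|_{\mathrm{op}}}{\tilde\delta},$$
where the last step uses that $V$ has $d$ orthonormal columns. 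The main obstacle is that this classical $\tilde\delta$ is a \emph{mixed} gap, between unperturbed eigenvalues in $\{r,\ldots,s\}$ and perturbed eigenvalues outside, whereas the lemma is stated purely in terms of the \emph{unperturbed} gap $\delta := \min\{\lambda_{r-1}-\lambda_r,\lambda_s-\lambda_{s+1}\}$.

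To bridge the two gap conventions, I would split into two cases based on the size of $\|\hat\Sigma-\Sigma\|_{\mathrm{op}}$ relative to $\delta$. If $\|\hat\Sigma-\Sigma\|_{\mathrm{op}}\ge \delta/2$, then $2\sqrt{d}\,\|\hat\Sigma-\Sigma\|_{\mathrm{op}}/\delta \ge \sqrt{d}$, which already dominates the trivial bound $\|\sin\Theta(V,\hat V)\|_{\mathrm{F}}\le \sqrt{d}$ (each of the $d$ singular values of $\sin\Theta$ lies in $[0,1]$), so there is nothing to prove. If $\|\hat\Sigma-\Sigma\|_{\mathrm{op}}<\delta/2$, then Weyl's inequality gives $|\tilde\lambda_j-\lambda_j|<\delta/2$ for every $j$; in particular, for every $j\notin\{r,\ldots,s\}$ the perturbed eigenvalue $\tilde\lambda_j$ is separated from the interval $[\lambda_s,\lambda_r]$ by more than $\delta/2$. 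We may therefore apply the classical $\sin\Theta$ theorem with $\tilde\delta > \delta/2$, which yields exactly the claimed $2\sqrt{d}\,\|\hat\Sigma-\Sigma\|_{\mathrm{op}}/\delta$ bound.

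The only remaining ingredient is the classical $\sin\Theta$ theorem itself, whose standard proof characterizes $\|\sin\Theta(V,\hat V)\|_{\mathrm{F}}$ via the orthogonal projector identity $\|(I-\hat V\hat V^\top)V\|_{\mathrm{F}}$, expands $(\hat\Sigma-\Sigma)V$ in the eigenbasis of $\hat\Sigma$, and uses $\hat\Sigma \hat V = \hat V\widetilde{\Lambda}$ to extract the gap factor $\tilde\delta$ from the off-$\{r,\ldots,s\}$ block. The $\sqrt{d}$ factor then comes purely from $\|(\hat\Sigma-\Sigma)V\|_{\mathrm{F}}\le \sqrt{d}\,\|\hat\Sigma-\Sigma\|_{\mathrm{op}}$ since $V$ has $d$ orthonormal columns. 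The subtlety, and where the factor of $2$ in the final bound is paid for, lies entirely in the case split above; the rest of the argument is routine linear algebra.
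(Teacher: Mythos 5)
Your argument is correct, but note that the paper itself gives no proof of this lemma: it is imported verbatim as Theorem 2 of the cited Yu--Wang--Samworth reference. Your reconstruction --- dispose of the case $\|\hat{\Sigma}-\Sigma\|_{\mathrm{op}}\ge\delta/2$ via the trivial bound $\|\sin\Theta(V,\hat{V})\|_{\mathrm{F}}\le\sqrt{d}$, otherwise use Weyl's inequality to convert the population eigengap into the mixed gap $\tilde\delta\ge\delta/2$ required by the classical Davis--Kahan $\sin\Theta$ theorem, and bound $\|(\hat{\Sigma}-\Sigma)V\|_{\mathrm{F}}\le\sqrt{d}\,\|\hat{\Sigma}-\Sigma\|_{\mathrm{op}}$ --- is essentially the proof given in that reference, so there is nothing to add.
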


\begin{proof}[\textbf{Proof of Lemma \ref{lem:deterministicControl}}]

Note that $u_{t}$ is unique up to the sign, because $\lambda_{t}$ is unique and both sides of the bound are invariant to signs.    This allows us to assume $\langle u_{t}, \td{u}_{t}\rangle \ge 0$ without loss of generality. Let $r = s = t$ in Lemma \ref{lem:davis_kahan}.  Since $d = 1$ we have
\begin{equation}
  \label{eq:sintheta}
  \sin \Theta(\td{u}_{t}, u_{t}) = \|\sin \Theta(\td{u}_{t}, u_{t})\|_{\mathrm{F}}\le \frac{2\|H\|}{\delta_{t}}.
\end{equation}
This implies 
\begin{equation}\label{eq:basis1}
  1 - \langle \td{u}_{t}, u_{t}\rangle\le 1 - \langle \td{u}_{t}, u_{t}\rangle^{2} = \sin^{2} \Theta(\td{u}_{t}, u_{t}) \le \frac{4\|H\|^{2}}{\delta_{t}^{2}}.
\end{equation}
Let $u_{1}, \ldots, u_{t - 1}, u_{t + 1}, \ldots, u_{n}$ be any set of orthonormal vectors that span the orthogonal subspace of $u_{t}$. Then 
\begin{align}
  \sum_{s\not = t}\langle \td{u}_{t}, u_{s}\rangle^{2} = 1 - \langle \td{u}_{t}, u_{t}\rangle^{2} \le 1 - \lb 1 - \frac{4\|H\|^{2}}{\delta_{t}^{2}}\rb^{2}\le \frac{8\|H\|^{2}}{\delta_{t}^{2}} \, ,
  \end{align}
  and therefore
\begin{align}
  \max_{s\not = t}|\langle \td{u}_{t}, u_{s}\rangle|\le \frac{2\sqrt{2}\|H\|}{\delta_{t}}.
    \label{eq:basis2}
\end{align}
Since $2\|H\| < \lambda_{t}$, by Weyl's inequality,
\begin{equation}
  \label{eq:tdlambdat_lambdat}
  |\td{\lambda}_{t}| \ge |\lambda_{t}| - \|H\| > \|H\|, \quad \mbox{and}\quad |\td{\lambda}_{t}| \ge |\lambda_{t}| - \|H\| > |\lambda_{t}| / 2.
\end{equation}
By Lemma \ref{lem:neumann}, for any $j$,
\begin{equation}\label{eq:psi}
  \td{u}_{t, j} = \sum_{s=1}^{n}\psi_{st, j}, \quad \mbox{where }\psi_{st} = \frac{\lambda_{s}}{\td{\lambda}_{t}} \langle  \td{u}_{t}, u_{s}\rangle \sum_{p\ge 0}\lb\frac{H}{\td{\lambda}_{t}}\rb^{p}u_{s}.
\end{equation}
Then 
\begin{align}
  |(\td{u}_{t} - u_{t})_{j}| & = \bigg|u_{t, j} - \sum_{s=1}^{n}\psi_{st, j}\bigg| \le \underbrace{|u_{t, j} - \psi_{tt, j}|}_{I_{1}} + \underbrace{\bigg|\sum_{s\not = t}\psi_{st, j}\bigg|}_{I_{2}}.\label{eq:thm12_1}
\end{align}
By definition, 
\begin{align}
  I_{1} &= \left|\lb u_{t} - \frac{\lambda_{t}}{\td{\lambda}_{t}}\langle \td{u}_{t}, u_{t}\rangle \sum_{p\ge 0}\lb\frac{H}{\td{\lambda}_{t}}\rb^{p}u_{t}\rb_{j}\right|\notag\\
& \le \left|u_{t, j} - \frac{\lambda_{t}}{\td{\lambda}_{t}}\langle \td{u}_{t}, u_{t}\rangle u_{t, j}\right| + \left|\frac{\lambda_{t}}{\td{\lambda}_{t}}\langle \td{u}_{t}, u_{t}\rangle \sum_{p\ge 1}\left[\lb\frac{H}{\td{\lambda}_{t}}\rb^{p}u_{t}\right]_{j}\right|\notag\\
& \le \left|1 - \frac{\lambda_{t}}{\td{\lambda}_{t}} \langle \td{u}_{t}, u_{t}\rangle\right||u_{t, j}| + 2|\langle \td{u}_{t}, u_{t}\rangle| \cdot \sum_{p\ge 1}\frac{|[H^{p}u_{s}]_{j}|}{|\td{\lambda}_{t}|^{p}}\label{ineq(i)}\\
& = \left|1 - \frac{\lambda_{t}}{\td{\lambda}_{t}} \langle \td{u}_{t}, u_{t}\rangle\right||u_{t, j}| + 2|\langle \td{u}_{t}, u_{t}\rangle| \cdot \sum_{p\ge 1}\lb\frac{|\lambda_{t}|}{|\td{\lambda}_{t}|}\rb^{p}\frac{|[H^{p}u_{s}]_{j}|}{|\lambda_{t}|^{p}}\notag\\
& \le \left|1 - \frac{\lambda_{t}}{\td{\lambda}_{t}} \langle \td{u}_{t}, u_{t}\rangle\right||u_{t, j}| + 2|\langle \td{u}_{t}, u_{t}\rangle| \cdot \xi_{j}(u_{t}; H, \lambda_{t}).\label{eq:thm12_2}
\end{align}
where \eqref{ineq(i)} uses the first part of \eqref{eq:tdlambdat_lambdat} and \eqref{eq:thm12_2}  uses the second part of \eqref{eq:tdlambdat_lambdat} and the definition of $\xi_{j}$. The first term in \eqref{eq:thm12_2} can be bounded as follows:  
\begin{align}
  \left|1 - \frac{\lambda_{t}}{\td{\lambda}_{t}} \langle \td{u}_{t}, u_{t}\rangle\right| & \le  |1 - \langle \td{u}_{t}, u_{t}\rangle| +  \left|1 - \frac{\lambda_{t}}{\td{\lambda}_{t}}\right| |\langle \td{u}_{t}, u_{t}\rangle|\notag\\
& \le \frac{4\|H\|^{2}}{\delta_{t}^{2}} + \frac{|\td{\lambda}_{t} - \lambda_{t}|}{|\td{\lambda}_{t}|}|\langle \td{u}_{t}, u_{t}\rangle|\label{ineq2(i)}\\
& \le \frac{4\|H\|^{2}}{\delta_{t}^{2}} + \frac{|\td{\lambda}_{t} - \lambda_{t}|}{|\td{\lambda}_{t}|}\label{ineq2(ii)}\\
&\le \frac{4\|H\|^{2}}{\delta_{t}^{2}} + \frac{\|H\|}{|\td{\lambda}_{t}|}\label{ineq2(iii)}\\
& \le \frac{4\|H\|^{2}}{\delta_{t}^{2}} + \frac{2\|H\|}{|\lambda_{t}|}, \label{ineq2(iv)}
\end{align}
where \eqref{ineq2(i)} uses \eqref{eq:basis1}, \eqref{ineq2(ii)} uses the Cauchy-Schwartz inequality, $|\langle \td{u}_{t}, u_{t}\rangle|\le \|\td{u}_{t}\|\|u_{t}\| = 1$, \eqref{ineq2(iii)} uses Weyl's inequality, $|\lambda_{t} - \td{\lambda}_{t}|\le \|H\|$ and \eqref{ineq2(iv)} uses the second part of \eqref{eq:tdlambdat_lambdat}. Thus, \eqref{eq:thm12_2} implies
\begin{align}
  I_{1} &\le \lb \frac{4\|H\|^{2}}{\delta_{t}^{2}} + \frac{2\|H\|}{|\lambda_{t}|}\rb|u_{t, j}| + 2|\langle \td{u}_{t}, u_{t}\rangle|\cdot \xi_{j}(u_{t}; H, \lambda_{t})\notag\\
&\le \lb\frac{4\|H\|^{2}}{\delta_{t}^{2}} + \frac{2\|H\|}{|\lambda_{t}|}\rb|u_{t, j}| + 2\xi_{j}(u_{t}; H, \lambda_{t}).\label{eq:I1}
\end{align}

Next we derive a bound for $I_{2}$. By definition \eqref{eq:psi} of $\psi_{st}$ and \eqref{eq:basis2}, it holds for any $j$ that
\begin{align}
  |\psi_{st, j}| & = \bigg|\frac{\lambda_{s}}{\td{\lambda}_{t}}\bigg| |\langle \td{u}_{t}, u_{s}\rangle|\cdot \bigg|\left[\sum_{p\ge 0}\lb\frac{H}{\td{\lambda}_{t}}\rb^{p}u_{s}\right]_{j}\bigg|\notag \le \frac{2\sqrt{2}\|H\|}{\delta_{t}}\bigg|\frac{\lambda_{s}}{\td{\lambda}_{t}}\bigg|\cdot \bigg|\left[\sum_{p\ge 0}\lb\frac{H}{\td{\lambda}_{t}}\rb^{p}u_{s}\right]_{j}\bigg|\notag\\
&\le \frac{4\sqrt{2}\|H\|}{\delta_{t}}\bigg|\frac{\lambda_{s}}{\lambda_{t}}\bigg|\cdot \bigg|\left[\sum_{p\ge 0}\lb\frac{H}{\td{\lambda}_{t}}\rb^{p}u_{s}\right]_{j}\bigg|\label{ineq3:(i)}\\
& \le \frac{4\sqrt{2}\|H\|}{\delta_{t}}\bigg|\frac{\lambda_{s}}{\lambda_{t}}\bigg|\cdot \lb |u_{s, j}| + \sum_{p\ge 1}\frac{|[H^{p}u_{s}]_{j}|}{\td{\lambda}_{t}^{p}}\rb\notag\\
& \le\frac{4\sqrt{2}\|H\|}{\delta_{t}}\bigg|\frac{\lambda_{s}}{\lambda_{t}}\bigg|\cdot \lb |u_{s, j}| + \xi_{j}(u_{s}; H, \lambda_{t})\rb, \label{ineq3:(ii)}
\end{align}
where \eqref{ineq3:(i)} uses the second part of \eqref{eq:tdlambdat_lambdat} and \eqref{ineq3:(ii)} uses the same argument as in \eqref{eq:thm12_2}.   Thus,
\begin{equation}\label{eq:I2}
  I_{2}\le \frac{4\sqrt{2}\|H\|}{\delta_{t}}\sum_{s\not = t}\bigg|\frac{\lambda_{s}}{\lambda_{t}}\bigg|\cdot \lb |u_{s, j}| + \xi_{j}(u_{s}; H, \lambda_{t})\rb.
\end{equation}
The proof is then completed by combining \eqref{eq:I1} and \eqref{eq:I2}.
\end{proof}

\section{Additional results for the anemia gene network}
\label{sec:RSC-gene}

\begin{figure}[H]
\centering
\begin{subfigure}[t]{0.5\textwidth}
\centering
\includegraphics[width=\textwidth]{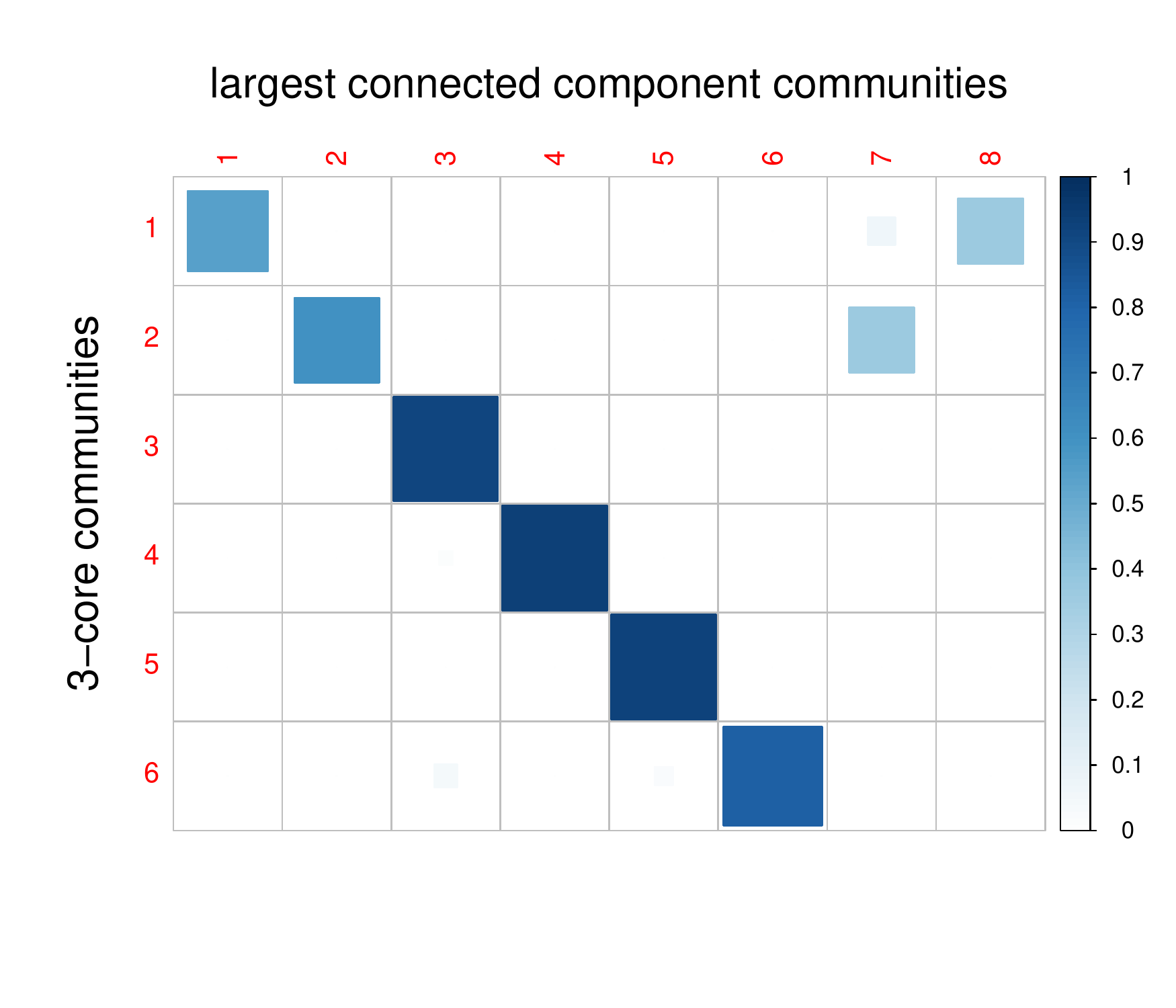}
\caption{Confusion matrix by Jaccard index.}
\label{fig:Confusion}
\end{subfigure}%
\hfill
\begin{subfigure}[t]{0.5\textwidth}
\centering
\includegraphics[width=\textwidth]{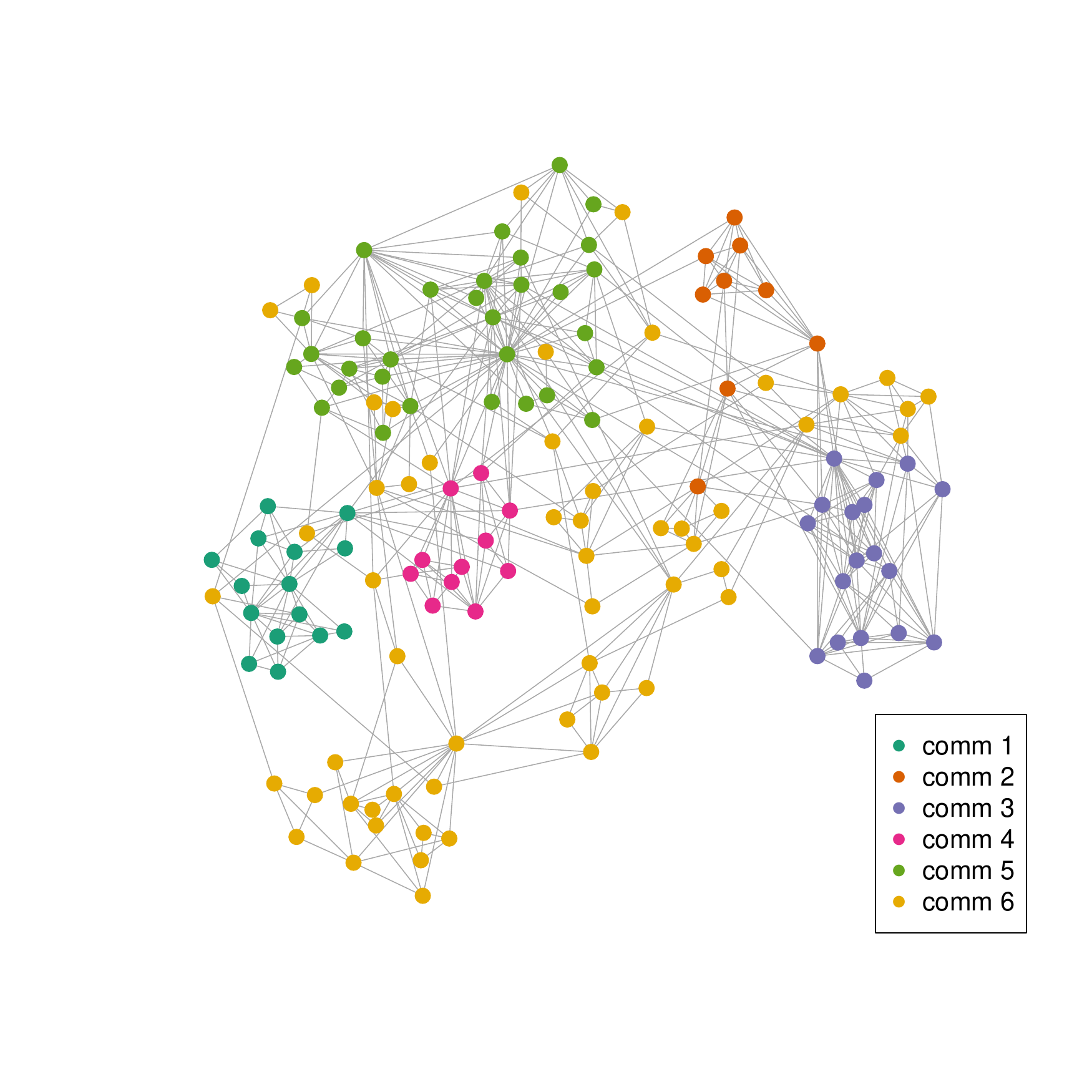}
\caption{Estimated communities by  the RSC.}
\label{fig:RSC-gene}
\end{subfigure}%
\caption{Additional results about the gene network analysis. (a): The confusion matrix represented by the Jaccard index between the communities of the 140 genes from the 3-core of the largest connected component ($n=140$ nodes) and the largest connected component ($n=285$ nodes). The 3-core community labels are show along the y-axis while the largest connected component community labels are shown along the axis. In both situations, the number communities is determined by the HCD algorithm. It can be seen that that clustering of the 140 genes are very consistent, as indicated by the high Jaccard index for matching communities. (b): Communities found by regularized spectral clustering on the anemia gene network.  Compared to HCD, these communities are less balanced and less interpretable.}
\end{figure}
%
%
%
%

\section{Hierarchical communities in a statistics citation network}\label{sec:stat-net}

This dataset \citep{ji2016coauthorship} contains information on statistics papers from four journals considered top (the Annals of Statistics, Biometrika, Journal of the American Statistical Association: Theory
and Methods, and Journal of the Royal Statistical Society Series B) for the period from 2003 to 2012.    For each paper, the dataset contains the title, authors, year of publication, journal information, the DOI, and citations.      We constructed a citation network  by connecting two authors if there is at least one citation (in either direction) between them in the dataset.  Following \cite{wang2016discussion}, we focused on the 3-core of the largest connected component of the network, ignoring the periphery which frequently does not match the community structure of the core. The resulting network, shown in Figure~\ref{fig:CitationNet}, has 707 nodes (authors) and the average degree is 9.29. 

The two HCD algorithms (HCD-Spec and HCD-SS) give the same result on
this network. We used the edge cross-validation (ECV) method \citep{li2016network} as the stopping rule instead of the
non-backtracking method, because ECV does not rely on the block model
assumption.   In this particular problem, ECV chooses a deeper and
more informative tree with 15
communities, shown in Table~\ref{tab:Keywords}, compared to the non-backtracking estimate of 11 communities.   
For a closer look at the communities, see the listing of 10 authors with the highest degrees in  each community in Table~\ref{tab:top10}.    Community labels in Table~\ref{tab:Keywords} were constructed
semi-manually from research keywords associated with people in
each community.   The keywords were obtained by collecting research
interests of 20 statisticians with the highest degrees in each
community,  from personal webpages, department research pages, Google
Scholar and Wikipedia (sources listed in order of inspection), with
stop word filtering and stemming applied.  The three most frequent keywords from research interests in each community are shown in Table~\ref{tab:Keywords}.    Note  that the citations are from publications between 2003 and 2012,  while the research interests were collected in 2018, so there is potentially a time gap. However, it is evident to anyone familiar with the statistics literature of that period that the communities detected largely correspond to real research communities, looking both at the people in each community and the associated keywords.

\begin{figure}[H]
\begin{center}
\vspace{-1.5cm}
\includegraphics[width=0.7\textwidth]{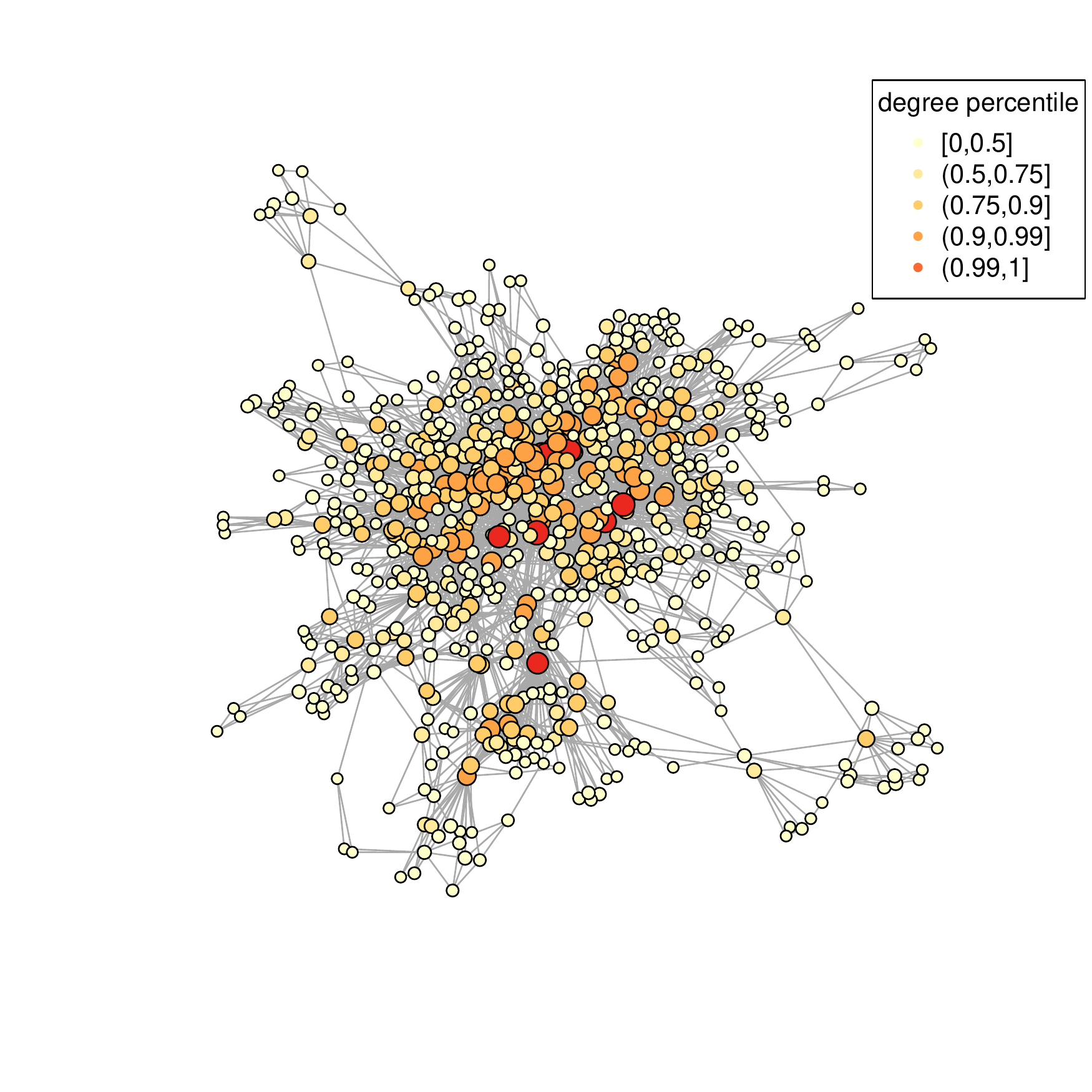}
\vspace{-2.5cm}
\end{center}
\caption{The 3-core of the statistics citation network.   Node
  size is proportional to its degree and the color indicates degree percentile.} 
\label{fig:CitationNet}
\end{figure}

The hierarchical tree of research communities contains a lot of
additional information, shown in
Figure~\ref{fig:StatisticianDendrogram}, and clearly reflects many well-known patterns. For example, Bayesian
statistics and design of experiments  split off very high up in the
tree, whereas various high-dimensional communities cluster together,
multiple testing is closely related to neuroimaging (which served as
one of its main motivations), functional analysis and
non/semi-parametric methods cluster together, and so on.   These
relationships between communities are just as informative as the
communities themselves, if not more, and could not have been obtained
with any ``flat'' $K$-way  community detection method.   

\begin{table}[H]
\centering
\caption{Statistics research communities detected by the HCD algorithm, identified by research area constructed from research interests.  The community size is shown in brackets and the three most frequently used research keywords were obtained from 20 highest degree nodes in each community. }
\label{tab:Keywords}
{\footnotesize
\begin{tabular}{l|lll}

Community research area [size]&  \multicolumn{3}{c}{Top three research interests keywords (from webpages)}\\ 
  \hline
  \hline
Design of experiments [16] & design & experiment & theory \\ 
\hline
Bayesian statistics [98] & Bayesian & model & inference \\ 
\hline
 Biostatistics and bio applications [35] &model & inference & sampling  \\ 
\hline
 Causal inference and shape (mixed)  [15] & inference & estimation & causal \\ 
\hline
 Nonparametrics and wavelets [26] & model & nonparametric & estimation \\ 
\hline
 Neuroimaging [18] & imaging & Bayesian & model \\ 
\hline
 Multiple testing/inference [92] & inference & multiple & test \\ 
\hline
 Clinical trials and survival analysis [45] & survival & clinical & trial \\ 
\hline
 Non/semi-parametric methods [38] & model & longitudinal & semi-parametric \\ 
\hline
 Functional data analysis [96] & functional &  model & measurement \\ 
\hline
 Dimensionality reduction [35] & dimension & reduction & regression \\ 
\hline
 Machine learning  [21] & machine learning & biological & mining \\ 
 \hline
 (High-dim.) time series and finance [36] & financial & econometrics & time \\ 
 \hline
 High-dimensional theory [29] & high-dimensional & theory & model \\ 
\hline
 High-dimensional methodology [107] & high-dimensional & machine learning & model \\ 
  \hline
\end{tabular}
}
\end{table}

This network was also studied by \cite{ji2016coauthorship}, though they did not extract the core. 
Our results are not easy to compare since there is no ground truth available and the number of communities they found is different.   Briefly, they found three communities initially, and then upon finding that one of them is very mixed, further broke it up into three, which can be seen as manually hierarchical clustering.   They interpreted the resulting five communities as ``Large-Scale Multiple Testing",``Variable Selection", ``Nonparametric spatial/Bayesian statistics", ``Parametric spatial statistics", and ``Semiparametric/Nonparametric statistics".    Though some of the labels coincide with our communities  in Table~\ref{tab:Keywords},  it seems more mixed, and  the hierarchical information which allows you to see which communities are close and which are far apart is not available from a flat partition.

\begin{figure}[H]
\begin{center}
\vspace{-1.5cm}
\includegraphics[width=0.8\textwidth]{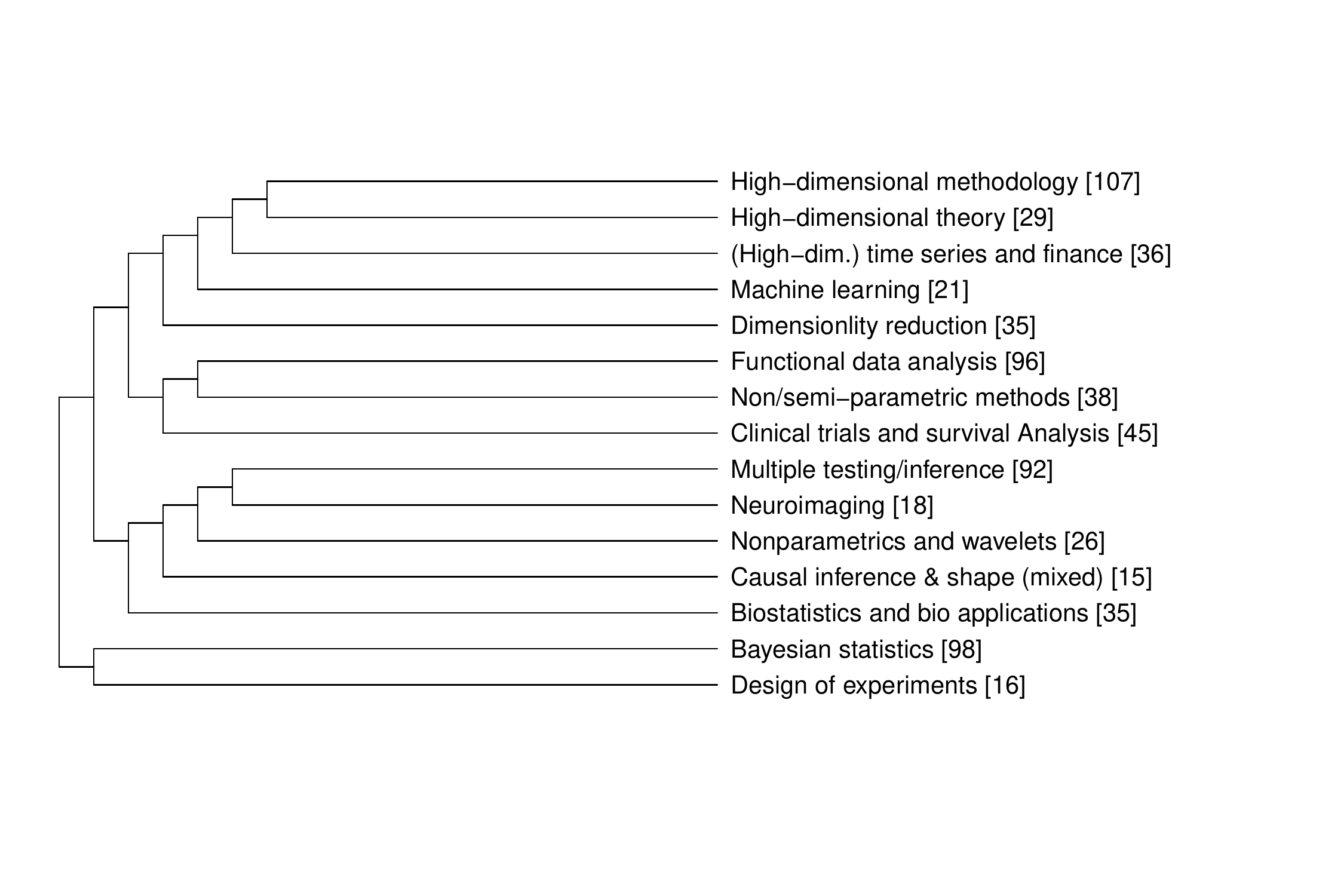}
\end{center}
\vspace{-1.5cm}
\caption{The dendrogram of 15 communities obtained by HCD from the 3-core of the statistics citation network.  Research areas are manually labeled based on keywords from webpages, and community sizes are shown in brackets.}
\label{fig:StatisticianDendrogram}
\end{figure}

\begin{table}[H]
{\tiny
\centering
\caption{The ten statisticians with the highest degrees for each of the 15 communities shown in Table~\ref{tab:Keywords}, ordered by degree within each community.}
\label{tab:top10}
\begin{tabular}{p{1.5in}|p{4.5in}}
  \hline
 Community [size] & Ten  highest degree nodes, ordered by degree \\ 
  \hline
Design of experiments [16] & Boxin Tang, Rahul Mukerjee, Derek Bingham, Hongquan Xu, Peter Z G Qian, Randy R Sitter, C Devon Lin, David M Steinberg, Dennis K J Lin, Neil A Butler \\ 
\hline
 Bayesian statistics [98] & David Dunson, Alan E Gelfand, Abel Rodriguez, Gareth Roberts, Hemant Ishwaran, Michael I Jordan, James O Berger, Marc G Genton, Peter Muller, Gary L Rosner \\ 
\hline
Biostatistics and bio applications [35] & James R Robins, Haibo Zhou, Ian L Dryden, Dylan S Small, Huiling Le, Paul R Rosenbaum, Tapabrata Maiti, Zhiqiang Tan, Andrew T A Wood, Hua Yun Chen \\ 
\hline
 Causal inference \& shape (mixed) [15] & Marloes H Maathuis, Jon A Wellner, Sungkyu Jung, Nilanjan Chatterjee, Piet Groeneboom, Alfred Kume, Fadoua Balabdaoui, Richard Samworth, Lutz Dumbgen, Mark van der Laan \\ 
\hline
 Nonparametrics and wavelets [26] & Iain M Johnstone, Bernard W Silverman, Subhashis Ghosal, Felix Abramovich, Aad van der Vaart, Axel Munk, Marc Raimondo, Theofanis Sapatinas, Anindya Roy, Dominique Picard \\ 
\hline
 Neuroimaging [18] & Jonathan E Taylor, Joseph G Ibrahim, Hongtu Zhu, Armin Schwartzman, Stephan F Huckemann, Heping Zhang, Bradley S Peterson, Stuart R Lipsitz, Ming-Hui Chen, Rabi Bhattacharya \\ 
\hline
 Multiple testing \& inference [92] & T Tony Cai, Larry Wasserman, Jiashun Jin, Christopher Genovese, Bradley Efron, Yoav Benjamini, David L Donoho, John D Storey, Sanat K Sarkar, Joseph P Romano \\ 
\hline
 Clinical trials and survival analysis [45] & Dan Yu Lin, Donglin Zeng, Xuming He, Lu Tian, L J Wei, Jing Qin, Michael R Kosorok, Guosheng Yin, Jason P Fine, Tianxi Cai \\
\hline
 Non- \& semi-parametric methods [38] & Raymond J Carroll, Naisyin Wang, Xihong Lin, Enno Mammen, Wenyang Zhang, Zongwu Cai, Annie Qu, Jianxin Pan, Xiaoping Xu, Arnab Maity \\ 
\hline
 Functional data analysis [96] & Peter Hall, Hans-Georg Muller, Fang Yao, Jane-Ling Wang, Lixing Zhu, Hua Liang, Gerda Claeskens, Jeffrey S Morris, Yanyuan Ma, Yehua Li \\ 
\hline
 Dimensionality reduction [35] & Hansheng Wang, Bing Li, R Dennis Cook, Yingcun Xia, Liping Zhu, Chih-Ling Tsai, Lexin Li, Liqiang Ni, Francesca Chiaromonte, Peng Zeng \\ 
\hline
 Machine learning [21] & Hao Helen Zhang, Xiaotong Shen, Yufeng Liu, Yichao Wu, J S Marron, Jeongyoun Ahn, Michael J Todd, Junhui Wang, Wing Hung Wong, Yongdai Kim \\ 
\hline
 (High-dim.) time series and finance [36] & Jianqing Fan, Qiwei Yao, Song Xi Chen, Yacine Ait-Sahalia, Yazhen Wang, Cheng Yong Tang, Bo Li, Jian Zou, Liang Peng, Sam Efromovich \\ 
\hline
 High-dimensional theory [29] & Peter J Bickel, Cun-Hui Zhang, Alexandre B Tsybakov, Jian Huang, Emmanuel J Candes, Martin J Wainwright, Terence Tao, Sara van de Geer, Alexandre Belloni, Lukas Meier \\ 
\hline
 High-dimensional methodology [107] & Hui Zou, Runze Li, Peter Buhlmann, Nicolai Meinshausen, Yi Lin, Elizaveta Levina, Ming Yuan, Trevor J Hastie, Jianhua Z Huang, Ji Zhu \\ 
   \hline
   \end{tabular}
}
\end{table}

\end{appendix}


\end{document}